\tikzset{%
	pics/ellip/.style args={#1}{code={%
			\fill[rotate=60,#1] (0,0) ellipse (0.9pt and 1.1pt);
	}},
	pics/ellip/.default=CarnationPink, 
    > = stealth, 
	auto,
	node distance = 3cm, 
	every edge/.append style = {thick}, 
    shorten >= -1pt,
    shorten <= -1pt,
}
\tikzstyle{axis}=[thin]
\tikzstyle{state}=
\title{Information Design for Congestion Games with Unknown Demand}
\author {
    Svenja M. Griesbach\thanks{Institute of Mathematics, Technische Universität Berlin, Germany, \texttt{$\{$griesbach,klimm$\}$@math.tu-berlin.de}} \and
    Martin Hoefer\thanks{Institute for Computer Science, Goethe University Frankfurt, Germany, \texttt{$\{$mhoefer,koglin$\}$@em.uni-frankfurt.de}} \and
    Max Klimm\footnotemark[1] \and
    Tim Koglin\footnotemark[2]
}
\date{}
\newtheorem{theorem}{Theorem}
\newtheorem{lemma}{Lemma}
\newtheorem{proposition}{Proposition}
\newtheorem{corollary}[lemma]{Corollary}
\theoremstyle{remark}
\newtheorem{example}{Example}
\crefname{figure}{Fig.}{Figs.}
\crefname{section}{Sec.}{Secs.}
\crefname{corollary}{Corollary}{Corollaries}
\newcommand{\R}{\mathbb{R}}
\renewcommand{\P}{\mathbb{P}}
\newcommand{\calP}{\mathcal{P}}
\newcommand{\calX}{\mathcal{X}}
\newcommand{\thetasum}{\theta \in \Theta}
\newcommand{\cost}{\ensuremath{C}}
\newcommand{\prior}{\mu^*}
\newcommand{\support}{A}
\newcommand{\NN}{\mathbb{N}}
\newcommand{\classP}{\textsf{P}}
\newcommand{\classNP}{\textsf{NP}}
\renewcommand{\phi}{\varphi}
\renewcommand{\epsilon}{\varepsilon}
\renewcommand{\setminus}{\, \backslash \,}
\newcommand{\D}{\displaystyle}
\begin{document}

\maketitle

\begin{abstract}
We study a novel approach to information design in the standard traffic model of network congestion games. It captures the natural condition that the \emph{demand} is unknown to the users of the network. A principal (e.g., a mobility service) commits to a signaling strategy, observes the realized demand and sends a (public) signal to agents (i.e., users of the network). Based on the induced belief about the demand, the users then form an equilibrium. We consider the algorithmic goal of the principal: Compute a signaling scheme that minimizes the expected total cost of the induced equilibrium. We concentrate on single-commodity networks and affine cost functions, for which we obtain the following results.
First, we devise a fully polynomial-time approximation scheme (FPTAS) for the case that the demand can only take two values. It relies on several structural properties of the cost of the induced equilibrium as a function of the updated belief about the distribution of demands. We show that this function is piecewise linear for any number of demands, and monotonic for two demands. 
Second, we give a complete characterization of the graph structures for which it is optimal to fully reveal the information about the realized demand. This signaling scheme turns out to be optimal for all cost functions and probability distributions over demands if and only if the graph is series-parallel. Third, we propose an algorithm that computes the optimal signaling scheme for any number of demands whose time complexity is polynomial in the number of supports that occur in a Wardrop equilibrium for some demand. Finally, we conduct a computational study that tests this algorithm on real-world instances. 
\end{abstract}

\section{Introduction}
\label{sec:introduction}
Traffic and congestion are key factors contributing to climate change and air pollution. On the other hand, personal and commercial traffic are fundamental for economic development and the modern way of life. This makes sound traffic planning and improvement an indispensable prerequisite for urban areas around the globe. A popular and successful model for traffic planning are non-atomic congestion games. The road network is represented by a graph $G=(V,E)$ where each edge~$e$ has a cost function $c_e$ that models the time needed to traverse the edge and depends on the total flow on that edge. In the single-commodity setting, a continuum of players with travel demand $d>0$ strives to route from a designated source vertex $s \in V$ (e.g., a residential living area) to a designated destination vertex $t\in V$ (e.g., a city center). Each infinitesimally small player aims to minimize their private cost by choosing a least-cost path from $s$ to $t$. A so-called \emph{Wardrop equilibrium} is reached when no player has the incentive to deviate from their chosen path since all other paths have either the same or even higher cost. 
It is a well-known fact that a Wardrop equilibrium does not necessarily minimize the overall travel time, and there is a substantial literature that quantifies the loss in efficiency due to selfish behavior \cite{Dubey86,RoughgardenT02,Roughgarden03,RoughgardenT04,CorreaSM04,CorreaSM08,DumraufG06}.

In order to achieve better equilibria, interventions through network design~\cite{Marcotte86,BhaskarLS14,GairingHK17,Roughgarden06} or mechanism design techniques such as tolls \cite{HarksKKM15,FleischerJM04,HoeferOS08,LarssonP99,HearnR98,BergendorffHR97} 
have been studied extensively. These approaches, however, usually come with a high cost, e.g., for building or remodeling road segments or for setting up a toll collection system for highways. 
This paper, therefore, focuses on improving the emerging equilibrium by \emph{information design}.

A significant source of uncertainty in traffic networks concerns the \emph{demand}, i.e., information about the total amount of traffic. Total traffic is highly fluctuating, even during a single day. From a game-theoretic perspective, this implies that the total volume of players in the routing game is not fixed and not common knowledge. Such \emph{games with population uncertainty} were first studied in a systematic way by \citet{Myerson1998}, who considered games with atomic players and multiple player types. \citet{CominettiSSM22} draw a connection between the Poisson games of \citeauthor{Myerson1998} and atomic congestion games where players participate independently at random. Here we adapt the approach to non-atomic games. The total volume of players in the game is drawn from a probability distribution known to all players.
Each player observes whether they participate in the game or not, e.g., whether to drive to work in the morning (i.e., has type \emph{active}) or not (type \emph{inactive}, e.g., due to illness or car malfunction). When a player is inactive and does not participate, they receive a private cost of $0$. Otherwise, an active player then fixes a strategy, i.e., an $s$-$t$-path in the network, and receives as private cost the cost of the chosen path. This leads to a Bayesian game in the sense of \citet{Harsanyi1967}. 

Due to technical reasons, in our analysis we treat a slightly less intuitive model variant. In this version, players decide on a route \emph{before} they know whether they are active or not. An inactive player simply discards the route choice made earlier. The two model variants lead to equivalent outcomes in terms of equilibria and cost (see \Cref{app:alternative}). The latter variant allows to avoid a uniform scaling factor in all computations. Hence, despite being less intuitive, we use it throughout the paper for simplicity. For illustration, we start by discussing a simple example.

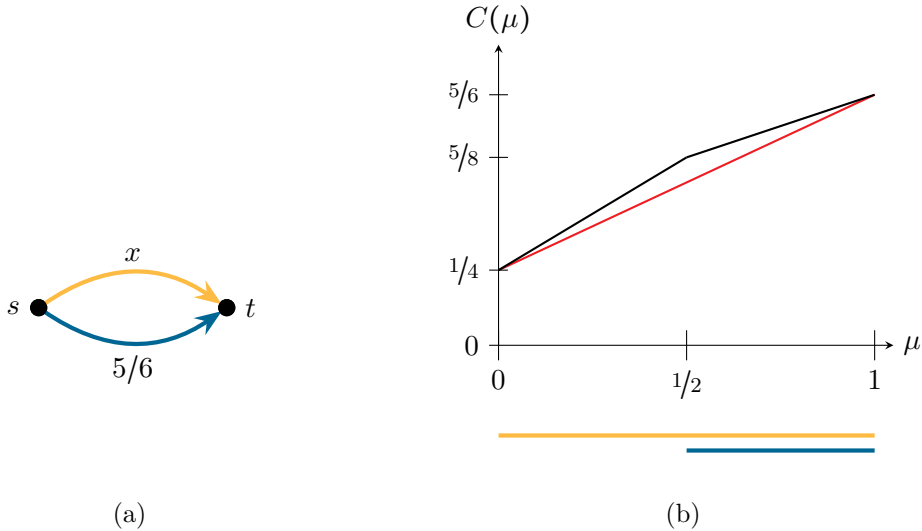
\begin{figure}[t]
\begin{center}
\begin{subfigure}[b]{0.4\textwidth}
\begin{center}
\begin{tikzpicture}[xscale=10/4,yscale=3]
\tikzstyle{node}=[circle, fill=gray, inner sep=0pt, minimum size=3pt];
\begin{scope}
\node[state,label=right:{$t$}] (t1) at (0.5,0.4) {};
\node[state,label=left:{$s$}] (s1) at (-0.5,0.4) {};

\draw[-{Stealth},ultra thick,Dandelion] (s1) to[bend left] node[above] {\color{black}$x$} (t1);
\draw[-{Stealth},ultra thick,MidnightBlue] (s1) to[bend right] node[below] {\color{black}$5/6$} (t1);

\node[state] (t1) at (0.5,0.4) {};
\node[state] (s1) at (-0.5,0.4) {};
\node[] (heightcheat) at (-0.5,-0.2) {};
\end{scope}
\end{tikzpicture}
\end{center}
\caption{}
\end{subfigure}
\hspace{0.5cm}
\begin{subfigure}[b]{0.4\textwidth}
\begin{center}
\begin{tikzpicture}[xscale=5,yscale = 4, shorten >= 0pt,
    shorten <= 0pt,]
\begin{scope}[xshift=15cm,yshift=-1.5cm]

\draw[Red,thick] (0,1/4) -- (1,5/6);
\draw[thick] (0,1/4) -- (1/2,5/8);
\draw[thick] (1/2,5/8) -- (1,5/6);

\draw[axis,->] (0,0) to (0,1) node[above] {$\cost(\mu)$}; 
\draw[axis,->] (0,0) to (1.05,0) node[right] {$\mu$};
				
\draw[axis] (1,0.05) -- (1,-0.05) node[below] {$1$};
\draw[axis] (0,0.05) -- (0,-0.05) node[below] {$0$};
\draw[axis] (1/2,0.05) -- (1/2,-0.05) node[below] {$\nicefrac{1}{2}$}; 

\draw[axis] (0,0) -- (-0.025,0) node[left] {$0$};
\draw[axis] (0.025,5/8) -- (-0.025,5/8) node[left] {$\nicefrac{5}{8}$};
\draw[axis] (0.025,1/4) -- (-0.025,1/4) node[left] {$\nicefrac{1}{4}$};
\draw[axis] (0.025,5/6) -- (-0.025,5/6) node[left] {$\nicefrac{5}{6}$};

\draw[ultra thick, Dandelion] (0,-0.3)--(1,-0.3);
\draw[ultra thick, MidnightBlue] (1/2,-0.35)--(1,-0.35); 

\end{scope}
\end{tikzpicture}
\end{center}
\caption{}
\end{subfigure}
\end{center}
\caption{A simple non-atomic congestion game:
(a) Instance with two $s$-$t$-paths. (b) The cost of the Wardrop equilibrium (black) and of the full information signaling scheme (red) depending on the probability $\mu=\P[d=1]$. As indicated by the colored lines underneath, the upper path is used for all $\mu\in[0,1]$ while the lower path is only used for $\mu\in[1/2,1]$.} 
\label{fig:example-intro}
\end{figure}

\begin{example}
\label{ex:firstExample}
Consider the simple game in \cref{fig:example-intro} with two parallel edges and cost functions $x$ and $5/6$, respectively. We normalize the highest demand to the size of the population and assume that the demands are $1$ and $1/2$, each with a probability of $1/2$. For a fixed player, let $A$ be the event that this player is active. If the player chooses the upper edge, their expected cost is given by the flow $x_1$ on the upper edge when they are active, and 0 otherwise, i.e., $\Pr[A] \cdot \mathbb{E}[x_1 \mid A] + \P[\neg A] \cdot 0$. Either all players are active or half of the players are active, and both cases have probability $1/2$. Hence, the expected cost amounts to the sum of $\P[A \mid d = 1/2] \cdot \P[d=1/2]\cdot \mathbb{E}[x_1 \mid A \wedge (d\!=\!1/2)]$ and $\P[A \mid d = 1]\cdot \P[d=1] \cdot \mathbb{E}[x_1 \mid A \wedge (d\!=\!1)]$. When a fraction $a$ of the players are active, we assume each infinitesimal player is active independently with probability $a$, i.e., $\P[A \wedge (d=a)] = a$, for every $a \in [0,1]$.
Recall that players choose their strategy before the demand is realized. Suppose a fraction of $x_1^*$ players chooses the upper edge. Then the expected cost of a player choosing the upper edge is $\smash{\tfrac{1}{2} \cdot \tfrac{1}{2} \cdot \tfrac{x_1^*}{2} + 1 \cdot \tfrac{1}{2} \cdot x_1^* = \tfrac{5}{8} \cdot x_1^*}$. A similar calculation shows that the expected cost when choosing the lower edge is $\smash{\left(\tfrac 12 \cdot \tfrac 12 + \tfrac 12 \cdot 1\right) \cdot \tfrac 56 = \tfrac 58}$. We conclude that in the unique Wardrop equilibrium $x_1^* = 1$, i.e., all players choose the upper edge. This leads to a total expected cost of $\smash{\tfrac 58 = 0.625}$.

Now suppose a traffic service provider like TomTom, Apple, or Google discloses a public signal whether the traffic is low (i.e., $d=1/2$) or high (i.e., $d=1$). We call this setting \emph{full information}, since the information about the state of the world is fully disclosed. Now every player updates their belief about the demand and conditions their route choice on this information. In both cases, a corresponding Wardrop equilibrium emerges. For demand $1/2$ we get $x_1^* = \tfrac 12$ and $x_2^* = 0$; for demand 1, $x_1^* = \tfrac 56$ and $x_2^* = \tfrac 16$. The total expected cost with full information is $\tfrac 12 \cdot \tfrac 12 \cdot \tfrac 12 + \tfrac 12  \cdot 1 \cdot \tfrac 56 = \tfrac{13}{24} \approx 0.542$. Hence, full information improves the total expected cost over the case with no signal.

Similarly, consider the no-signal case and the cost of an equilibrium as a function of any prior $\mu=\P[d=1] \in [0,1]$ (see the black function in \cref{fig:example-intro}). Inspecting the cost when the provider gives full information (the red function), we see that full information improves the cost over no signal, for any prior $\mu \in [0,1]$.
\hfill $\blacksquare$
\end{example}

In this paper, we study how to optimally reveal information about the realized demand in order to induce Wardrop equilibria with low total expected cost. We focus on the case of \emph{public signals} where the information provided is the same for all players. More specifically, we assume that a benevolent traffic service provider has a finite set of public and abstract signals $\Sigma$ at its disposal. Formally, a signal $\sigma \in \Sigma$ has no a-priori intrinsic meaning. In practice, however, signals may already be biased towards certain information, e.g., that traffic volume is ``moderate'', ``relatively high'', or ``gridlocked''. 
Before seeing the realized demand, the service provider commits to a signaling scheme that is public knowledge of all players. It fixes the probabilities of emitting the signals for each realization of the demand. Subsequently, the realization of the demand is observed by the service provider (e.g., due to traffic measurements or cell phone data) and signals are sent according to the predefined signaling scheme. Upon receiving a signal, the players update their beliefs about the realization of the demand and react by playing a corresponding Wardrop equilibrium. How can the service provider optimize the public signaling scheme in order to minimize the total expected cost of the induced Wardrop equilibrium?


\paragraph{Contribution.}

After introducing a formal description of the problem in \cref{sec:preliminaries}, we derive useful structural properties of equilibria and their cost functions in \cref{sec:struc-prop}. More precisely, we show that the cost function of the unique Wardrop equilibrium flow is piecewise linear in the belief of the realized state for any finite number of states (\Cref{lem:piecewise-linear}). We further show that for two different demands, the cost function is monotonically non-decreasing in the probability that the higher demand is realized (\Cref{lem:monotonicity} and Corollary~\ref{cor:2stateMonotone}). Building upon these properties, in \cref{sec:FPTAS} we provide a fully polynomial-time approximation scheme (FPTAS) for optimal signaling with two different states (\Cref{thm:fptas}).

There exist network structures, in which always revealing the realized state is an optimal signaling scheme, no matter which prior belief the players have (cf.\ Example~\ref{ex:firstExample} above). We call such a signaling scheme \emph{full information revelation}. In \cref{sec:sepa}, we show that if the underlying graph is a series-parallel graph, full information revelation is an optimal signaling scheme. We even prove that this characterisation is tight -- whenever the underlying graph is not series-parallel, there exist some cost functions for the edges and demands such that full information revelation is not optimal. 

In Sec.\ \ref{sec:LPs} we provide an LP-based algorithm that computes the optimal signaling scheme that induces Wardrop equilibria only using a distinct set of different supports. The algorithm works for any number of states and runs in time polynomial in the number of states, the number of edges, and the number of given supports. In general, however, there exist networks for which, over all beliefs, the number of different supports used in a Wardrop equilibrium is exponential in the input size, even when there are only two different demands. We conduct a computational study in \cref{ssec:compStudy} exhibiting that the number of different supports used in a Wardrop equilibrium is small on real-life instances. Therefore, our LP-based algorithm can be implemented in reasonable time in practice. Moreover, we see that the optimality of \emph{full information revelation} is ubiquitous in these instances, even though they are not series-parallel.

\paragraph{Related Work.}

The question how non-atomic network congestion games behave when the demand changes has been studied thoroughly. \citet{YounGJ08} and \citet{OhareCW16} examined empirically how the price of anarchy, i.e., the ratio of the total travel time of a Wardrop equilibrium and the total travel time of a system optimum, changes as a function of the demand. The functional dependence of the price of anarchy as a function of the demand has been studied analytically by \citet{Colini-Baldeschi19,Colini-Baldeschi20,CominettiDS19}, and \citet{WuMCX21}. \citet{WuMRX22} studied a similar question for atomic congestion games.
\citet{WangDC14} obtained bounds on the price of anarchy of Wardrop equilibria with stochastic demands depending on parameters of the distribution. \citet{CorreaHS19} studied a similar model with the difference that the players perform a Bayesian update of the distributions after observing whether their commodity travels and find that the price of anarchy transfers from the deterministic model; this refines earlier results of \citet{Roughgarden15} on the price of anarchy of Bayes-Nash equilibria. 
More generally, the sensitivity of Wardrop equilibria to changes in the demand was studied by \cite{Hall78,Josefsson07,Fisk79,EnglertFO10,Patriksson04,TakallooK20,KlimmW22,UkkusuriW10}.

Games with a random number of atomic players were introduced by \citet{Myerson1998}. He showed that when the distribution of the number of players follows a Poisson distribution, beliefs about the number of players of an internal player and an external observer coincide. Such Poisson games were further studied by \citet{Myerson00}.
\citet{GairingMT08} studied atomic congestion games where the weight of a player is their private information and provide bounds on the price of anarchy.
\citet{CominettiSSM22} studied Bernoulli congestion games, i.e., atomic congestion games where each player participates with an independent probability. They showed that the resource loads converge to a Wardrop equilibrium in the limit when the participation probability vanishes. \citet{CominettiSOM19} obtained bounds on the price of anarchy of Bernoulli congestion games with affine costs. Similar models where players participate only with a certain probability were studied in \cite{AngelidakisFL13,MeirTBK12}. \citet{AshlagiMT06} studied (non-Bayesian) congestion games with unknown number of players.

The potential of information design for non-atomic congestion games was illustrated through examples by \citet{DasKM17}. 
\citet{NachbarX21} further explored different signaling regimes and study connections with the price of anarchy.
\citet{MassicotL19} fully characterized the optimal policy for networks consisting of two edges with affine cost where the cost of one edge does not depend on the state. 
\citet{VassermanFH15} considered a setting with parallel edges with affine costs where the cost functions are permuted and bounded the improvements that can be obtained from private signals.
\citet{BhaskarCKS16} considered games with affine costs where the offset depends on the state and showed that the problem of computing an optimal signaling scheme cannot be approximated by a factor of $(4/3-\epsilon)$ for $\epsilon >0$, unless $\classP = \classNP$.
For the same setting, \citet{GriesbachHKK22} proved that revealing the realized state is always an optimal signaling scheme if and only if the underlying network is a series-parallel graph. They also provided LP-based techniques to compute the optimal signaling schemes. In particular, they can compute optimal signals for parallel links with a constant number of states and commodities.
\citet{acemoglu2018informational} considered the setting in which players have different knowledge about the available edges in a road network and give a strict characterization of the graph class for which a player cannot obtain higher private cost by gaining additional information. 
\citet{WuAO21} characterize the Bayesian Wardrop equilibria that arise when populations of drivers receive multiple signals from heterogeneous information systems.
\citet{ZhouNX22} showed how to compute the optimal public and private signal in an atomic congestion game with constant number of parallel edges.
\citet{CastiglioniCM021} studied information design for atomic congestion games in the relaxed setting of ex ante persuasion where the players are only persuaded to follow the signaling scheme \emph{before} receiving the signal. They showed that an optimal signal can be computed with LP-based techniques for symmetric players, and show that the problem is $\mathsf{NP}$-hard for asymmetric players. The provision of information in a dynamic model where players have preferences over arrival times was explored by \citet{ArnottPL91}.

\section{Preliminaries}
\label{sec:preliminaries}

\paragraph{Signaling.} 
We consider a signaling problem in the context of network congestion games. There is a finite set $\Theta = \{\theta_1,\dots,\theta_l\}$ of $l$ \emph{states of nature}, along with a prior distribution $\prior$, where $\prior_\theta \ge 0$ is the probability that state $\theta \in \Theta$ is realized. We denote by $\Delta(\Theta)$ the space of all distributions over $\Theta$.	
There is a sufficiently large, finite set of (public) \emph{signals} $\Sigma$ which can be used by a benevolent principal to influence the information of all players in a congestion game. We study the problem of computing a good signaling scheme, given by a distribution over $\Sigma$ for each state $\theta \in \Theta$. More formally, a \emph{signaling scheme} is a matrix $\phi = (\phi_{\theta,\sigma})_{\theta \in \Theta, \sigma \in \Sigma}$ such that $\phi_{\theta,\sigma} \ge 0$ for all $\theta \in \Theta, \sigma \in \Sigma$, and $\sum_{\sigma \in \Sigma} \phi_{\theta,\sigma} = \prior_\theta$ for each $\theta \in \Theta$. The value of $\phi_{\theta,\sigma}$ is the combined probability that state~$\theta$ is realized and the sender sends signal $\sigma$. We define $\phi_\sigma = \sum_{\theta \in \Theta} \phi_{\theta,\sigma}$ as the total probability that signal $\sigma$ gets sent. A signal $\sigma \in \Sigma$ \emph{gets issued} by scheme $\phi$ if $\phi_{\sigma} > 0$.

The scenario proceeds as follows. First, the principal commits to a signaling scheme $\phi$ and communicates this to all players. Hence, the prior $\prior$ and the signaling scheme $\phi$ are public knowledge. Then the state of nature is realized. The principal sees the realized state $\theta$ and sends a public signal $\sigma$ chosen according to $\phi$. All agents receive the signal, update their beliefs about the state of nature $\theta$ and the resulting costs in the congestion game, and then choose equilibrium strategies as a result of (unilaterally) minimizing their individual expected cost. The goal of the principal is to choose $\phi$ to minimize the total expected cost of the resulting equilibrium.

\paragraph{Network Congestion Games.}
We now describe the network congestion game, the (individual) expected cost of the agents, and the total expected cost. There is a directed graph $G=(V,E)$ with a designated source $s \in V$ and destination $t \in V$.
For every edge~$e \in E$ there is a \emph{cost function} $c_{e}: \R_{\geq 0} \to \R_{\geq 0}$ that is convex and non-decreasing. In this paper, we focus on \emph{affine costs}, i.e., all functions $c_e$ are of the form $c_e(x) = a_e  x + b_e$, where $a_e \in \mathbb{R}_{>0}$ and $b_e \in \mathbb{R}_{\geq 0}$ for every~$e \in E$.
We concentrate on single-commodity games, in which all players want to route from $s$ to $t$. The player population consists of a continuum of infinitesimally small players of total volume $d > 0$, the \emph{available demand}. For simplicity, we normalize to $d=1$. The \emph{actual demand}, however, is uncertain and depends on the realized state of nature. Formally, each state $\theta \in \Theta$ is associated with a realization $d_\theta \le d$ of actual demand. We assume w.l.o.g.\ $0 < d_{\theta_1} < d_{\theta_2} < \dots < d_{\theta_l} = d = 1$. Intuitively, in state $\theta$, each infinitesimal player participates in the game independently with probability $d_\theta/d = d_\theta$.

The set of feasible strategies $\calP \subseteq 2^E$ for each player is the set of directed $s$-$t$-paths in $G$. A \emph{strategy distribution} or \emph{path flow} is a distribution of the players on the paths $P \in \calP$. Such a path flow is represented by a vector $x = (x_P)_{P \in \calP}$ satisfying the three properties (1) $\sum_{P \in \calP} x_P = 1$, (2) $x_P \geq 0$ for all $P \in \calP$, and (3) $x_P = 0$ for all $P \notin \calP$. Let $\calX$ denote the set of those vectors. Every path flow $x \in \calX$ induces a \emph{load} $x_e$ on every edge~$e \in E$ given by  $x_e = \sum_{P \in \calP : e \in P}   x_{P}$.

Upon receiving the public signal $\sigma$ (and knowing $\phi$), all players perform a Bayes update of $\prior$ to a conditional distribution $\mu_{\sigma} \in \Delta(\Theta)$. The conditional probability of $\theta \in \Theta$ when receiving a signal $\sigma$ with $\phi_{\sigma} > 0$ is given by $\mu_{\theta,\sigma} = \phi_{\theta,\sigma} / \phi_\sigma$. 
Indeed, every signaling scheme $\phi$ can be seen as a convex decomposition of $\prior$ into distributions $\mu_\sigma$, i.e., for every $\theta \in \Theta$, $\prior_\theta = \sum_{\sigma} \phi_{\theta,\sigma} = \sum_{\sigma} \phi_{\sigma}  \mu_{\theta,\sigma}$.
After this update, the players choose a path flow $x \in \calX$. The next definitions apply for every distribution $\mu \in \Theta(\Delta)$. Given $\mu$, suppose a player chooses a path including edge $e$. In state $\theta$, the player is present in the system with probability $d_{\theta}$, otherwise the private cost of this player is 0. Conditioned on the presence of this (infinitesimal) player, the expected cost that player will experience on $e$ is $c_e(d_\theta  x_e)$. Hence, the \emph{expected cost of an edge} $e \in E$ is $ c_e(x_e \mid \mu) = \sum_{\thetasum}  \mu_\theta  d_\theta  c_e \left( d_\theta x_e \right)$.
The \emph{(individual) expected cost of a player} on path $P \in \calP$ is given by $c_P(x \mid \mu) = \sum_{e \in P} c_e(x_e \mid \mu)$.
A path flow $x \in \calX$ is a \emph{Wardrop equilibrium} if no player has an incentive to change their chosen strategy. Formally, given $\mu$, no player shall improve their expected cost by deviating to another strategy, i.e., $c_P(x \mid \mu) \leq c_Q(x \mid \mu)$ for all $P,Q \in \calP$ with $x_P > 0$.

The next result extends a characterization for non-atomic games with a single state, cf.~\cite{BeckmannMW56}. It directly carries over to the scenario considered in this paper as follows.
\begin{proposition}
	\label{pro:beckmann}
	Given any distribution $\mu \in \Delta(\Theta)$, a strategy distribution $x \in \mathcal{X}$ is a Wardrop equilibrium if and only if $x \in \arg\min \left\{\sum_{e \in E} \int_{0}^{y_e} c_{e}(t \mid \mu)\,\mathrm{d}t \;:\; y \in \mathcal{X}\right\}$.
\end{proposition}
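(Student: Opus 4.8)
The plan is to recognize this as the standard Beckmann--McGuire--Winsten potential characterization, now carried out for the $\mu$-averaged edge costs, and to derive it from the first-order optimality condition for convex programs. Set $\Phi(y) := \sum_{e \in E} \int_{0}^{y_e} c_e(t \mid \mu)\,\mathrm{d}t$. First I would record two facts. (i) The feasible set $\calX$ is the simplex $\{x \ge 0 : \sum_{P \in \calP} x_P = 1\}$ over the finite path set $\calP$, hence convex and compact. (ii) $\Phi$ is convex and continuously differentiable on $\calX$ with $\partial \Phi / \partial y_e = c_e(y_e \mid \mu)$. Fact (ii) holds because $c_e(\,\cdot \mid \mu) = \sum_{\thetasum} \mu_\theta d_\theta\, c_e(d_\theta \,\cdot\,)$ is a non-negative combination of non-decreasing functions, hence itself non-decreasing, and is continuous since each $c_e$ is affine; therefore each map $y_e \mapsto \int_0^{y_e} c_e(t \mid \mu)\,\mathrm{d}t$ is convex and $C^1$, composing with the linear map $y \mapsto y_e = \sum_{P \ni e} y_P$ preserves convexity, and a finite sum of convex functions is convex.

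Next I would invoke the standard fact that, for a convex differentiable $\Phi$ on a convex set $\calX$, a point $x \in \calX$ minimizes $\Phi$ over $\calX$ if and only if the variational inequality $\nabla \Phi(x) \cdot (y - x) \ge 0$ holds for all $y \in \calX$. Expanding the gradient in the edge variables and regrouping by paths via $y_e - x_e = \sum_{P \ni e}(y_P - x_P)$ gives
\[
\nabla \Phi(x) \cdot (y - x) = \sum_{e \in E} c_e(x_e \mid \mu)(y_e - x_e) = \sum_{P \in \calP} c_P(x \mid \mu)\,(y_P - x_P),
\]
using $c_P(x \mid \mu) = \sum_{e \in P} c_e(x_e \mid \mu)$. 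So it remains to show that this last expression is nonnegative for every $y \in \calX$ exactly when $x$ is a Wardrop equilibrium.

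For the forward direction, suppose $x$ is a Wardrop equilibrium and put $c^\ast := \min_{P \in \calP} c_P(x \mid \mu)$; the equilibrium condition forces $c_P(x \mid \mu) = c^\ast$ whenever $x_P > 0$, so $\sum_P c_P(x \mid \mu)\, x_P = c^\ast$ (as $\sum_P x_P = 1$), whereas $\sum_P c_P(x \mid \mu)\, y_P \ge c^\ast \sum_P y_P = c^\ast$ for every $y \in \calX$; subtracting yields $\nabla\Phi(x)\cdot(y-x) \ge 0$. For the converse, if $x$ is not a Wardrop equilibrium there are paths $P, Q$ with $x_P > 0$ and $c_P(x \mid \mu) > c_Q(x \mid \mu)$; shifting an amount $\epsilon \in (0, x_P]$ of flow from $P$ to $Q$ produces a feasible $y \in \calX$ with $\sum_{P'} c_{P'}(x \mid \mu)(y_{P'} - x_{P'}) = \epsilon\bigl(c_Q(x \mid \mu) - c_P(x \mid \mu)\bigr) < 0$, violating the variational inequality. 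Combining the two directions with the optimality characterization from the previous paragraph proves the proposition.

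I do not expect a genuine obstacle here, since the result is essentially classical; the only points needing care are (a) verifying that the belief-averaged edge cost $c_e(\,\cdot\mid\mu)$ inherits monotonicity and continuity from $c_e$, which is precisely what makes $\Phi$ a legitimate convex potential, and (b) the bookkeeping when translating the gradient between the edge-flow and path-flow representations. Everything else is the textbook convex-optimization argument.
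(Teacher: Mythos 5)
Your proof is correct and is precisely the standard Beckmann--McGuire--Winsten potential argument that the paper itself invokes (it gives no separate proof, only citing the classical characterization and noting that it carries over once $c_e(\cdot\mid\mu)$ is seen to be a non-decreasing, continuous edge cost). Your verification that the belief-averaged costs inherit monotonicity and continuity, plus the variational-inequality bookkeeping between path and edge flows, is exactly the content the paper treats as immediate, so there is nothing to add.
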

All cost functions $c_e$ are convex, so their convex combinations $c_{e}(x \mid \mu)$ are convex as well. As such, the Wardrop equilibrium is unique, since the optimization problem in Proposition~\ref{pro:beckmann} is strictly convex and has a unique solution. We use $x^*(\mu)$ to denote the unique Wardrop equilibrium for a distribution $\mu \in \Delta(\Theta)$. The total cost of a path flow $x$ for $\mu \in \Delta(\Theta)$ is given by $  \cost(x \mid \mu) = \sum_{P \in \calP} x_P  c_P(x \mid \mu) = \sum_{e \in E} x_e c_e(x_e \mid \mu)$.
For the Wardrop equilibrium for $\mu \in \Delta(\Theta)$, we use the short notation $\cost(\mu) = \cost(x^*(\mu) \mid \mu)$. The goal of the principal is to choose $\phi$ in order to minimize the total expected cost of the Wardrop equilibrium for the conditional distributions $\mu_\sigma$ resulting from all signals $\sigma$, i.e.,
$\cost(\phi) = \sum_{\sigma \in \Sigma} \phi_{\sigma}  \cost(\mu_\sigma).$ 
We refer to \Cref{app:prelim} for a more detailed example illustrating the problem and its concepts.

\section{Structural Properties}
\label{sec:struc-prop}
We exhibit useful structural properties of the signaling scenario outlined above. We concentrate on a single probability distribution $\mu$ over states of nature, i.e., $\mu_{\theta}$ is the probability of state $\theta$ (i.e., that demand $d_\theta$ in the network is realized). Given $\mu$, the expected cost of edge $e \in E$ is
\begin{align}
\label{eq:exp-edge-cost}
c_e ( x_e \mid \mu ) &= \sum_{\thetasum} \mu_\theta  d_\theta  c_e (d_\theta x_e) =a_e \sum_{\thetasum} \mu_\theta^{\phantom{2}} d_\theta^2  x_e + b_e\sum_{\thetasum} \mu_\theta d_\theta.    
\end{align}
We first show that the cost of the unique Wardrop equilibrium with respect to $c_e (\cdot \mid \mu)$ is piecewise linear in $\mu\in \Delta(\Theta)$. 
Let $x$ be a flow. For $v\in V$, let $\psi_v$ be the length of a shortest path with respect to $c_e(x_e \mid \mu )$ from $s$ to $v$. We call an edge $e=(v,w)$ \emph{active} in $x$ if $\psi_w-\psi_v=c_e(x_e\mid \mu)$. Clearly, for every flow $x$, the set of active edges is connected and such that every vertex $v$ is reached by a path of active edges from $s$. Let
$\mathcal{A}=\{A\subseteq E\mid G=(V,A)$ is connected and contains an $(s,v)$-path for all $v\in V\}$
be the set containing all sets of edges with that property, and let $A(x)$ be the set of active edges for a flow $x$. In the following, we call a set $A\in \mathcal{A}$ a \emph{support}.

\begin{lemma}
\label{lem:piecewise-linear}
For a single-commodity network congestion game, the unique Wardrop equilibrium flow and
the cost of the unique Wardrop equilibrium are piecewise linear in $\mu$. 
In particular, for every $A \in \mathcal{A}$, there is a possibly empty polytope $P_A \subseteq \Delta(\Theta)$ such that $P_A = \{\mu \in \Delta(\Theta) \mid A(x^*(\mu)) = A\}$, and $x^*$ and $C$ are affine on $P_A$.
\end{lemma}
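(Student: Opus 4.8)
The plan is to leverage the explicit affine form of the expected edge costs in \eqref{eq:exp-edge-cost}. Set $\alpha(\mu):=\sum_{\thetasum}\mu_\theta d_\theta^2$ and $\beta(\mu):=\sum_{\thetasum}\mu_\theta d_\theta$; both are linear in $\mu$, and since every $d_\theta>0$ we have $\alpha(\mu)>0$ throughout $\Delta(\Theta)$. Then $c_e(x_e\mid\mu)=a_e\alpha(\mu)x_e+b_e\beta(\mu)$, and by Proposition~\ref{pro:beckmann} the unique equilibrium $x^*(\mu)$ minimizes the strictly convex potential $\tfrac12\alpha(\mu)\sum_e a_e x_e^2+\beta(\mu)\sum_e b_e x_e$ over the fixed polytope $\calX$. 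Equivalently, $x^*(\mu)$ is pinned down by the Wardrop/KKT conditions: there are node potentials $(\psi_v)_{v\in V}$ with $\psi_s=0$ such that $\psi_w-\psi_v=c_e(x^*_e\mid\mu)$ on every active edge $e=(v,w)$, $\psi_w-\psi_v\le c_e(0\mid\mu)=b_e\beta(\mu)$ on every non-active edge, $x^*_e=0$ off the active edges, and $x^*$ routes one unit from $s$ to $t$. I also use the standard fact that at such an equilibrium every used $s$-$t$ path has cost equal to the shortest-path distance $\psi_t$, so $\cost(\mu)=\psi_t(\mu)$.

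Fix now a support $A\in\mathcal A$ and solve for the equilibrium under the hypothesis that its active-edge set is exactly $A$. Then $x^*$ vanishes off $A$, and on $A$ the tightness conditions give $x^*_e=(\psi_w-\psi_v-b_e\beta(\mu))/(a_e\alpha(\mu))$. Substituting this into the flow-conservation equations and clearing the positive factor $\alpha(\mu)$ turns them into a linear system $L\psi=\alpha(\mu)\,q+\beta(\mu)\,w$, where $q$ is the $\{\pm1\}$-valued demand vector with support $\{s,t\}$, $w$ is a constant vector built from the ratios $b_e/a_e$, and $L$ is the weighted graph Laplacian of the graph on $V$ with edge set $A$ and weights $1/a_e$. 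Since $A\in\mathcal A$, this graph is connected, so after grounding $\psi_s=0$ the reduced coefficient matrix (remove the row and column indexed by $s$) is positive definite; hence the system has a unique solution, which depends linearly on $(\alpha(\mu),\beta(\mu))$ and therefore affinely on $\mu$. Consequently $\cost(\mu)=\psi_t(\mu)$ is affine in $\mu$, and each $x^*_e$ is recovered as the affine quantity $\psi_w-\psi_v-b_e\beta(\mu)$ divided by the positive linear form $a_e\alpha(\mu)$.

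It remains to identify $P_A$. The pair $(\psi,x^*)$ just built is the true equilibrium exactly when the remaining inequalities of the Wardrop/KKT characterization hold: $x^*_e\ge0$ for all $e\in A$, and $\psi_w-\psi_v\le b_e\beta(\mu)$ for all $e=(v,w)\notin A$. Since $\alpha(\mu)>0$, the first family is equivalent to $\psi_w-\psi_v-b_e\beta(\mu)\ge0$; together with the second family and the affineness of $\psi$ in $\mu$, these are finitely many affine inequalities in $\mu$, so $P_A$ is the intersection of $\Delta(\Theta)$ with finitely many halfspaces, i.e.\ a (possibly empty) polytope coinciding with $\{\mu\in\Delta(\Theta):A(x^*(\mu))=A\}$ (up to relaxing strict inequalities on breakpoint faces). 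Finally, every $\mu$ lies in $P_{A(x^*(\mu))}$ and $\mathcal A\subseteq 2^E$ is finite, so the polytopes $\{P_A\}_{A\in\mathcal A}$ cover $\Delta(\Theta)$, $\cost$ is affine on each, and uniqueness of $x^*(\mu)$ makes the pieces agree on overlaps; piecewise linearity follows.

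I expect the main work to lie in the middle step: making precise that fixing $A$ collapses the Wardrop conditions to a genuinely solvable linear system with a unique solution affine in $\mu$ --- which is exactly where connectedness of $(V,A)$, hence invertibility of the grounded weighted Laplacian, enters --- and then verifying that the ``$A$ is the active set'' conditions stay affine in $\mu$ after clearing $\alpha(\mu)>0$. One subtlety I would scrutinize is the flow statement itself: the derivation above only exhibits $x^*_e$ as a ratio of affine functions of $\mu$ (equivalently, as an affine function of the single scalar $\beta(\mu)/\alpha(\mu)$ on $P_A$), and this ratio need not be affine in $\mu$ --- already for two parallel edges the equilibrium flow on an edge can equal something like $\tfrac{5(1+\mu)}{3(1+3\mu)}$ on a sub-polytope (cf.\ Example~\ref{ex:firstExample}) --- so the robustly affine objects are the node potentials $\psi$ and the cost $\cost$, and the flow's ``piecewise linearity'' is best read in terms of the effective parameter $\beta(\mu)/\alpha(\mu)$.
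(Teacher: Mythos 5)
Your argument follows essentially the same route as the paper's. After fixing the support $A$, the paper substitutes $y_e=\bigl(\sum_{\theta}\mu_\theta d_\theta^2\bigr)x_e$ into the KKT system and inverts the resulting block matrix via its Schur complement $\hat{\Gamma}D^{-1}\hat{\Gamma}^\top$, which is exactly the grounded weighted Laplacian (weights $1/a_e$) of $(V,A)$ that you obtain by eliminating the flows; connectedness of $(V,A)$ gives invertibility in both write-ups, the cost is read off as $d\,\pi_t$, hence affine in $\mu$, and $P_A$ is obtained from \eqref{eq:wardrop-equation}--\eqref{eq:wardrop-inequality}. Your description of $P_A$ by affine inequalities in $\mu$ alone (after multiplying by $\alpha(\mu)>0$ and using affineness of the potentials) is correct and in fact a touch cleaner than the paper's projection step, which is only literally a projection of a polytope in the $(\mu,y,\pi)$ coordinates, not in $(\mu,x,\pi)$ where the constraints contain products $\mu_\theta x_e$. (Both you and the paper gloss the same small point that the set where the active set equals $A$ exactly need not be closed, so the stated set equality for $P_A$ really holds up to closure/boundary faces.)

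Your scruple about the flow is justified, and it identifies an overreach in the lemma rather than a gap in your proof. The paper's proof establishes affineness of $(y,\hat{\pi})$, because the coefficient matrix in \eqref{eq:matrix-equation} is independent of $\mu$ while the right-hand side is linear in $\mu$, and then asserts that ``$x^*(\mu)$ is an affine function in $\mu$ on $P_A$'' merely because the system \eqref{eq:wardrop-equation} has a unique solution for each fixed $\mu$; that inference does not follow, since $x^*_e=y_e/\alpha(\mu)$ is a ratio of affine functions. Your two-link example makes this concrete (perturb the constant edge to $\epsilon x+5/6$ with small $\epsilon>0$ so that $a_e>0$ as the model requires): on the full-dimensional piece where both links carry flow, $x^*_1$ is affine in $\beta(\mu)/\alpha(\mu)$ but not in $\mu$. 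So the objects that are genuinely affine on $P_A$ are the potentials, the scaled flow $y=\alpha(\mu)x^*$, and consequently $C$ --- which is what the rest of the paper actually uses (piecewise linearity and concavity of $\cost$, the LPs in the $y$-variables). Your proof of those claims is complete and matches the paper's; the flow clause should be read in the scaled/effective-demand sense you describe.
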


\begin{proof}
Define a balance vector $\beta = (\beta_v)_{v \in V}$ as
\begin{align}
\label{eq:balance}
\beta_v &= \begin{cases}
  -\sum_{\thetasum}\mu^{\phantom{2}}_\theta d_\theta^2 & \text{ if } v=s,\\
 \phantom{-}\sum_{\thetasum}\mu^{\phantom{2}}_\theta d_\theta^2  & \text{ if } v=t,\\
 \phantom{-}0 & \text{ otherwise.}
 \end{cases}
\end{align}
Let $x^*$ be the unique Wardrop equilibrium, i.e., the optimal solution to the optimization problem
\begin{align*}
\text{Min.}  &\sum_{e \in E} \int_{0}^{x_e} c_e(z \mid \mu) \;\text{d}z\\
\text{s.t. }  &\sum_{e \in \delta^+(v)} \sum_{\thetasum}\mu_\theta^{\phantom{2}} d_\theta^2 x_e - \sum_{e \in \delta^-(v)} \sum_{\thetasum}\mu_\theta^{\phantom{2}} d_\theta^2 x_e = \beta_v
&& \text{ for all } v \in V,\\
&x_e \geq 0 &&\text{ for all } e \in E. 
\end{align*}
By the Karush-Kuhn-Tucker optimality conditions, a flow $x = (x_e)_{e \in E}$ is optimal if and only if it is feasible and there is a dual vector $\pi = (\pi_v)_{v \in V}$ such that
\begin{align*}
c_e(x_e \mid \mu) &= \pi_w - \pi_v &&\text{ for all } e = (v,w) \in E \text{ with } x_e \neq 0,\\
c_e(x_e \mid \mu) &\geq \pi_w - \pi_v &&\text{ for all } e = (v,w) \in E \text{ with } x_e = 0.
\end{align*}
The dual variables are clearly invariant under additive shifts. Thus it is without loss of generality to assume that $\pi_s = 0$. Using~\eqref{eq:exp-edge-cost}, we conclude that there is a support $A \in \mathcal{A}$ such that a Wardrop equilibrium satisfies the following equations

\begin{subequations}
\label{eq:wardrop-equation}
\begin{align}
\pi_v + a_e \,\left( \sum_{\thetasum}\mu_\theta^{\phantom{2}} d_\theta^2  \right) \, x_e +  b_e \sum_{\thetasum}\mu_\theta d_\theta  &= \pi_w && \text{ for all } e \in A,\\
\sum_{e \in \delta^+(v)} \sum_{\thetasum}\mu_\theta^{\phantom{2}} d_\theta^2 x_e - \sum_{e \in \delta^-(v)} \sum_{\thetasum}\mu_\theta^{\phantom{2}} d_\theta^2 x_e &= \beta_v && \text{ for all } v \in V,\\
\pi_s &= 0.
\end{align}
\end{subequations}
as well as the inequalities
\begin{subequations}
\label{eq:wardrop-inequality}
\begin{align}
\pi_v + a_e \,\left( \sum_{\thetasum}\mu_\theta^{\phantom{2}} d_\theta^2  \right) \, x_e +  b_e \sum_{\thetasum}\mu_\theta d_\theta &\geq \pi_w && \text{ for all } e \in E \setminus A,\label{eq:wardrop-inequality-1}\\
x_e &\geq 0 && \text{ for all } e \in E.\label{eq:wardrop-inequality-2}
\end{align}
\end{subequations}

We claim that for all $A \in \mathcal{A}$, the linear system \eqref{eq:wardrop-equation} has full rank. To see this, we substitute $y_e = \left(\sum_{\thetasum}\mu_\theta^{\phantom{2}} d_\theta^2\right)x_e$ and we let $\Gamma \in \mathbb{R}^{V \times E}$ be the incidence matrix of the subgraph $G[A]$,
i.e., $\Gamma = (\gamma)_{v \in V, e \in E}$ defined as $\gamma_{v,e} = 1$, if $e \in \delta^-(v)$, $\gamma_{v,e} = -1$ if $e \in \delta^+(v)$, and $\gamma_{v,e} = 0$, otherwise. Let $\smash{D \in \mathbb{R}^{E \times E} = \text{diag}(a_1,\dots,a_{k})}$ with $k = |A|$ be the diagonal matrix with the slopes of the cost function of the edges $e\in A$ on the diagonal. Eliminating $\pi_s = 0$, the system \eqref{eq:wardrop-equation} can be written as
\begin{align}
\label{eq:matrix-equation}
\left[
\begin{array}{c c}
D & \hat{\Gamma}^\top\\
\hat{\Gamma} & \mathbf{0}	
\end{array}
\right]
\left[
\begin{array}{c}
y \\
\hat{\pi}
\end{array}
\right] =
\left[
\begin{array}{c}
- b_e \, \sum_{\thetasum}\mu_\theta d_\theta\\
\hat{\beta} 
\end{array}
\right],
\end{align}
where $\hat{\pi}$ is the vector of vertex potentials with the entry for $s$ removed, $\hat{\Gamma}$ is the incidence matrix with the row for $s$ removed, and $\hat{\beta}$ is the vector $\beta$ with the entry for $s$ removed. Using Schur complements, we obtain that the matrix on the left hand side of \eqref{eq:matrix-equation} is invertable if and only if $\hat{L} := \hat{\Gamma} D^{-1} 	\hat{\Gamma}^\top$ is invertible; see, e.g., the textbook by Harville~\cite[Theorem~8.5.11]{harville1997matrix} in which case the inverse is given by
\begin{align} \label{eq:harville}
\left[
\begin{array}{c c}
D & \hat{\Gamma}^\top\\
\hat{\Gamma} & \mathbf{0}	
\end{array}
\right]^{-1} \!= 
\left[
\begin{array}{c c}
\!D^{-1} - D^{-1}\hat{\Gamma}^\top \hat{L}^{-1} \hat{\Gamma} D^{-1}  &  \!D^{-1} \hat{\Gamma}^\top \hat{L}^{-1}\\
\!\hat{L}^{-1} \hat{\Gamma} D^{-1} & \!-\hat{L}^{-1}
\end{array}
\right].
\end{align}

The matrix $\hat{L}$ is a weighted Laplacian matrix of a connected graph (with the entry for $s$ removed) which is known to have full rank. This implies that for fixed $\mu$, there is a unique solution $x$ satisfying \eqref{eq:wardrop-equation} and a unique value for $\pi_t$. 

The equations~\eqref{eq:wardrop-equation} and the inequalities~\eqref{eq:wardrop-inequality} together with $\mu \in [0,1]^l$ define a polytope of all vectors $(\mu,x,\pi) \in \mathbb{R}^{\Theta \times E \times V}$ such that $\mu \in [0,1]^l$ is a distribution, $x$ is a corresponding Wardrop equilibrium with support $A$, and $\pi$ is a corresponding vector of vertex potentials. Since the projection of a polytope is a polytope again, the set $P_A$ is a polytope as well.

Observe that \eqref{eq:wardrop-equation} gives a system of linear equations which, for fixed $\mu$, have a unique solution in $x$. Thus, the Wardrop equilibrium $x^*(\mu)$ is an affine function in $\mu$ on $P_A$. To see that also the cost of the Wardrop equilibrium is affine on $P_A$ note that the cost of the Wardrop equilibrium is given by $d\pi_t$ and, hence, the result follows.
\end{proof}

Next, we study conditions under which the cost $\cost(\mu)$ is monotone in $\mu$. 

\begin{lemma}
\label{lem:monotonicity}
Let $\mu^{(1)}, \mu^{(2)} \in \Delta(\Theta)$ be such that we have 
\begin{align*}
\frac{\sum_{\thetasum} \mu^{(1)}_\theta d_\theta^2}{\sum_{\thetasum} \mu^{(1)}_\theta d_\theta^{\phantom{)}}} &< \frac{\sum_{\thetasum} \mu^{(2)}_\theta d_\theta^2}{\sum_{\thetasum} \mu^{(2)}_\theta d_\theta^{\phantom{)}}} &&\text{and} & \sum_{\thetasum} \mu^{(1)}_\theta d_\theta^{\phantom{)}} &< \sum_{\thetasum} \mu^{(2)}_\theta d_\theta^{\phantom{)}}.
\end{align*}
Then $C(\mu^{(1)}) \leq C(\mu^{(2)})$. 
\end{lemma}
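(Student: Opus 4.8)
The plan is to reduce the comparison of $\cost(\mu^{(1)})$ and $\cost(\mu^{(2)})$ to the comparison of two ordinary (single-state) Wardrop equilibria with affine costs on the same graph, and then invoke monotonicity of equilibrium cost under the natural partial order on affine cost vectors. Recall from \eqref{eq:exp-edge-cost} that, writing $S_1(\mu)=\sum_{\thetasum}\mu_\theta d_\theta$ and $S_2(\mu)=\sum_{\thetasum}\mu_\theta d_\theta^2$, the effective cost of edge $e$ under $\mu$ on a flow $x\in\calX$ is $c_e(x_e\mid\mu)=a_e S_2(\mu)\,x_e + b_e S_1(\mu)$. So the equilibrium $x^*(\mu)$ is exactly the Wardrop equilibrium (for unit demand) of the congestion game with affine cost functions $\tilde c_e^{\mu}(x) = a_e S_2(\mu)\,x + b_e S_1(\mu)$, and by Proposition~\ref{pro:beckmann} it minimizes $\Phi^\mu(x)=\sum_e\bigl(\tfrac12 a_e S_2(\mu) x_e^2 + b_e S_1(\mu) x_e\bigr)$ over $\calX$. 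Also, $\cost(\mu)=\sum_e x_e^*(\mu)\,\tilde c_e^\mu(x_e^*(\mu))=\sum_e\bigl(a_e S_2(\mu)(x_e^*(\mu))^2 + b_e S_1(\mu) x_e^*(\mu)\bigr)$.

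First I would reduce to a common slope normalization. The equilibrium flow $x^*(\mu)$ is unchanged if we scale all cost functions $\tilde c_e^\mu$ by a positive constant, so $x^*(\mu)$ is the equilibrium of the game with costs $\hat c_e^\mu(x) = a_e x + b_e\,\frac{S_1(\mu)}{S_2(\mu)}$; the first hypothesis says precisely that the offset multiplier $r(\mu):=S_1(\mu)/S_2(\mu)$ satisfies $r(\mu^{(1)}) > r(\mu^{(2)})$ (note the inequality in the hypothesis is $S_2/S_1$ increasing, i.e.\ $S_1/S_2$ decreasing). Thus going from $\mu^{(1)}$ to $\mu^{(2)}$ the normalized cost functions decrease pointwise and coordinatewise: $\hat c_e^{\mu^{(2)}}(x)\le \hat c_e^{\mu^{(1)}}(x)$ for all $e$ and all $x\ge0$, with the slopes held fixed. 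The key classical fact I would use here is the monotonicity of the equilibrium potential value (equivalently $d\pi_t$) in the cost functions: if one affine instance on $G$ has edge costs dominated by another's, the equilibrium cost is no larger. This can be shown directly from the variational characterization — if $\hat c \le \hat c'$ coordinatewise with equal slopes, then for the respective equilibria $x,x'$ one has $\pi_t = \min_{s\text{-}t\text{ path }P}\sum_{e\in P}\hat c_e(x_e) \le \sum_{e\in P}\hat c'_e(x_e)$ along the $x'$-shortest path, and a short argument comparing $x$ and $x'$ (e.g.\ via the KKT/potential conditions, or the fact that the equilibrium cost equals $\min_P$ path cost which is monotone when loads only shift toward cheaper edges) closes it. Concretely, I expect to argue: lowering offsets can only decrease $\pi_t$, because the equilibrium of the lowered instance is a feasible flow for which every used path has cost $\le \pi_t^{\text{old}}$ after the change, and equilibria minimize the max-used-path length among such flows.

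So after normalizing by slope I obtain an intermediate bound: the cost of the equilibrium computed with the $\mu^{(2)}$-normalized costs but "read off" on the $\mu^{(1)}$ scale is monotone. The remaining subtlety is that $\cost(\mu)$ is the \emph{unnormalized} total cost $\sum_e x_e \tilde c_e^\mu(x_e)$, which involves the overall scaling factor $S_2(\mu)$ that I divided out. Here I would use the second hypothesis, $S_1(\mu^{(1)})<S_1(\mu^{(2)})$, together with $S_1(\mu)^2\le S_2(\mu)\le S_1(\mu)$ (since $d_\theta\in(0,1]$ and $\mu$ a distribution, so $d_\theta^2\le d_\theta$ giving $S_2\le S_1$, and Cauchy–Schwarz/Jensen giving $S_1^2\le S_2$ if needed) — actually the cleanest route is to write $\cost(\mu)$ directly in terms of $\pi_t$: from Lemma~\ref{lem:piecewise-linear}'s proof, $\cost(\mu)=d\,\pi_t(\mu)$ where $\pi_t(\mu)$ is the $s$-$t$ potential of the equilibrium with the \emph{original} (unnormalized) costs $\tilde c_e^\mu$. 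And $\pi_t(\mu) = S_2(\mu)\cdot\hat\pi_t(\mu)$, where $\hat\pi_t(\mu)$ is the potential under the slope-normalized costs $\hat c^\mu$. Then $\cost(\mu) = d\,S_2(\mu)\,\hat\pi_t(\mu)$, and I need $S_2(\mu^{(1)})\hat\pi_t(\mu^{(1)}) \le S_2(\mu^{(2)})\hat\pi_t(\mu^{(2)})$. The factor $\hat\pi_t$ is monotone by the slope-normalization argument above (it increases since offsets increase from $\mu^{(2)}$... wait, decrease) — I would instead normalize by offsets: the equilibrium is also that of $\bar c_e^\mu(x) = \frac{a_e}{S_1(\mu)} x + b_e$ after dividing by $S_1(\mu)$, which has slopes scaled by $S_2(\mu)/S_1(\mu)$ and fixed offsets; under $\mu^{(1)}\to\mu^{(2)}$ the slopes \emph{increase} (first hypothesis) while offsets stay fixed, so the equilibrium cost increases; and $\cost(\mu) = d\, S_1(\mu)\,\bar\pi_t(\mu)$ with $S_1(\mu^{(1)}) < S_1(\mu^{(2)})$ and $\bar\pi_t(\mu^{(1)}) \le \bar\pi_t(\mu^{(2)})$, so the product is monotone and we are done.

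The main obstacle I anticipate is the supporting monotonicity lemma for affine congestion games — that increasing edge slopes (offsets fixed) on a single-commodity network cannot decrease the equilibrium total cost $d\,\pi_t$. This is intuitive and standard-feeling but needs a careful proof; I would derive it from Proposition~\ref{pro:beckmann} by a convexity/envelope argument on the potential function, or directly from the KKT system \eqref{eq:wardrop-equation}–\eqref{eq:wardrop-inequality} by a perturbation/comparison argument (monotone comparative statics: the equilibrium of a "harder" instance is not a descent direction for the "easier" potential, etc.). The graph-theoretic content of Lemma~\ref{lem:piecewise-linear} — specifically the identity $\cost(\mu)=d\,\pi_t$ — is what lets me avoid reasoning about the flow itself and argue purely about the scalar $\pi_t$, which is where I expect the proof to become clean.
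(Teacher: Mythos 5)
Your final route is, in substance, the paper's own: dividing the effective costs $a_eS_2(\mu)x+b_eS_1(\mu)$ by $S_1(\mu)$ (equivalently, the paper's substitutions $y_e=S_2(\mu)x_e$, $z_e=y_e/S_1(\mu)$) identifies $x^*(\mu)$ with the Wardrop equilibrium of the \emph{fixed} affine game $a_ez+b_e$ at a ``virtual demand'' $S_2(\mu)/S_1(\mu)$; the first hypothesis says this virtual demand increases, and the second hypothesis supplies the remaining factor in $C(\mu)=S_1(\mu)\,\bar\pi_t(\mu)$. The paper closes the first step by citing the monotonicity of the per-unit equilibrium cost in the demand for single-commodity affine networks \cite{KlimmW22}, and then multiplies by $S_1$, exactly as you do.

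The genuine gap is in the supporting lemma you plan to prove yourself. In the form you state it --- ``increasing edge slopes (offsets fixed) cannot decrease the equilibrium cost,'' and earlier ``if one affine instance has edge costs dominated coordinatewise by another's, the equilibrium cost is no larger'' --- it is simply false: in the Braess graph with path costs $x,1$ and $1,x$ and a cheap crossing edge, raising the cost of the crossing edge alone lowers the equilibrium cost from $2$ towards $3/2$, and the same example refutes your auxiliary claim that ``equilibria minimize the max-used-path length among such flows'' (the equilibrium has common path cost $2$ while a feasible half--half split has maximum used-path cost $3/2$). Your argument only needs the \emph{uniform} case --- all slopes multiplied by one common factor, which is equivalent to raising the demand in a single-commodity network --- and that restricted statement is true, but it is precisely the nontrivial ingredient the paper imports from Klimm and Warode; it cannot be obtained from coordinatewise cost domination or from a quick ``lowered costs give a witness flow'' argument, and would have to be proved in earnest (e.g.\ via the Laplacian/positivity analysis of the parametrized KKT system) or cited. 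With the lemma correctly restricted to uniform scaling and so justified, the rest of your plan goes through and coincides with the paper's proof.
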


\begin{proof}
Let $x$ be the Wardrop equilibrium with respect to $\mu$, i.e., 
\begin{align*}
\sum_{e \in P} c_e(x_e \mid \mu) &\leq \sum_{e \in Q} c_e(x_e \mid \mu) \quad \text{ for all  $P,Q \in \calP$ with $x_P > 0$,}\\
\sum_{P \in \calP} x_P &= 1.
\end{align*}
Using the definition of $c_e(x_e \mid \mu)$ this yields
\begin{align*}
\sum_{e \in P} \Biggl[a_e\left(\sum_{\thetasum} \mu_\theta^{\phantom{2}} d_\theta^2\right) x_e + b_e \sum_{\thetasum} \mu_\theta d_\theta\Biggr] &\leq \sum_{e \in Q} \left[ a_e\left(\sum_{\thetasum} \mu_\theta^{\phantom{2}} d_\theta^2\right)x_e  + b_e \sum_{\thetasum} \mu_\theta d_\theta \right] \text{ for all $P,Q \in \mathcal{P}$ with $x_P > 0$,}\\
\sum_{P \in \mathcal{P}} x_P &= 1.
\end{align*}
Substituting $y_e := (\sum_{\thetasum} \mu_\theta^{\phantom{2}} d_\theta^2) x_e$ and $y_P = \sum_{e \in P} y_e$, we obtain
\begin{align*}
\sum_{e \in P} \left(a_e y_e  + b_e \sum_{\thetasum} \mu_\theta d_\theta\right) &\leq \sum_{e \in Q} \left(a_e y_e  + b_e \sum_{\thetasum} \mu_\theta d_\theta \right) \text{ for all $P,Q \in \calP$ with $y_P > 0$,}\\
\sum_{P \in \mathcal{P}} y_P &= \sum_{\thetasum} \mu_\theta^{\phantom{2}} d_\theta^2.
\intertext{Further substituting $z_e := y_e \big/ (\sum_{\thetasum} \mu_\theta d_\theta)$ and $z_P = \sum_{e \in P} z_e$, we obtain}
\sum_{e \in P} ( a_e z_e  + b_e ) &\leq \sum_{e \in Q} (a_e z_e  + b_e ) \text{ for all $P,Q \in \mathcal{P}$ with $z_P > 0$,}\\
\sum_{P \in \mathcal{P}} z_P &= \tfrac{\sum_{\thetasum} \mu_\theta^{\phantom{2}} d_\theta^2}{\sum_{\thetasum} \mu_\theta d_\theta}.
\end{align*}
Hence, we observe that $z$ is a Wardrop equilibrium for a deterministic ``virtual'' demand of $\smash{\bigl(\sum_{\thetasum} \mu_\theta^{\phantom{2}} d_\theta^2\bigr)\big/\left(\sum_{\thetasum} \mu_\theta d_\theta\right)}$. In single-commodity network congestion games with affine linear costs, the per-unit cost of a Wardrop equilibrium is known to be non-decreasing in the demand~\cite{KlimmW22}. 

Now consider the distributions $\mu^{(1)}$ and $\mu^{(2)}$ along with the induced Wardrop equilibria $x^{(1)}$ and $x^{(2)}$. We define the substituted quantities $y_e^{(i)}$ and $z_e^{(i)}$ in the obvious way, for $i=1,2$. Let $P^{(i)} \in \mathcal{P}$ be a path such that $x^{(i)}_{P^{(i)}} > 0$, for each $i=1,2$. Then, we obtain
\begin{align*}
\sum_{e \in P^{(1)}} ( a_e^{\phantom{)}} z^{(1)}_e + b_e) \leq \sum_{e \in P^{(2)}} ( a_e^{\phantom{)}} z^{(2)}_e + b_e ).
\end{align*}
Substituting back, this implies
\begin{align*}
&\tfrac{1}{\sum_{\thetasum} \mu^{(1)}_\theta d_\theta^{\phantom{)}}}\sum_{e \in P^{(1)}} ( a_e^{\phantom{)}} y^{(1)}_e + b_e) \leq \tfrac{1}{\sum_{\thetasum} \mu^{(2)}_\theta d_\theta^{\phantom{)}}}\sum_{e \in P^{(2)}} ( a_e^{\phantom{)}} y^{(2)}_e + b_e),
\end{align*}
and further
\begin{align*}
\tfrac{1}{\sum_{\thetasum} \mu^{(1)}_\theta d_\theta^{\phantom{)}}}\sum_{e \in P^{(1)}} \left[ a_e  \bigl(\sum_{\thetasum} \mu_\theta^{(1)} d_\theta^2\bigr) x^{(1)}_e + b_e \right] \leq
\tfrac{1}{\sum_{\thetasum} \mu^{(2)}_\theta d_\theta^{\phantom{)}}}\sum_{e \in P^{(2)}} \left[ a_e \bigl(\sum_{\thetasum} \mu_\theta^{(2)} d_\theta^2\right) x^{(2)}_e + b_e\bigr].
\end{align*}
Multiplying this inequality with $\sum_{\thetasum} \mu_\theta^{(1)} d_\theta^{\phantom{)}}$ yields
\begin{align}
\begin{split}
\label{eq:wardrop-inequality-proof}
\sum_{e \in P^{(1)}} \left[ a_e \left(\sum_{\thetasum} \mu_\theta^{(1)} d_\theta^2\right)x^{(1)}_e + \left(\sum_{\thetasum} \mu_\theta^{(1)} d_\theta^{\phantom{)}}\right)b_e \right] &\leq \tfrac{\sum_{\thetasum}\mu_\theta^{(1)} d_\theta^{\phantom{)}}}{\sum_{\thetasum} \mu^{(2)}_\theta d_\theta^{\phantom{)}}}
 \!\!\sum_{e \in P^{(2)}}  \!\left[ a_e \!\left(\sum_{\thetasum} \mu_\theta^{(2)} d_\theta^2\right) x^{(2)}_e \! + \! \left(\sum_{\thetasum} \mu_\theta^{(1)} d_\theta^{\phantom{(}}\right)b_e \right]\\
&\leq \sum_{e \in P^{(2)}} \left[ a_e \left(\sum_{\thetasum} \mu_\theta^{(2)} d_\theta^2\right)x^{(2)}_e + \left(\sum_{\thetasum} \mu_\theta^{(2)} d_\theta^{\phantom{)}}\right)b_e\right].
\end{split}
\end{align}
Since $x^{(i)}$ is a Wardrop equilibrium with $\sum_{P \in \calP} x_P^{(i)} = 1$ and $x^{(i)}_{P^{(i)}} > 0$, we have
\begin{align*}
C(\mu^{(i)}) 
&= \sum_{e \in P^{(i)}}\left[ a_e \left(\sum_{\thetasum} \mu_\theta^{(i)} d_\theta^2\right)x^{(i)}_e + \left(\sum_{\thetasum} \mu_\theta^{(i)} d_\theta^{\phantom{(}}\right)b_e \right]\\
&= \sum_{e \in P^{(i)}} c_e(x_e \mid \mu^{(i)}),
\end{align*}
and the result follows from \eqref{eq:wardrop-inequality-proof}.
\end{proof}

As an immediate corollary, we obtain that the function $C$ is monotonic for the case that $|\Theta|=2$.

\begin{corollary}
\label{cor:2stateMonotone}
If $|\Theta|=2$, $C(\mu)$ is non-decreasing in $\mu_{\theta_2}$.
\end{corollary}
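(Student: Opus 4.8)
The plan is to obtain the corollary as a direct consequence of \Cref{lem:monotonicity}. When $|\Theta| = 2$, write $\Theta = \{\theta_1,\theta_2\}$; a distribution $\mu \in \Delta(\Theta)$ is then determined by the single parameter $p := \mu_{\theta_2} \in [0,1]$, with $\mu_{\theta_1} = 1 - p$, and by the normalization convention we have $d_{\theta_2} = 1$ and $d := d_{\theta_1} \in (0,1)$. First I would record the two quantities that appear in the hypotheses of \Cref{lem:monotonicity}: $\sum_{\thetasum} \mu_\theta d_\theta = (1-p)d + p = d + p(1-d)$ and $\sum_{\thetasum} \mu_\theta d_\theta^2 = (1-p)d^2 + p = d^2 + p(1-d^2)$.

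Next I would verify both hypotheses of \Cref{lem:monotonicity} for any $\mu^{(1)},\mu^{(2)}$ with $p_1 := \mu^{(1)}_{\theta_2} < \mu^{(2)}_{\theta_2} =: p_2$ (the case $p_1 = p_2$ being trivial). The second hypothesis is immediate: since $1 - d > 0$, the affine map $p \mapsto d + p(1-d)$ is strictly increasing, so $\sum_{\thetasum} \mu^{(1)}_\theta d_\theta < \sum_{\thetasum} \mu^{(2)}_\theta d_\theta$. For the first hypothesis I must show that $f(p) := \bigl(d^2 + p(1-d^2)\bigr)\big/\bigl(d + p(1-d)\bigr)$ is strictly increasing on $[0,1]$. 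This is a short quotient-rule computation: the numerator of $f'(p)$ is
\[
(1-d^2)\bigl(d + p(1-d)\bigr) - (1-d)\bigl(d^2 + p(1-d^2)\bigr) = (1-d)\bigl[(1+d)\bigl(d+p(1-d)\bigr) - d^2 - p(1-d^2)\bigr] = d(1-d),
\]
which is positive because $d \in (0,1)$; hence $f'(p) = d(1-d)\big/\bigl(d + p(1-d)\bigr)^2 > 0$, so $f(p_1) < f(p_2)$, which is exactly the first hypothesis.

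Finally, applying \Cref{lem:monotonicity} to $\mu^{(1)}$ and $\mu^{(2)}$ yields $C(\mu^{(1)}) \le C(\mu^{(2)})$ whenever $\mu^{(1)}_{\theta_2} < \mu^{(2)}_{\theta_2}$, which is precisely the assertion that $C(\mu)$ is non-decreasing in $\mu_{\theta_2}$. I do not expect a genuine obstacle here; the only points requiring a little care are the algebraic simplification exhibiting the numerator of $f'$ as $d(1-d)$, and the observation that it is the normalization $d_{\theta_2} = 1 > d_{\theta_1}$ which guarantees that both inequalities in the hypotheses of \Cref{lem:monotonicity} are strict, so that the lemma applies to every pair of parameter values $p_1 < p_2$.
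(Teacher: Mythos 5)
Your proposal is correct and follows essentially the same route as the paper: both reduce \Cref{cor:2stateMonotone} to \Cref{lem:monotonicity} by checking that $\sum_\theta \mu_\theta d_\theta$ and the ratio $\bigl(\sum_\theta \mu_\theta d_\theta^2\bigr)/\bigl(\sum_\theta \mu_\theta d_\theta\bigr)$ are strictly increasing in $\mu_{\theta_2}$, using $d_{\theta_1}<d_{\theta_2}=1$. The only (cosmetic) difference is that you establish the ratio's monotonicity by computing $f'(p)=d(1-d)/\bigl(d+p(1-d)\bigr)^2>0$, whereas the paper verifies the same inequality by cross-multiplying and cancelling terms; both computations are correct.
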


\begin{proof}[Proof of \Cref{cor:2stateMonotone}]
We apply \Cref{lem:monotonicity} by showing that the two assumptions hold.
Let $\mu^{(1)},\mu^{(2)}\in \Delta(\Theta)$ be such that $\mu_{\theta_2}^{(1)}<\mu_{\theta_2}^{(2)}$. 
The first assumption holds true by a series of equivalence transformations:
\begin{align*}
    &\frac{\sum_{\thetasum} \mu^{(1)}_\theta d_\theta^2}{\sum_{\thetasum} \mu^{(1)}_\theta d_\theta^{\phantom{(}}}
    <\frac{\sum_{\thetasum} \mu^{(2)}_\theta d_\theta^2}{\sum_{\thetasum} \mu^{(2)}_\theta d_\theta^{\phantom{(}}} \\
    \Leftrightarrow \quad 
    &\frac{\Bigl(1-\mu_{\theta_2}^{(1)}\Bigr)d_{\theta_1}^2+\mu_{\theta_2}^{(1)}}{\Bigl(1-\mu_{\theta_2}^{(1)}\Bigr)d_{\theta_1}^{\phantom{(}}+\mu_{\theta_2}^{(1)}}
    <\frac{\Bigl(1-\mu_{\theta_2}^{(2)}\Bigr)d_{\theta_1}^2+\mu_{\theta_2}^{(2)}}{\Bigl(1-\mu_{\theta_2}^{(2)}\Bigr)d_{\theta_1}^{\phantom{(}}+\mu_{\theta_2}^{(2)}}\\
    \Leftrightarrow \quad 
    &d_{\theta_1}^3\left( 1- \mu_{\theta_2}^{(1)}\right) \left(1-\mu_{\theta_2}^{(2)}\right)
    +\mu_{\theta_2}^{(1)}\mu_{\theta_2}^{(2)}\left( 1-d_{\theta_1}^{\phantom{2}}-d_{\theta_1}^2\right) +d_{\theta_1}^{\phantom{(}}\mu_{\theta_2}^{(1)}
    +d_{\theta_1}^2\mu_{\theta_2}^{(2)}\\
    & < d_{\theta_1}^3\left( 1- \mu_{\theta_2}^{(1)}\right) (1-\mu_{\theta_2}^{(2)})
    +\mu_{\theta_2}^{(1)}\mu_{\theta_2}^{(2)}( 1-d_{\theta_1}^{\phantom{2}}-d_{\theta_1}^2)+d_{\theta_1}^{\phantom{(}}\mu_{\theta_2}^{(2)}
    +d_{\theta_1}^2\mu_{\theta_2}^{(1)}\\
    \Leftrightarrow \quad 
    &\mu_{\theta_2}^{(1)}(1-d_{\theta_1}^{\phantom{(}}) <\mu_{\theta_2}^{(2)}(1-d_{\theta_1}^{\phantom{(}})\\
    \Leftrightarrow \quad &\mu_{\theta_2}^{(1)} <\mu_{\theta_2}^{(2)}
\end{align*}
For the last transformation we used the fact that $1-d_{\theta_1}>0$. 
The second assumption follows from the following computation:
\begin{align*}
    \sum_{\thetasum}\mu_\theta^{(1)}d_\theta^{\phantom{(}}
    &=\Bigl(1-\mu_{\theta_2}^{(1)}\Bigr)d_{\theta_1}^{\phantom{(}}+\mu_{\theta_2}^{(1)}\\
    &=d_{\theta_1}^{\phantom{(}}+\mu_{\theta_2}^{(1)}(1-d_{\theta_1})\\
    &<d_{\theta_1}^{\phantom{(}}+\mu_{\theta_2}^{(2)}(1-d_{\theta_1}^{\phantom{(}})\\
    &=\Bigl(1-\mu_{\theta_2}^{(2)}\Bigr)d_{\theta_1}^{\phantom{(}}+\mu_{\theta_2}^{(2)}\\
    &=\sum_{\thetasum}\mu_\theta^{(2)}d_\theta^{\phantom{(}}.
\end{align*}
Since both assumptions of \Cref{lem:monotonicity} hold, we obtain $C\left(\mu^{(1)}\right) \leq C\left(\mu^{(2)}\right)$, as desired.
\end{proof}

The proof of~\Cref{lem:piecewise-linear} has striking similarities to the proof of the same result for the model with (known demand and) affine costs and \emph{uncertain offsets} in \cite[Lemma~1]{GriesbachHKK22}. We have not been able to derive a direct reduction between the two scenarios and discuss why it seems non-obvious to establish.
First, \Cref{lem:monotonicity} and Corollary \ref{cor:2stateMonotone} do not hold for signaling with uncertain offsets. In more detail, reinspecting the proof of \Cref{lem:monotonicity}, we can reinterpret our model using deterministic demand $d=1$ and affine costs with \emph{uncertain slopes and offsets}
$
c_e ( x_e \mid \mu ) = \sum_{\thetasum} \mu_\theta  \left(a_e^{\phantom{2}} d_\theta^2  x_e +  b_e d_\theta\right)
= \sum_{\thetasum} \mu_\theta \left(a_{e,\theta}  x_e + b_{e,\theta}\right).
$
This scenario has been studied in, e.g.,~\cite{BhaskarCKS16,DasKM17}. The reinterpretation per se does not appear to be very useful -- games with uncertain affine costs are not very well-understood and in general do not admit, e.g., the linearity properties of~\Cref{lem:piecewise-linear} (in contrast to the case when only offsets are uncertain, cf.~\cite{GriesbachHKK22}). For a normalized version of the costs
$
    \smash{c_e^n(x_e \mid \mu)} =
    \smash{\bigl(\sum_{\thetasum} \mu_\theta  d_\theta  c_e(d_\theta x_e) \bigr)\big/\bigl(\sum_{\thetasum} \mu_\theta^{\phantom{2}} d_\theta^2\bigr)}
    = \smash{a_e x_e + b_e \bigl(\sum_{\thetasum} \mu_\theta d_\theta\bigr)\big/\bigl(\sum_{\thetasum} \mu_\theta^{\phantom{2}} d_\theta^2\bigr)},
$ 
a fixed $\mu$ yields the same scaling factor throughout for every edge cost. As such, every Wardrop equilibrium w.r.t.\ costs $c_e(\cdot \mid \mu)$ is also a Wardrop equilibrium w.r.t.\ costs $c_e^n(\cdot \mid \mu)$ and vice versa. The normalized costs $c_e^n$ indeed might seem like a reduction to an instance of affine costs with \emph{uncertain offsets}. However, defining state-specific constants $b_{e,\theta}^n$ independent of $\mu$ such that $\smash{\sum_{\thetasum} \mu_\theta^{\phantom{n}} b_{e,\theta}^n = b_e \bigl(\sum_{\thetasum} \mu_\theta d_\theta\bigr) \,\big/\, \bigl(\sum_{\thetasum} \mu_\theta^{\phantom{2}} d_\theta^2 \bigr)}$ for every $\mu \in \Delta(\Theta)$ can be impossible. This reduction would be \emph{non-linear} and, as such, substantially change the cost structure of signaling schemes.

\section{FPTAS for Two States}
\label{sec:FPTAS}

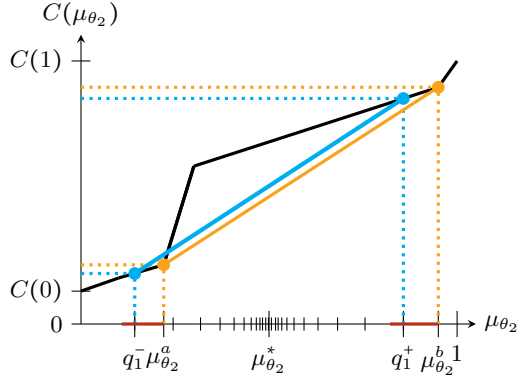
\begin{figure}[t]
		\centering
\footnotesize
 \begin{tikzpicture}
 [xscale=5,yscale =3.5,    shorten >= 0pt,
    shorten <= 0pt,]
\begin{scope}[xshift=15cm,yshift=-1.5cm]

\draw[axis,->] (0,0) to (0,1.1) node[above] {$\cost(\mu_{\theta_2})$}; 
\draw[axis] (0,0) -- (-0.025,0) node[left] {$0$};
\draw[axis] (0.025,1) -- (-0.025,1) node[left] {$\cost(1)$};
\draw[axis] (0.025,0.125) -- (-0.025,0.125) node[left] {$\cost(0)$};

\draw[axis,->] (0,0) to (1.05,0) node[right] {$\mu_{\theta_2}$};
\draw[axis] (1,0.05) -- (1,-0.05) node[below] {$1$};
\draw[axis] (0.5,0.05) -- (0.5,-0.05) node[below] {$\prior_{\theta_2}$};
\draw[axis] (0.22,0.05) -- (0.22,-0.05) node[below] {$\mu^a_{\theta_2}$};
\draw[axis] (0.14285714285714285,0.05) -- (0.14285714285714285,-0.05) node[below] {$q^-_1$};
\draw[axis] (0.95,0.05) -- (0.95,-0.05) node[below] {$\mu^b_{\theta_2}$};
\draw[axis] (0.8571428571428572,0.05) -- (0.8571428571428572,-0.05) node[below] {$q^+_1$};

\foreach \x in {0.14285714285714285,0.2448979591836734, 0.3177842565597667, 0.3698458975426905, 0.4070327839590646, 0.4335948456850462, 0.45256774691789015, 0.46611981922706436, 0.4757998708764746, 0.4827141934831961, 0.49118071096081434,0.49836019050113317}
 \draw[axis] (\x,0.025) -- (\x,-0.025) node[below] {};

\foreach \x in {0.8571428571428572, 0.7551020408163266, 0.6822157434402333, 0.6301541024573095, 0.5929672160409354, 0.5664051543149539, 0.5474322530821099, 0.5338801807729356, 0.5242001291235254, 0.5172858065168039,0.5062994921708469}
 \draw[axis] (\x,0.025) -- (\x,-0.025) node[below] {};




\draw[very thick] (0,0.125) -- (0.1,0.175);
\draw[very thick] (0.1,0.175) -- (0.22,0.225);

\draw[very thick] (0.22,0.225) -- (0.3,0.6);
\draw[very thick]  (0.3,0.6) -- (0.22,0.225);
\draw[very thick] (0.3,0.6) -- (0.95,0.9);
\draw[very thick] (0.95,0.9) -- (1,1);


\draw[dotted,YellowOrange,very thick] (0.22,0) -- (0.22,0.225);
\draw[dotted,YellowOrange,very thick] (0.95,0) -- (0.95,0.9);

\draw[dotted,YellowOrange,very thick] (0,0.225) -- (0.22,0.225);
\draw[dotted,YellowOrange,very thick] (0,0.9) -- (0.95,0.9);

\draw[ultra thick,Cyan] (0.14285714285714,0.192) -- (0.8571428571428572,0.8575);

\draw[dotted,Cyan,very thick] (0.14285714285714285,0) -- (0.14285714285714285,0.192);
\draw[dotted,Cyan,very thick] (0,0.192) -- (0.14285714285714285,0.192);

\draw[dotted,Cyan,very thick] (0.8571428571428572,0) -- (0.8571428571428572,0.8575);
\draw[dotted,Cyan,very thick] (0,0.8575) -- (0.8571428571428572,0.8575);

\draw[YellowOrange,very thick] (0.22,0.225) -- (0.95,0.9);

\draw[BrickRed,very thick] (0.108,0) -- (0.22,0);
\draw[BrickRed,very thick] (23/28,0) -- (0.95,0);

\node[state, fill=Cyan, color=Cyan,minimum size=4pt] at (0.14285714285714285,0.192) {};
\node[state, fill=Cyan, color=Cyan,minimum size=4pt] at (0.8571428571428572,0.8575) {};

\node[state, fill=YellowOrange, color=YellowOrange,minimum size=4pt] at (0.22,0.225) {};
\node[state, fill=YellowOrange, color=YellowOrange,minimum size=4pt] at (0.95,0.9) {};


\end{scope}
 
	\end{tikzpicture}
	\caption{Illustration for the proof of Theorem~\ref{thm:fptas}. Small ticks on the abscissa indicate exponential sampling points of $\cost(\mu_{\theta_2})$ (black). The induced cost of the samples $q^-_1$ and $q^+_1$ are displayed by blue circles. The samples fulfill the sampling property (red). Orange circles indicate the induced cost of $\smash{\mu_{\theta_2}^a}$ and $\smash{\mu_{\theta_2}^b}$ used by the signals of $\phi^*$. Note that $\smash{\cost(q^-_1) \leq \cost(\mu_{\theta_2}^a)}$ and $\smash{\cost(q_1^+) \leq \cost( \mu_{\theta_2}^b )}$ as seen by the dotted lines.
    \label{fig:FPTAS-sketch}}
\end{figure}

In this section, we show that there is an FPTAS for the optimal signaling when there are only two states.
Before we give the formal proof of this result, let us sketch the main arguments.
With \Cref{lem:piecewise-linear} and \Cref{cor:2stateMonotone}, the cost function $C(\mu)$ is piecewise-linear and monotone, see \Cref{fig:FPTAS-sketch} where the orange line shows the cost function induced by the optimal signaling scheme.
The algorithm computes polynomially many sample points for $C$ with exponentially decreasing step size towards the prior as indicated by the ticks on the abscissa.
The algorithm uses an LP to compute the best signaling scheme restricted to the sampling points and the alternative of revealing no information at all. Using Cramer's rule and Hadamard's theorem, it can be shown that a polynomial number of sample points suffice, implying that the algorithm runs in polynomial time. The approximation ratio of $(1+\epsilon)$ for any $\epsilon>0$ is obtained by proving that for any potential optimal conditional belief, there exists a sampling point within an $\epsilon$-distance (red area) that has smaller cost. The cost function induced by these signals guarantees a $(1+\epsilon)$-approximation and is shown in blue in \Cref{fig:FPTAS-sketch}. This is the main result of this section.

\begin{theorem}
    \label{thm:fptas}
    For a single-commodity network $G$ with unknown demands, affine costs, and two states, there is an \mbox{FPTAS} for optimal signaling.
\end{theorem}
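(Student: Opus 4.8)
The plan is to reduce the two-state problem to a one-dimensional one and exploit the two structural facts established above. Since $|\Theta| = 2$, the simplex $\Delta(\Theta)$ is just the interval $[0,1]$, parametrized by $\mu_{\theta_2}$, and a signaling scheme is nothing but a Bayes-plausible decomposition of the prior; by Carathéodory we may restrict to schemes that issue at most two signals, with posteriors $p \le \prior_{\theta_2} \le q$ and weights making $\prior_{\theta_2}$ their mean. The cost of such a scheme is exactly the height at $\prior_{\theta_2}$ of the chord through $(p,\cost(p))$ and $(q,\cost(q))$, so the optimal value equals the value at $\prior_{\theta_2}$ of the lower convex envelope of $\cost$. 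Because $\cost$ is piecewise linear in $\mu_{\theta_2}$ (\Cref{lem:piecewise-linear}), we may further assume that the optimal posteriors $p^\star, q^\star$ are either breakpoints of $\cost$ or lie in $\{0,\prior_{\theta_2},1\}$.

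The key quantitative step is to bound the bit-complexity of the breakpoints. A breakpoint is a value of $\mu_{\theta_2}$ at which one of the inequalities in~\eqref{eq:wardrop-inequality} turns tight for the equilibrium of an adjacent support $A \in \mathcal{A}$. All data entering the linear system~\eqref{eq:wardrop-equation} depend on $\mu$ only through $\sum_\theta \mu_\theta d_\theta$ and $\sum_\theta \mu_\theta d_\theta^2$, hence affinely in $\mu_{\theta_2}$; by Cramer's rule $x^*_e$ and $\pi_v$ are ratios of determinants that are affine in $\mu_{\theta_2}$, so each breakpoint is a root of an affine function whose coefficients are (signed combinations of) such determinants. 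Hadamard's inequality bounds these determinants, and hence the numerator and denominator of every breakpoint, by a number of bit-length $L$ polynomial in the input. Consequently, any breakpoint that differs from $\prior_{\theta_2}$ — or from $0$ or $1$ — differs from it by at least $2^{-L}$.

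With this in hand, the algorithm is: build a sampling grid on $[0,1]$ consisting of $\{0,\prior_{\theta_2},1\}$ together with, on each side of $\prior_{\theta_2}$, the points whose distance to $\prior_{\theta_2}$ runs through the geometric sequence $2^{-L}(1+\epsilon)^i$ up to length $1$; this is $O(L/\log(1+\epsilon)) = O(L/\epsilon)$ points, polynomial in the input and in $1/\epsilon$. For each sample point $r$ compute $\cost(r)$ by solving the strictly convex program of Proposition~\ref{pro:beckmann}, equivalently the linear system~\eqref{eq:wardrop-equation}; this is polynomial. Finally solve an LP that over all nonnegative weightings of the sample points summing to one with weighted mean $\prior_{\theta_2}$ (this family includes the no-signal option $\mu_\sigma = \prior$) minimizes $\sum_r \lambda_r \cost(r)$, and read off the induced signaling scheme.

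For correctness, let $p^\star \le \prior_{\theta_2} \le q^\star$ be optimal posteriors as above, so $q^\star - \prior_{\theta_2}$ and $\prior_{\theta_2} - p^\star$ each lie in $\{0\}\cup[2^{-L},1]$ (the value $0$ being handled by the no-signal/endpoint options). By construction the grid contains $\hat q$ with $\prior_{\theta_2} \le \hat q \le q^\star$ and $q^\star - \prior_{\theta_2} \le (1+\epsilon)(\hat q - \prior_{\theta_2})$, and $\hat p$ with $\hat p \le p^\star \le \prior_{\theta_2}$ and $\prior_{\theta_2} - \hat p \le (1+\epsilon)(\prior_{\theta_2} - \hat p^\star)$ (or $\hat p = 0$ when $p^\star$ is tiny). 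Monotonicity (\Cref{cor:2stateMonotone}) gives $\cost(\hat p)\le \cost(p^\star)$ and $\cost(\hat q)\le \cost(q^\star)$; combining this with the proximity bounds — which, as a short computation shows, control both mixing weights of the scheme $(\hat p,\hat q)$ against those of $(p^\star,q^\star)$ up to a factor $(1+\epsilon)^{O(1)}$, using that $\cost$ is nonnegative and non-decreasing — yields that the chord height at $\prior_{\theta_2}$ of the sampled scheme is at most $(1+\epsilon)^{O(1)}$ times the optimum, and rescaling $\epsilon$ gives the claimed $(1+\epsilon)$-guarantee; the LP returns a scheme at least this good. I expect the bit-complexity bound of the second paragraph to be the main obstacle: without the $2^{-L}$ separation the geometric grid need not be polynomial, and making the Cramer/Hadamard argument precise requires tracking how the affine $\mu$-dependence propagates through~\eqref{eq:wardrop-equation}–\eqref{eq:wardrop-inequality}. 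The remaining approximation bookkeeping is elementary but relies essentially on the monotonicity of $\cost$ (which, as noted after \Cref{lem:monotonicity}, fails for uncertain offsets) and on a small case analysis for the mixing weights.
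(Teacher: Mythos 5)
Your proposal is correct and follows essentially the same route as the paper's proof: Carath\'eodory reduction to two posteriors, the lower-convex-envelope view of optimal signaling, a Cramer/Hadamard bound on the bit-length of breakpoints of the piecewise-linear cost $\cost$ to get a $2^{-L}$ separation from the prior, a geometric sampling grid around $\prior_{\theta_2}$, monotonicity (\Cref{cor:2stateMonotone}) to compare sampled costs with the optimal posteriors' costs, and the same $(1+\epsilon)^{O(1)}$ weight bookkeeping. The only cosmetic differences are that you optimize an LP over all grid-point weightings where the paper enumerates chord pairs, and you apply Cramer's rule to the parametric system~\eqref{eq:wardrop-equation} directly rather than to the tight constraints of the LP~\eqref{eq:2stateWardrop}; both are equivalent in substance.
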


\begin{proof}
Let $\Theta = \lbrace \theta_1, \theta_2 \rbrace$ and $0<d_{\theta_1} < d_{\theta_2} = 1$. For simplicity, we represent any probability distribution $\mu \in \Delta(\Theta)$ by its value $\mu_{\theta_2} \in [0,1]$. We denote $\cost(\mu_{\theta_2})$, i.e., $\cost(1)$ and $\cost(0)$ are the social cost when a (deterministic) demand of 1 or $d_{\theta_1}$ routes through the network, respectively.

Our algorithm samples $\cost$ with exponentially decreasing step size around $\prior_{\theta_2}$. Formally, for a given value of $\varepsilon \in (0,1)$, let $\delta := \varepsilon/3$. The algorithm takes samples $\cost$ at points $q^-_j := \prior_{\theta_2} - \prior_{\theta_2} /(1+\delta)^j$ and $q^+_k := \prior_{\theta_2} + (1-\prior_{\theta_2}) /(1+\delta)^k$ for $j=0,1,...,M^-$ and $k=0,1,...,M^+$, where $M^-$ and $M^+$ are polynomials in the input size and $1/\varepsilon$ as discussed below. The algorithm enumerates all pairs $(q^-_j, q^+_k)$ of sample points. For each pair, it constructs the line through $\cost(q_j^-)$ and $\cost(q^+_k)$ defined as
%
    $\ell_{jk}(\lambda) = (\lambda - q^-_j)/(q^+_k - q^-_j) \cost(q^+_k) + (q^+_k - \lambda)/(q^+_k - q^-_j) \cost(q^-_j)$
%
and determines the pair $(q_j^-,q_k^+)$ for which $\ell_{jk}(\prior_2)$ is minimal. 
Eventually, the algorithm resorts to no-signaling if $\cost(\prior_{\theta_2}) \le \min_{j,k} \, \lbrace \ell_{jk}(\prior_{\theta_2}) \rbrace$. Otherwise, the scheme decomposes $\prior$ into signals with conditional distributions $(1-q^-_j,q^-_j)$ and $(1-q^+_k,q^+_k)$ from the optimal pair. Note that in this decomposition, we send the signal for $q_k^+$ with probability $(\prior_{\theta_2} - q_j^-) / (q_k^+ - q_j^-)$, and $q_j^-$ with probability $(q_k^+ - \prior_{\theta_2}) / (q_k^+ -q_j^-)$. Thus, the expected cost of this scheme is indeed $\ell_{jk}(\prior_{\theta_2})$.

\bigskip
{\noindent \textbf{Approximation Ratio.} $\;$}
We first prove that the approximation ratio of the algorithm is upper bounded by $1+\varepsilon$. By Caratheodory's theorem, there is an optimal scheme $\phi^*$ that requires at most $|\Theta| = 2$ signals $\{\sigma^a,\sigma^b\}$~(see, e.g., \cite{Dughmi19}). If these two are the same signal, they both have the conditional distribution $\smash{\prior_{\theta_2}}$. Then the optimal scheme is also found by our algorithm. If $\phi^*$ uses two different signals, they have conditional distributions $\mu_{\theta_2}^a < \prior_{\theta_2} < \mu_{\theta_2}^b$. Since they constitute a convex decomposition of $\prior$, following our reasoning in the last paragraph, the expected cost is
\begin{align*}
    \cost(\phi^*)\! = \! \ell_{ab}(\prior_{\theta_2}) & \!= \!\frac{\prior_{\theta_2} - \mu_{\theta_2}^a}{\mu_{\theta_2}^b - \mu_{\theta_2}^a}  \cost(\mu_{\theta_2}^b) + \frac{\mu_{\theta_2}^b - \prior_{\theta_2}}{\mu_{\theta_2}^b - \mu_{\theta_2}^a} \cost(\mu_{\theta_2}^a).
\end{align*}

We exponentially sample around $\prior_{\theta_2}$ up to a sufficiently small distance. This will guarantee the following \emph{sampling property}: For each possible $\mu_{\theta_2}^a$ and $\mu_{\theta_2}^b$ there exist sample points $q^-_j$ and $q^+_k$ with $q^-_j = \prior_{\theta_2} - (\prior_{\theta_2} - q^-_j) \ge \prior_{\theta_2} - (\prior_{\theta_2} - \mu_{\theta_2}^a) (1+\delta)$ and $q_k^+ = \prior_{\theta_2} + (q_k^+ - \prior_{\theta_2}) \ge \prior_{\theta_2} + (\mu_{\theta_2}^b - \prior_{\theta_2}) / (1+\delta)$. See Figure~\ref{fig:FPTAS-sketch} for an illustration. Furthermore, since $\cost$ is non-decreasing, we have $\cost(q^-_j) \leq \cost(\mu_{\theta_2}^a)$ and $ \cost(q_k^+) \leq \cost(\mu_{\theta_2}^b)$. Therefore, our algorithm computes a signaling scheme with expected cost
\begin{align*}
    \cost_\text{ALG} \; \le \ell_{jk}(\prior_{\theta_2}) &= \frac{\prior_{\theta_2} - q^-_j}{q^+_k - q^-_j}  \cost(q^+_k) + \frac{q^+_k - \prior_{\theta_2}}{q^+_k - q^-_j} \cost(q^-_j) \\
    &\le \; \frac{\prior_{\theta_2} - q^-_j}{q^+_k - q^-_j}  \cost(\mu_{\theta_2}^b) + \frac{q^+_k - \prior_{\theta_2}}{q^+_k - q^-_j} \cost(\mu_{\theta_2}^a).
\end{align*}

As long as $q^-_j < \prior_{\theta_2} < q_k^+$, the partial derivatives of the rightmost expression for $q_j^-$ and $q_k^+$ are 
\begin{align*}
    &\frac{\partial}{\partial q_j^-} \left(\frac{\prior_{\theta_2} - q^-_j}{q^+_k - q^-_j}  \cost(\mu_{\theta_2}^b) + \frac{q^+_k - \prior_{\theta_2}}{q^+_k - q^-_j} \cost(\mu_{\theta_2}^a) \right) =
    - \frac{(\cost(\mu_{\theta_2}^b) - \cost(\mu_{\theta_2}^a))(q^+_k - \prior_{\theta_2})}{(q^+_k - q^-_j)^2} < 0 \quad \text{ and } \\
    &\frac{\partial}{\partial q_k^+} \left(\frac{\prior_{\theta_2} - q^-_j}{q^+_k - q^-_j}  \cost(\mu_{\theta_2}^b) + \frac{q^+_k - \prior_{\theta_2}}{q^+_k - q^-_j} \cost(\mu_{\theta_2}^a) \right) =
    - \frac{(\cost(\mu_{\theta_2}^b) - \cost(\mu_{\theta_2}^a))(\prior_{\theta_2} - q^-_j)}{(q^+_k - q^-_j)^2} < 0,
\end{align*}
so decreasing $q_j^-$ to $\prior_{\theta_2} - (\prior_{\theta_2} - \mu_{\theta_2}^a)  (1+\delta)$ and $q_k^+$ to $\prior_{\theta_2} + (\mu_{\theta_2}^b - \prior_{\theta_2}) / (1+\delta)$ implies
\begin{align*}
    \cost_\text{ALG} &\le \frac{\prior_{\theta_2} - q^-_j}{q^+_k - q^-_j}  \cost(\mu_{\theta_2}^b) + \frac{q^+_k - \prior_{\theta_2}}{q^+_k - q^-_j} \cost(\mu_{\theta_2}^a) \\
    & \le \tfrac{\prior_{\theta_2} - (\prior_{\theta_2}  - (\prior_{\theta_2} - \mu_{\theta_2}^a)(1+\delta))}{\prior_{\theta_2} + (\mu_{\theta_2}^b - \prior_{\theta_2}) / (1+\delta) - (\prior_{\theta_2}  - (\prior_{\theta_2} - \mu_{\theta_2}^a)(1+\delta))}  \cost(\mu_{\theta_2}^b) + \tfrac{\prior_{\theta_2} + (\mu_{\theta_2}^b - \prior) / (1+\delta) - \prior_{\theta_2}}{\prior_{\theta_2} + (\mu_{\theta_2}^b - \prior_{\theta_2}) / (1+\delta) - (\prior_{\theta_2}  - (\prior_{\theta_2} - \mu_{\theta_2}^a)(1+\delta))} \cost(\mu_{\theta_2}^a) \\
    &= \frac{(\prior_{\theta_2} - \mu_{\theta_2}^a) (1+\delta)^2}{(\mu_{\theta_2}^b - \prior_{\theta_2}) + (\prior_{\theta_2} - \mu_{\theta_2}^a)  (1+\delta)^2}  \cost(\mu_{\theta_2}^b)
 + \frac{(\mu_{\theta_2}^b - \prior_{\theta_2})}{(\mu_{\theta_2}^b - \prior_{\theta_2}) + (\prior_{\theta_2} - \mu_{\theta_2}^a)(1+\delta)^2} \cost(\mu_{\theta_2}^a) \\
    &\le (1+\delta)^2 \Biggl(\frac{\prior_{\theta_2} - \mu_{\theta_2}^a}{(\mu_{\theta_2}^b - \prior_{\theta_2}) + (\prior_{\theta_2} - \mu_{\theta_2}^a)}  \cost(\mu_{\theta_2}^b) + \frac{(\mu_{\theta_2}^b - \prior_{\theta_2})}{(\mu_{\theta_2}^b - \prior_{\theta_2}) + (\prior_{\theta_2} - \mu_{\theta_2}^a)} \cost(\mu_{\theta_2}^a) \Biggr) \\
    &= (1+\delta)^2  \cost(\phi^*) \\
    &< (1+\varepsilon)  \cost(\phi^*)
\end{align*}
and proves the approximation ratio.

\bigskip
{\noindent \textbf{Running Time.} $\;$}
The running time crucially relies on the size of parameters $M^-$ and $M^+$. They need to be large enough to ensure that the sampling property holds.
The optimal cost of any decomposition of $\prior$ is determined by the line between two cost values that reaches the lowest value at $\prior_{\theta_2}$. As such, the optimal cost $C(\phi^*)$ as a function of $\prior_{\theta_2}$ constitutes the lower convex envelope of $\cost$.
Since $\cost(\mu)$ is piecewise linear (\Cref{lem:piecewise-linear}) the concave lower envelope is composed of lines between breakpoints of two linear segments. As such, we can assume that $\mu_{\theta_2}^a$ and $\mu_{\theta_2}^b$ from the optimal decomposition of $\prior_{\theta_2}$ are breakpoints of $\cost$. At a breakpoint, we experience a change between two distinct supports for the induced Wardrop equilibrium. 

For a particular distribution $\mu_{\theta_2}$ with support $\support$, the emerging Wardrop equilibrium can be described by the inequalities~\eqref{eq:balance}--\eqref{eq:wardrop-inequality}. We add the constraint $0 \le \mu_{\theta_2} \le 1$. Substituting $\mu_{\theta_1} = 1-\mu_{\theta_2}$ and $y_e = \left( (1-\mu_{\theta_2}) d_1^2 + \mu_{\theta_2}\right)x_e$, the inequalities become linear in variables $\mu_{\theta_2}$, $y_e$ and $\pi_v$:

   \begin{equation}
       \begin{aligned}
       \label{eq:2stateWardrop}
            \pi_v + a_ey_e + b_e ((1-\mu_2) d_1 + \mu_{\theta_2}) &= \pi_w && \text{ for all } e \in \support\\
            \pi_v + a_ey_e + b_e ((1-\mu_2) d_1 + \mu_{\theta_2}) &\geq \pi_w && \text{ for all } e \in E \setminus \support\\
            \sum_{e \in \delta^+(v)} y_e - \sum_{e \in \delta^-(v)} y_e &= \beta_v && \text{ for all } v \in V\\
            y_e &\geq 0 && \text{ for all } e \in E\\
            \pi_s &= 0\\
            \mu_{\theta_2} &\in [0,1].
        \end{aligned}
    \end{equation}

The polytope \eqref{eq:2stateWardrop} describes all distributions $\mu_{\theta_2}$ and emerging Wardrop equilibria for a given support $\support$. Adding an objective function $\max \mu_{\theta_2}$ or $\min \mu_{\theta_2}$, we obtain LPs to find the extremal distributions for which the emerging Wardrop equilibrium has the given support $\support$. Clearly, all breakpoints of $\cost$ are an optimal solution of (at least) two such LPs (one for each distinct support).

Now consider any breakpoint $\mu_{\theta_2}$ and the support $\support$ of its induced Wardrop equilibrium. $\mu_{\theta_2}$ and the Wardrop equilibrium constitute an optimal solution of this LP. Consider the linearly independent constraints fulfilled with equality by this solution, and let $L$ denote the corresponding coefficient matrix of these constraints. Note that the number of rows and columns of $L$ are at most linear in $|V|$ and $|E|$ (and $|\Theta|= 2$). By Cramer's rule, each variable in the optimal solution is a rational number with precision $\det(L)^{-1}$. We may assume that the absolute value of all coefficients in $L$ is bounded by a value $B \le \max\{|a_e|,|b_e| \mid e \in E\} \cup \{1\}$, i.e., $|L_{ij}| \leq B$ for all entries. Hadamard's theorem of determinants yields $\det(L) \leq B^{\tau} \tau^{\tau/2}$ as a trivial upper bound, where $\tau \in \Theta \left( |V| + |E| \right)$. It follows that the precision of $\mu_{\theta_2}$ is limited, i.e., breakpoints live on an exponentially small grid in $[0,1]$. 
To ensure the sampling property, it is sufficient that $\prior_{\theta_2} - \prior_{\theta_2} / (1+\delta)^{M^-} \ge \mu_a$ and $\prior_{\theta_2} + (1 - \prior_{\theta_2}) / (1+\delta)^{M^+} \le \mu_b$ or, equivalently, 
\[
    M^- > \tfrac{\log \tfrac{\prior_{\theta_2}}{\prior_{\theta_2} - \mu_a^{\phantom{*}}}}{\log (1+\delta)} \qquad \text{ and } \qquad
    M^+ > \tfrac{\log \tfrac{1-\prior_{\theta_2}}{\mu_b^{\phantom{*}} - \prior_{\theta_2}}}{\log (1+\delta)}
\]
Suppose $\prior_{\theta_2}$ is represented in the input as a rational number with denominator $B^*$. Since $\mu_a$ is a rational number with denominator $\det(L_a)$, their difference is a rational number with denominator $B^* \cdot \det(L_a)$. Hence $\tfrac{\prior_{\theta_2}}{\prior_{\theta_2} - \mu_a} \le B^* \det(L_a) \le B^* B^{\tau} \tau^{\tau/2}$. A similar bound holds for $\mu_b$. Therefore, 
\[
    M^+, M^- \in \Theta\left(\tfrac{\log B^* + (|V| + |E|) \cdot \log(|V| + |E| + \max_{e \in E} \{a_e,b_e\})}{\varepsilon}\right)
\] 
is sufficient to ensure the sampling property. The running time is bounded by $M^+ + M^- + 3$ evaluations of $\cost(\mu_{\theta_2})$ (which can be done efficiently, e.g., by computing a Wardrop equilibrium), and computing $(M^- + 1)(M^+ + 1)$ points $\ell_{jk}(\prior_{\theta_2})$ to choose the best one.
\end{proof}

\section{Full Information Revelation}
\label{sec:sepa}

As our main result in this section, we show that for a single-commodity network congestion game an optimal signaling scheme always reveals the true state of nature if and only if the underlying network is a series-parallel graph. To prove this result, we recall from \Cref{lem:piecewise-linear} that the cost of the unique Wardrop equilibrium $C$ is a piecewise linear function on $\Delta(\Theta)$. The result then follows from showing that $C$ is concave in $\Delta(\Theta)$ for all cost functions and probability distributions over demands if and only if the underlying network is series-parallel.

 Formally, a graph $G=(V,E)$ with two designated vertices $s,t\in V$ is a series-parallel graph if either it consists only of a single edge $E=\{\{s,t\}\}$, or it is obtained by a parallel or serial composition of two series-parallel graphs.
 For two series-parallel graphs $G_1=(V_1,E_1)$ and $G_2=(V_2,E_2)$ with designated vertices $s_1,t_1\in V_1$ and $s_2,t_2\in V_2$ the \emph{parallel composition} is the graph $G=(V,E)$ created from the disjoint union of graphs $G_1$ and $G_2$ by merging the vertices $s_1$ and $s_2$ into a new vertex $s$, and merging $t_1$ and $t_2$ into a new vertex $t$.
 The \emph{serial composition} of $G_1$ and $G_2$ is the graph created from the disjoint union of graphs $G_1$ and $G_2$ by merging the vertices $t_1$ and $s_2$, and renaming $s_1$ to $s$ and $t_2$ to $t$.
 We treat series-parallel graphs as directed graphs by directing every edge in the orientation as it appears in any path from $s$ to $t$. This is well-defined since in a series-parallel graph, there is a global order on the vertices such that every path only visits vertices in increasing order. 

 The general idea of the proof of the concavity of $C$ on $\Delta(\Theta)$ is as follows. In \Cref{lem:piecewise-linear}, we have shown that $C$ is affine on $P_A$ for all $A\in \mathcal{A}$, i.e., there are affine functions $C_A:\Delta(\Theta)\rightarrow \mathbb{R}_{\geq 0}$ such that $C_A(\mu)=C(\mu)$ for all $\mu \in P_A$. 
 Let furthermore $x_A^*:\Delta(\Theta)\rightarrow \mathbb{R}^E$ be an affine function such that $x^*(\mu)=x_A^*(\mu)$ for all $\mu\in P_A$.
 Note that for $x$ to be a Wardrop equilibrium flow, $x$ must satisfy a system of equations and inequalities explicitly given by \eqref{eq:wardrop-equation} and \eqref{eq:wardrop-inequality} in \Cref{app:full-info}.
 For $\mu \in P_A$, $x_A^*$ is the unique solution $x$ for the system of equations \eqref{eq:wardrop-equation}.
 It is important to note that while $x^*(\mu)=x_A^*(\mu)$ for all $\mu \in P_A$, the vector $x_A^*(\mu)$ with  $\mu \in \Delta(\Theta)\setminus P_A$ will not be a Wardrop equilibrium or not even a feasible flow at all depending on which of the inequalities in \eqref{eq:wardrop-inequality} is violated.

 In the following, we show that the pointwise minimum $\min _{A\in \mathcal{A}}C_A(\mu)$ of all Wardrop equilibria costs always corresponds to a support that is feasible. More specifically, we show in \Cref{lem:smaller-support-demand} that when for some $\mu \in \Delta(\Theta)$, there is a support $A\in \mathcal{A}$ with $\mu \notin P_A$, then there is another support $A'\in \mathcal{A}$ with either lower cost or the same cost but fewer edges. As a consequence, \Cref{lem:WE-minimal} shows that a Wardrop equilibrium is given by the flows $x^*$ that correspond to the pointwise minimum $\min_{A\in \mathcal{A}}C_A(\mu)$, where ties are broken in favor of smaller supports. Finally, our main result is \Cref{thrm:sepa-full-info}. The proofs of this section are very similar to the ones in~\cite{GriesbachHKK22} and, we defer the proof of the following lemma to Appendix~\ref{app:full-info}.
 
\begin{lemma}
    \label{lem:smaller-support-demand}
    For a single-commodity network congestion game on a series-parallel graph, let $A \in \mathcal{A}$ and $\mu \in \Delta(\Theta) \setminus P_A$. Then, there is another support $A' \in \mathcal{A}$ with $C_{A'}(\mu) < C_{A}(\mu)$ or $C_{A'}(\mu) = C_A(\mu)$ and $|A'| < |A|$.
\end{lemma}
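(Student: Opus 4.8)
The plan is to reduce the statement to a purely deterministic question and then induct on a series-parallel decomposition of $G$. Fix the distribution $\mu$. By \eqref{eq:exp-edge-cost} the edge cost $c_e(\cdot \mid \mu)$ is affine with \emph{positive} slope $a_e\sum_{\thetasum}\mu_\theta d_\theta^2$ and non-negative offset $b_e\sum_{\thetasum}\mu_\theta d_\theta$, and---after the substitution already used in the proof of \Cref{lem:piecewise-linear}---the systems \eqref{eq:wardrop-equation} and \eqref{eq:wardrop-inequality} are exactly the Wardrop conditions of an ordinary single-commodity affine congestion game on unit demand. For a support $A$ let $x^*_A(\mu)$ be the unique solution of the \emph{equalities} \eqref{eq:wardrop-equation}; summing the potential equations along a flow-carrying $s$-$t$ path and telescoping over all vertices gives $C_A(\mu) = d\,\pi_t = \sum_{e\in A} x^*_{A,e}\, c_e(x^*_{A,e}\mid\mu)$, i.e. $C_A(\mu)$ is just the cost of this ``as-if'' flow. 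Finally, $\mu \notin P_A$ means precisely that $x^*_A(\mu)$ is \emph{not} the Wardrop equilibrium, so it violates one of the inequalities \eqref{eq:wardrop-inequality}: either some edge of $A$ carries negative flow, or there is a shortcut edge in $E \setminus A$. (The borderline case in which $x^*_A(\mu)$ happens to be an equilibrium but an extra edge is tight does not arise for $\mu \notin P_A$, since the polytope $P_A$ of \Cref{lem:piecewise-linear} is closed.)

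It therefore suffices to prove the statement for a deterministic affine game, and I would do this by induction on a series-parallel decomposition of $G$. A single edge is trivial, since then $\mathcal{A}=\{E\}$ and $x^*_E(\mu)$ is always an equilibrium. For a series composition $G = G_1 \cdot G_2$ the cut vertex $t_1 = s_2$ forces the whole demand through both halves; restricting $A$ to $A_i := A \cap E_i$ yields supports of $G_i$, the as-if flow of $A$ restricts to the as-if flows of the halves, the potentials glue, and hence $C_A(\mu) = C^{G_1}_{A_1}(\mu) + C^{G_2}_{A_2}(\mu)$. Infeasibility of $x^*_A(\mu)$ is equivalent to infeasibility of one of the restrictions (a negative edge or a shortcut sits on one side, and there are no edges between the interiors of $G_1$ and $G_2$); applying the induction hypothesis on that side and recombining via additivity of $C$ produces the desired $A'$.

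The parallel case $G = G_1 \| G_2$ is the core of the argument. Writing $A_i := A \cap E_i$, every path from $s$ to an interior vertex of $G_i$ stays inside $G_i$, so $A_i$ reaches all interior vertices of $G_i$, although only one of $A_1,A_2$ need contain an $s$-$t$ path; the as-if flow splits the demand as $d = d_1 + d_2$ with a common potential at $t$. I would distinguish three situations: a half whose support carries no $s$-$t$ path contributes nothing to the cost and is kept only for reachability; a half receiving a \emph{negative} sub-demand $d_i < 0$ is replaced by a support of $G_i$ through which essentially no flow is routed, which strictly lowers $\pi_t$; and otherwise both halves are genuine supports with positive sub-demands, infeasibility localizes to one of them, and one recurses---the delicate point being that altering $A_1$ also shifts the split $d_1$, so the induction must carry a statement about the whole cost \emph{function} of a half (whose lower envelope over feasible supports is the equilibrium-cost function, and this composes correctly under parallel composition).

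I expect the parallel step---in particular the negative-sub-demand case and the bookkeeping after the split readjusts---to be the main obstacle; it is exactly where series-parallelness is essential, since on a non-series-parallel graph adding or deleting routing options need not be cost-monotone (Braess's paradox). A way to shorten part of the analysis: if $x^*_A(\mu)$ has no negative edge but a strict shortcut $e^*$, then $x^*_A(\mu)$ is the Wardrop equilibrium of $G$ restricted to $A$, and since series-parallel graphs are Braess-free, adding $e^*$ strictly decreases the equilibrium cost, so $A' = A \cup \{e^*\}$ works; only the negative-edge case then genuinely needs the series-parallel induction, where one deletes the offending edge and argues that the cost does not increase, breaking ties towards fewer edges. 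Iterating \Cref{lem:smaller-support-demand} then terminates, as each application either strictly lowers the cost or keeps it while strictly shrinking the support.
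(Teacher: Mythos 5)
Your skeleton coincides with the paper's: reduce to a deterministic affine game, induct along the series-parallel decomposition, handle the series composition by additivity of the potentials, and fight the real battle in the parallel composition by analyzing how the unit demand splits between the two components. However, exactly at that battle your argument stops: you yourself flag the shifting split as ``the main obstacle'' and suggest strengthening the induction to a statement about the cost \emph{function} of each half, but you never carry this out. The paper closes this case with a concrete external ingredient that your sketch never invokes: for a fixed support, the potential $\pi_t$ solving \eqref{eq:wardrop-equation} is strictly increasing in the routed demand (positivity of the inverse weighted Laplacian, \cite{KlimmW22}), and all three sub-cases $\lambda_1<0$, $\lambda_1=0$, $\lambda_1>0$ of the parallel step are resolved through this monotonicity. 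Without that (or an equivalent) tool, neither your main line nor your ``shortened'' variant is a proof.

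The proposed shortcut via Braess-freeness is moreover not sound. First, immunity of series-parallel networks to Braess's paradox only gives that adding an edge does not \emph{increase} the equilibrium cost; you need strict decrease, and strictness genuinely fails. A violated inequality in \eqref{eq:wardrop-inequality} for $e^*=(v,w)\notin A$ only says that $w$ can be reached more cheaply than $\pi_w$; since $A\in\mathcal{A}$ guarantees $s$--$v$ paths but not $v$--$t$ paths, $e^*$ need not open any new $s$--$t$ route in $A\cup\{e^*\}$. For instance, take two parallel edges from $s$ into a vertex $m$ whose only outgoing edge is expensive and lies outside $A$, plus a direct edge $(s,t)$ carrying all flow: the cheaper parallel edge into $m$ is a strict shortcut, yet adding it leaves the equilibrium and its cost unchanged, and one gets $C_{A\cup\{e^*\}}(\mu)=C_A(\mu)$ with a \emph{larger} support -- the opposite of what the lemma's tie-breaking requires. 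Second, you conflate $C_{A'}(\mu)$, which is the value of the equality system \eqref{eq:wardrop-equation} on $A'$, with the Wardrop equilibrium cost of the subnetwork $G[A']$; these agree only when every edge of $A'$ is active in that equilibrium, so even when a strictly cheaper route does appear, a statement about equilibrium costs of $G[A\cup\{e^*\}]$ does not yield $C_{A\cup\{e^*\}}(\mu)<C_A(\mu)$. Finally, ``delete the negative-flow edge and argue the cost does not increase'' is precisely the claim that needs proof: removing an edge changes the equality system, and in the paper this is exactly where the Laplacian/demand monotonicity (and the careful $\lambda_1=0$ bookkeeping with the two violation types) does the work. So the proposal reproduces the paper's outline but leaves its essential steps unproven, and the replacement argument it offers would fail.
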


Given \Cref{lem:smaller-support-demand}, it is straightforward to show the following result.

\begin{lemma}\label{lem:WE-minimal}
    We have $C(\mu)=\min_{A\in\mathcal{A}}C_A(\mu)$  for all $\mu\in \Delta(\Theta)$.
\end{lemma}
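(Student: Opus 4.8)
The plan is to show the two inequalities separately. For the direction $C(\mu) \ge \min_{A\in\mathcal{A}}C_A(\mu)$, I would use that the Wardrop equilibrium $x^*(\mu)$ has \emph{some} active-edge set $A^\star = A(x^*(\mu)) \in \mathcal{A}$ by the construction preceding \Cref{lem:piecewise-linear}, so $\mu \in P_{A^\star}$ and hence $C(\mu) = C_{A^\star}(\mu) \ge \min_{A\in\mathcal{A}}C_A(\mu)$. This direction is immediate and requires no work beyond recalling the definitions.

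The substantive direction is $C(\mu) \le \min_{A\in\mathcal{A}}C_A(\mu)$, i.e., $C(\mu) \le C_A(\mu)$ for \emph{every} $A\in\mathcal{A}$, including supports $A$ with $\mu\notin P_A$. Here I would argue by contradiction: suppose some $A$ has $C_A(\mu) < C(\mu)$, and among all such $A$ pick one minimizing $|A|$ (and, subject to that, minimizing $C_A(\mu)$). Since $C(\mu) = C_{A^\star}(\mu) \le C_{A^\star}(\mu)$ and $\mu \in P_{A^\star}$, in particular $\mu \notin P_A$ for our chosen $A$ (otherwise $C_A(\mu) = C(\mu)$). Now apply \Cref{lem:smaller-support-demand}: there is $A'\in\mathcal{A}$ with either $C_{A'}(\mu) < C_A(\mu)$, or $C_{A'}(\mu) = C_A(\mu)$ and $|A'| < |A|$. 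In the first case $C_{A'}(\mu) < C_A(\mu) < C(\mu)$, contradicting minimality of $C_A(\mu)$ among minimum-cardinality counterexamples once we note $|A'|$ can be taken $\le |A|$ (or, more carefully, run the argument on a well-founded order: repeatedly applying \Cref{lem:smaller-support-demand} strictly decreases the pair $(|A|, C_A(\mu))$ in lexicographic order or strictly decreases $C_A(\mu)$, and this cannot continue forever since cardinalities are bounded below and, among the finitely many supports, costs take finitely many values at the fixed point $\mu$). In the second case $C_{A'}(\mu) = C_A(\mu) < C(\mu)$ with $|A'| < |A|$, again contradicting the choice of $A$ as a minimum-cardinality counterexample.

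Cleanly, I would phrase it as: take $A$ a counterexample with $|A|$ minimum. Then $\mu\notin P_A$, so \Cref{lem:smaller-support-demand} gives $A'$ with $C_{A'}(\mu) \le C_A(\mu) < C(\mu)$ and $|A'| < |A|$ (in the cost-strict case we may still have $|A'|$ not smaller, so instead I would iterate: each application either drops cardinality or keeps cardinality but strictly drops cost, and since there are finitely many supports the process terminates, producing some $A''$ with $\mu\in P_{A''}$ and $C_{A''}(\mu) \le C_A(\mu) < C(\mu)$; but $\mu\in P_{A''}$ forces $C_{A''}(\mu) = C(\mu)$, the contradiction). Thus no counterexample exists and $C(\mu) \le C_A(\mu)$ for all $A$, completing the proof together with the easy direction.

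The only potential subtlety is making the termination of the iterated application of \Cref{lem:smaller-support-demand} rigorous; I expect this to be the main (though minor) obstacle. It is handled by observing that $\mathcal{A}$ is finite, so the sequence of supports produced has strictly decreasing lexicographic key $(|A|, C_A(\mu))$ and therefore must reach a support $A''$ to which \Cref{lem:smaller-support-demand} no longer applies, i.e., with $\mu \in P_{A''}$, forcing $C_{A''}(\mu) = C(\mu)$ and contradicting $C_{A''}(\mu) < C(\mu)$.
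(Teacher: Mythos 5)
Your overall strategy is the same as the paper's: combine \Cref{lem:smaller-support-demand} with an extremal (or, in your variant, iterative) argument, and use that $\mu\in P_{A}$ forces $C_{A}(\mu)=C(\mu)$ by uniqueness of the Wardrop equilibrium. However, your bookkeeping has the tie-breaking order inverted, and as written two steps fail. In the one-shot version you choose a counterexample $A$ with $|A|$ minimal and claim a contradiction in the cost-strict case ``once we note $|A'|$ can be taken $\le |A|$''; but \Cref{lem:smaller-support-demand} gives no control on $|A'|$ when $C_{A'}(\mu)<C_A(\mu)$, so $A'$ may be a counterexample of strictly larger cardinality and your minimum-cardinality choice is not contradicted. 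The same inversion reappears in your termination argument: the key $(|A|,C_A(\mu))$ with cardinality as the primary coordinate need not decrease lexicographically, again because a cost-strict step may increase $|A|$, so the claimed well-founded descent is not justified as stated.

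The fix is to prioritize cost: take $A$ minimizing $C_A(\mu)$ over $\mathcal{A}$ and, among these, minimizing $|A|$; equivalently, in your iteration, use the key $(C_A(\mu),|A|)$, which does strictly decrease lexicographically under every application of \Cref{lem:smaller-support-demand}, so finiteness of $\mathcal{A}$ gives termination at some $A''$ with $\mu\in P_{A''}$ and $C_{A''}(\mu)\le C_A(\mu)$. With cost as the primary criterion a single application of the lemma already contradicts the choice of $A$ whenever $\mu\notin P_A$, which is exactly the paper's (one-shot) proof. Your easy direction, $C(\mu)=C_{A(x^*(\mu))}(\mu)\ge\min_{A\in\mathcal{A}} C_A(\mu)$, and the concluding step that $\mu\in P_{A''}$ forces $C_{A''}(\mu)=C(\mu)$ are both fine; only the descent measure needs to be corrected.
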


\begin{proof}
    Let $\mu\in\Delta(\Theta)$ be arbitrary and let $A\in\mathcal{A}$ be such that $C_A(\mu)\leq C_{A'}(\mu)$ for all $A'\in \mathcal{A}$ and $|A|<|A'|$ for all $A'\in \mathcal{A}$ with $C_A(\mu)= C_{A'}(\mu)$. As shown in \Cref{lem:smaller-support-demand}, if $x_A^*$ violates one of the inequalities in \eqref{eq:wardrop-inequality}, then there is a support $A''\in \mathcal{A}$ with $C_{A''}(\mu) < C_{A}(\mu)$ or $C_{A''}(\mu) = C_A(\mu)$ and $|A''| < |A|$. This contradicts the choice of $A$.
\end{proof}

We are now in position to show the main result of this section.

\begin{theorem}
\label{thrm:sepa-full-info}
    For a single-commodity network $G$ with unknown demands and affine costs, full information revelation is always an optimal signaling scheme if and only if $G$ is series-parallel.
\end{theorem}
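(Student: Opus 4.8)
The plan is to establish the equivalence in two directions, leaning on the structural lemmas already in place. For the forward direction, I would show: \emph{if $G$ is series-parallel, then full information revelation is optimal for all affine costs and all demand distributions}. By \Cref{lem:WE-minimal}, $C(\mu) = \min_{A \in \mathcal{A}} C_A(\mu)$ on all of $\Delta(\Theta)$, and each $C_A$ is affine (\Cref{lem:piecewise-linear}); hence $C$ is a pointwise minimum of affine functions and therefore \emph{concave} on $\Delta(\Theta)$. Now fix any signaling scheme $\phi$, which decomposes the prior as $\prior = \sum_{\sigma} \phi_\sigma \mu_\sigma$ with $\sum_\sigma \phi_\sigma = 1$ and $\phi_\sigma \ge 0$. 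By concavity (Jensen's inequality), $\sum_\sigma \phi_\sigma C(\mu_\sigma) \le C\big(\sum_\sigma \phi_\sigma \mu_\sigma\big) = C(\prior)$. But full information revelation corresponds to the particular decomposition $\prior = \sum_{\theta} \prior_\theta e_\theta$ into the vertices $e_\theta$ of the simplex, and for a degenerate distribution $e_\theta$ the induced Wardrop equilibrium is simply the deterministic-demand equilibrium, so its expected cost equals $\sum_\theta \prior_\theta C(e_\theta)$. It remains to argue that this quantity is the \emph{largest} achievable, i.e.\ that revealing everything maximizes $\sum_\sigma \phi_\sigma C(\mu_\sigma)$; this follows because any $\mu_\sigma$ is itself a convex combination of the $e_\theta$, so iterating Jensen in the other direction (splitting each signal further into full revelation) can only increase the objective. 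Hence full information revelation attains the maximum of a concave-maximization-type bound and is optimal.

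For the converse, I would show: \emph{if $G$ is not series-parallel, then there exist affine costs and a demand distribution for which full information revelation is not optimal}. The standard route is to exploit the Duffin/forbidden-minor characterization: a two-terminal graph is series-parallel if and only if it does not contain the Wheatstone graph $K_4$-minus-an-edge (the ``Braess graph'') as a topological minor between $s$ and $t$. So if $G$ is not series-parallel, it contains an embedded Braess subnetwork. I would then build an explicit instance on that subnetwork — assigning the ``Braess edges'' cost functions that depend on the realized demand in a way that makes the cost function $C(\mu)$ \emph{strictly convex} on some segment of $\Delta(\Theta)$ — and set the cost functions on all remaining edges of $G$ to be prohibitively large (large offsets $b_e$) so that no equilibrium flow ever uses them, reducing the effective network to the Braess gadget. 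On the gadget, strict convexity of $C$ on a segment through a suitably chosen prior means that a nontrivial decomposition of $\prior$ (a partial-pooling signal) yields strictly smaller expected cost than the full decomposition into simplex vertices, so full revelation is suboptimal. This mirrors the corresponding construction in \cite{GriesbachHKK22}, so I would cite that and adapt the demand-dependence to the unknown-demand model (using \Cref{eq:exp-edge-cost} to see which affine slope/offset combinations are realizable).

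I expect the main obstacle to be the converse direction, specifically \emph{realizing strict convexity of $C$ within the restricted cost model of this paper}. In the offset-uncertainty model of \cite{GriesbachHKK22} one has full freedom to choose state-dependent offsets $b_{e,\theta}$, but here the state only enters through the demand $d_\theta$, which multiplies every edge's cost uniformly: from \Cref{eq:exp-edge-cost}, $c_e(x_e \mid \mu) = a_e(\sum_\theta \mu_\theta d_\theta^2)x_e + b_e(\sum_\theta \mu_\theta d_\theta)$, so the ``effective'' affine cost is $(a_e \alpha(\mu))\,x_e + b_e\beta(\mu)$ with scalars $\alpha(\mu) = \sum_\theta \mu_\theta d_\theta^2$ and $\beta(\mu) = \sum_\theta \mu_\theta d_\theta$ shared across all edges. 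This is a strictly more constrained family than arbitrary per-edge uncertain offsets (as the remark after \Cref{lem:piecewise-linear} emphasizes, the reduction is nonlinear and does not go through cleanly). So I would need to verify that the Braess-type instance can still be tuned — by choosing $d_{\theta_1}, d_{\theta_2}$ and the $a_e, b_e$ appropriately — so that the ratio $\alpha(\mu)/\beta(\mu)$ together with $\beta(\mu)$ sweeps across a support-change of the Wardrop equilibrium as $\mu$ varies, producing a genuine breakpoint in $C$ with a convex kink. Concretely this reduces to analyzing the classic three-edge Braess example with demand-parametrized costs and checking that for an intermediate demand the paradoxical (higher-cost) path-configuration is in equilibrium while at the extreme demands it is not; I would carry out that small computation explicitly and then embed it into $G$ via the topological-minor structure, inflating the offsets on all unused edges to force the equilibrium onto the gadget.
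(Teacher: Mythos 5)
Your plan follows the same two-pronged route as the paper (concavity of $C$ on series-parallel graphs via \Cref{lem:piecewise-linear} and \Cref{lem:WE-minimal}, plus an embedded Braess gadget for the converse), but both halves have concrete problems as written. In the forward direction your Jensen argument points the wrong way. The principal \emph{minimizes} $\cost(\phi)=\sum_\sigma \phi_\sigma \cost(\mu_\sigma)$, and since $C$ is concave, splitting a posterior $\mu_\sigma$ into the simplex vertices $\chi_\theta$ can only \emph{decrease} this sum: $C(\mu_\sigma)\ge\sum_{\thetasum}\mu_{\theta,\sigma}C(\chi_\theta)$, hence $\sum_\sigma\phi_\sigma C(\mu_\sigma)\ge\sum_{\thetasum}\prior_\theta C(\chi_\theta)$ for every scheme $\phi$, which is exactly the optimality of full revelation. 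Your claims that full revelation \emph{maximizes} $\sum_\sigma\phi_\sigma C(\mu_\sigma)$ and that refining signals ``can only increase the objective'' are both false for a concave $C$; if they were true they would show full revelation is the \emph{worst} scheme, since the objective is a cost to be minimized. The inequality you do establish, $\sum_\sigma\phi_\sigma C(\mu_\sigma)\le C(\prior)$, only says that any signaling is at least as good as no signaling. Once both directions are flipped, your argument coincides with the paper's.

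In the converse direction your outline matches the paper's (embed a Wheatstone/Braess configuration, suppress all other edges, and exhibit a convex kink of $C$ so that partial pooling beats full revelation), and you correctly identify the crux: here the state enters only through the common scalars $\sum_\theta\mu_\theta d_\theta^2$ and $\sum_\theta\mu_\theta d_\theta$, so the offset-uncertainty constructions of \cite{GriesbachHKK22} do not transfer automatically. But you never produce the witness instance; you defer ``that small computation,'' which is precisely the substantive content of this direction. The paper supplies it in \Cref{exm:not-full-info}: the Braess graph with costs $x$, $\nicefrac12$, $\nicefrac1{20}$, $\nicefrac12$, $x$ and demands $d_{\theta_1}=2/5$, $d_{\theta_2}=1$, where for prior $\prior_{\theta_2}=1/2$ the scheme pooling onto posteriors $0$ and $2/3$ costs $\nicefrac{131}{200}$ versus $\nicefrac{134}{200}$ for full revelation; it then embeds this gadget in an arbitrary non-series-parallel $G$ via Duffin's confluence characterization, assigning cost $0$ to the remaining edges of the two cycles (effectively contracting them) and $\infty$ off the cycles. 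Without such an explicit instance, or at least an existence argument within this restricted cost family, your converse remains a programme rather than a proof; also note that your ``large offsets on all unused edges'' device only works if you additionally give the subdivision edges of the minor negligible cost, as the paper does with its zero-cost edges.
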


\begin{proof}
    First, we show that full-information revelation is an optimal signaling scheme if $G$ is series-parallel.
    To do so, recall that a signaling scheme $\phi$ is a convex decomposition of the prior $\prior$ into distributions $\mu_\sigma\in \Delta(\Theta)$, and $C(\phi)$ is a convex combination of $C(\mu_\sigma)$, i.e.,
    \begin{align*}
        C(\phi)=\sum_{\sigma\in\Sigma}\phi_\sigma C(\mu_\sigma).
    \end{align*}
    Since $C(\mu)$ is the minimum of affine functions in $\mu$ it is in particular concave in $\mu$. Hence, the best convex decomposition of the prior $\prior$ occurs when there is exactly one individual signal $\sigma_\theta$ for each $\theta \in \Theta$, such that $\mu_{\theta,\sigma}=\chi_\theta$, where $\chi_{\theta}$ is the indicator vector for state $\theta$. 
    
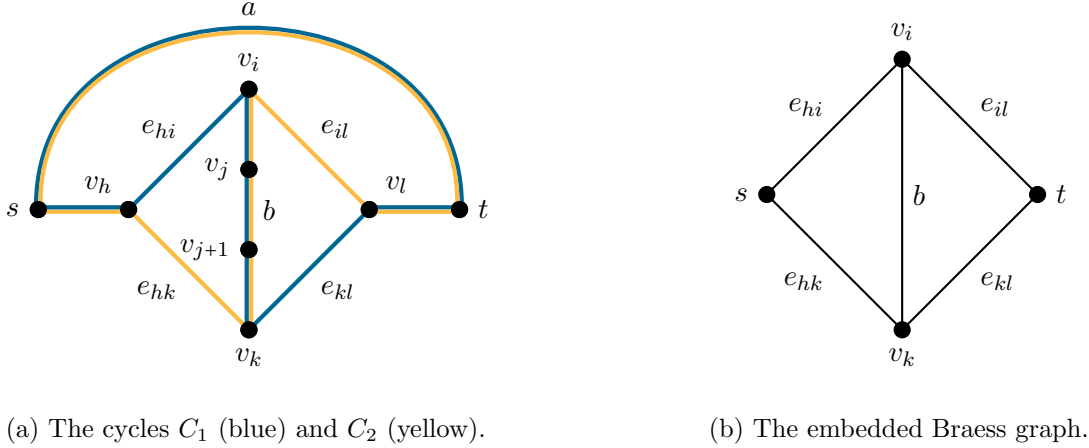
\begin{figure}[tb]
\begin{subfigure}[b]{0.474\textwidth}
\begin{center}
\begin{tikzpicture}[xscale=0.8,yscale=0.8]
    
    \node[state,label=left:{$s$}] (s) at (-1.5,0) [circle] {}; 
    \node[state,label=right:{$t$}] (t) at (5.5,0) [circle] {}; 
    \node[state,label=above:{$v_i$}] (v1) at (2,2) [circle]{}; 
    \node[state,label=below:{$v_k$}] (v2) at (2,-2) [circle]{}; 
    \node[state,label=above left:{$v_h$}] (vh) at (0,0) [circle] {};
    \node[state,label=above right:{$v_l$}] (vl) at (4,0) [circle] {};
    \node[state,label=left:{$v_j$}] (vj) at (2,2/3) [circle]{}; 
    \node[state,label=left:{$v_{j+1}$}] (vj1) at (2,-2/3) [circle]{};

    \path[] (vh) edge[MidnightBlue, ultra thick] node[above left] {\textcolor{black}{$e_{hi}$}} (v1);
    \path[] (vh) edge[Dandelion, ultra thick] node[below left] {\textcolor{black}{$e_{hk}$}} (v2);
    \path[] (v1) edge[Dandelion, ultra thick] node[above right] {\textcolor{black}{$e_{il}$}} (vl);
    \path[] (v2) edge[MidnightBlue, ultra thick] node[below right] {\textcolor{black}{$e_{kl}$}} (vl);
    \path[] (v1.-75) edge[Dandelion, ultra thick] (vj.75);
    \path[] (v1.-105) edge[MidnightBlue, ultra thick] (vj.105);
    \path[] (vj.-75) edge[Dandelion, ultra thick] node[right] {\textcolor{black}{$b$}} (vj1.75);    
    \path[] (vj.-105) edge[MidnightBlue, ultra thick] (vj1.105);
    \path[] (vj1.-105) edge[MidnightBlue, ultra thick] (v2.105);
    \path[] (vj1.-75) edge[Dandelion, ultra thick] (v2.75);
    \path[] (s.-15) edge[Dandelion, ultra thick] (vh.-165);
    \path[] (s.15) edge[MidnightBlue, ultra thick] (vh.165);
    \path[] (vl.15) edge[MidnightBlue, ultra thick] (t.165);	
    \path[] (vl.-15) edge[Dandelion, ultra thick] (t.-165);	
   \draw[MidnightBlue,ultra thick] (s.105) .. controls   (-1.5,4) and (5.5,4) .. node[above,yshift=0mm] {\textcolor{black}{$a$}} (t.75) ;
   \draw[Dandelion, ultra thick] (s.75) .. controls   (-1.5,3.9) and (5.5,3.9) .. (t.105); 

    \node[state] (s) at (-1.5,0) [circle] {}; 
    \node[state] (t) at (5.5,0) [circle] {}; 
    \node[state] (v1) at (2,2) [circle]{}; 
    \node[state] (v2) at (2,-2) [circle]{}; 
    \node[state] (vh) at (0,0) [circle] {};
    \node[state] (vl) at (4,0) [circle] {};
    \node[state] (vj) at (2,2/3) [circle]{}; 
    \node[state] (vj1) at (2,-2/3) [circle]{};

\end{tikzpicture}
\end{center}
\caption{The cycles $C_1$ (blue) and $C_2$ (yellow).}
\label{fig:wheatstone}
\end{subfigure}\hfill
\begin{subfigure}[b]{0.474\textwidth}
\begin{center}
\begin{tikzpicture}[xscale=0.9,yscale=0.9]
    
    \node[state,label=left:{$s$}] (s) at (0,0) [circle] {}; 
    \node[state,label=right:{$t$}] (t) at (4,0) [circle] {}; 
    \node[state,label=above:{$v_i$}] (v1) at (2,2) [circle]{}; 
    \node[state,label=below:{$v_k$}] (v2) at (2,-2) [circle]{}; 
    
    \path[] (s) edge node[above left] {$e_{hi}$} (v1);
    \path[] (s) edge node[below left] {$e_{hk}$} (v2);
    \path[] (v1) edge node[above right] {$e_{il}$} (t);
    \path[] (v2) edge node[below right] {$e_{kl}$} (t);
    \path[] (v1) edge node[right] {$b$} (v2);		
\end{tikzpicture}
\end{center}
\caption{The embedded Braess graph.}
\label{fig:wheatstone-net}
\end{subfigure}
\caption{Illustrations for the proof of Theorem~\ref{thrm:sepa-full-info} }
\end{figure}

    To prove the other direction, it suffices to show that for a non-series-parallel graph, there exist affine cost functions $c:E\rightarrow \mathbb{R}_{\geq0}$ and demands $d_\theta, \theta\in\Theta$ such that full signaling is not optimal.
    For this matter, we first introduce some definitions, which are mostly adopted from Duffin~\cite{Duffin65}.
    We call a graph with two designated vertices $s,t\in V$ a two-terminal graph. Two edges edges $e,e'\in E$ are called \emph{confluent} if there are no two (undirected) simple cycles $C_1$ and $C_2$ both containing $e$ and $e'$ such that the two edges have the same orientation in $C_1$ and different orientations in $C_2$. Further, an edge is \emph{$s$-$t$-confluent} if it is confluent with the (virtually added) edge $(t,s)$. As shown by Duffin, a two-terminal graph $G$ is series-parallel if and only if all edges are $s$-$t$-confluent.
    Let $G=(V,E)$ be a two-terminal graph with source $s\in V$ and sink $t\in V$ such that $G$ is not series-parallel, i.e., $G$ is not $s$-$t$-confluent. Hence, there exist two cycles $C_1$ and $C_2$ which share the (virtually added) edge $a:=(t,s)$ and another edge $b\in E$ such that $a$ is used in the same direction in both cycles but the direction of $b$ changes.
    An illustration is given in \cref{fig:wheatstone}.
    Note that the links $(s,v_h),(v_i,v_j),(v_{j+1},v_k),$ and $(v_l,t)$ consist of a path of an arbitrary amount of edges. In particular, each of these paths can consist of no edge, indicating that the two corresponding  vertices are actually the same.
    On the other hand, the links $(v_h,v_i),(v_h,v_k),(v_i,v_l),$ and $(v_k,v_l)$ consist of paths of at least one edge.
    Therefore, we may choose an arbitrary edge on each of those four links and label them $e_{hi},e_{hk},e_{il}$ and $e_{kl}$, respectively., as shown in \cref{fig:wheatstone}.
    Next, we define the following cost function
    \begin{align*}
        c_e(x)=
        \begin{cases}
            x & \text{if } e\in \{e_{hi},e_{kl}\},\\
            1/2 & \text{if } e\in \{e_{hk},e_{il}\},\\
            1/20 & \text{if } e=b,\\
            \infty & \text{if } e\in E\setminus (E[C_1]\cup E[C_2]),\\
		0 & \text{else}.
        \end{cases}
    \end{align*}
    Contracting all edges of cost $0$ and disregarding all edges of cost $\infty$, we obtain the embedded graph shown in \cref{fig:wheatstone-net}. Together with the above defined cost function, the so constructed graph is just the one from \Cref{exm:not-full-info} for which we have already shown that full information revelation is not an optimal signaling scheme if we set the demands to $d_{\theta_1}=2/5$ and $d_{\theta_2}=1$.
    This shows that for any non-series-parallel graph $G$ we can find cost functions $c:E\rightarrow \mathbb{R}$ and demands $d_\theta, \theta\in \Theta$ such that full signaling is not an optimal signaling scheme.
    This finishes the proof.
\end{proof}

\section{Computing Optimal Schemes}
\label{sec:LPs}
 
We consider the computation of optimal signaling schemes. Towards this end, we first investigate the unique Wardrop equilibria for a fixed set of active edges. We again use the term \emph{support} for a set of active edges. Our approach is generally similar to~\cite{GriesbachHKK22}, but there are notable differences in the analysis to establish the result. Suppose we are given a set of $k$ distinct supports, which we denote by $\support_1,\ldots,\support_k$. Consider the set of signaling schemes $\phi$ with the following properties: $\phi$ sends $k$ signals (where for simplicity we assume $\sigma \in [k] = \{1,\ldots,k\}$), and each signal $\sigma \in [k]$ shall result in a Wardrop equilibrium $x_\sigma$ with support $\support_\sigma$. The main result in this section shows that we can efficiently optimize over this set of signaling schemes. 

\begin{theorem}
    \label{thm:bigLP} 
    Given $k$ distinct support vectors $(\support_{\sigma})_{\sigma \in [k]}$,
    we can compute the best signaling scheme that induces Wardrop equilibria with supports $(\support_{\sigma})_{\sigma \in [k]}$
    in time polynomial in $|\Theta|$, $|E|$ and $k$.  
\end{theorem}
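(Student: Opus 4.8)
The plan is to set up a single linear program whose variables describe, simultaneously, the signaling scheme $\phi$ and, for each signal $\sigma \in [k]$, the induced Wardrop equilibrium flow $x_\sigma$ together with its dual potentials $\pi^\sigma$. Concretely, for each $\sigma$ I would introduce variables $\phi_{\theta,\sigma}\ge 0$ for $\theta\in\Theta$, flow variables $x^\sigma_e\ge 0$ for $e\in E$, and potentials $\pi^\sigma_v$ for $v\in V$. The constraint $\sum_{\sigma}\phi_{\theta,\sigma}=\prior_\theta$ enforces that $\phi$ is a valid signaling scheme. The key point, exactly as in \Cref{lem:piecewise-linear}, is that once we commit to the support $\support_\sigma$ for signal $\sigma$, the Wardrop conditions \eqref{eq:wardrop-equation}--\eqref{eq:wardrop-inequality} become \emph{linear} in $(x^\sigma, \pi^\sigma)$ provided the coefficients $\sum_{\thetasum}\mu_{\theta,\sigma} d_\theta^2$ and $\sum_{\thetasum}\mu_{\theta,\sigma} d_\theta$ are treated as given. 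The difficulty is that these coefficients involve the posterior $\mu_{\theta,\sigma}=\phi_{\theta,\sigma}/\phi_\sigma$, which is a nonlinear (ratio) function of the $\phi$-variables.

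The way around this is the standard change of variables that already appears in the proofs of \Cref{lem:piecewise-linear} and \Cref{lem:monotonicity}: substitute $\tilde y^\sigma_e := \bigl(\sum_{\thetasum}\phi_{\theta,\sigma} d_\theta^2\bigr) x^\sigma_e$ and scale the potentials correspondingly, $\tilde\pi^\sigma_v := \phi_\sigma\,\pi^\sigma_v$ (or work directly with un-normalized quantities throughout). Multiplying the equilibrium equations through by $\phi_\sigma$ clears all denominators, so that \eqref{eq:wardrop-equation} for signal $\sigma$ becomes a system that is \emph{linear} in the unknowns $\phi_{\theta,\sigma}$, $\tilde y^\sigma_e$, $\tilde\pi^\sigma_v$: the term $a_e\bigl(\sum_\theta\phi_{\theta,\sigma}d_\theta^2\bigr)x^\sigma_e$ becomes $a_e\tilde y^\sigma_e$, the term $b_e\sum_\theta\phi_{\theta,\sigma}d_\theta$ is already linear in $\phi$, and the flow-conservation constraints in $y$ carry a right-hand side $\beta_v$ that is itself linear in $\phi$ (namely $\pm\sum_\theta\phi_{\theta,\sigma}d_\theta^2$). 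Likewise the inequalities \eqref{eq:wardrop-inequality-1} for $e\in E\setminus\support_\sigma$ and $\tilde y^\sigma_e\ge 0$ stay linear. Finally, the objective: by the argument at the end of \Cref{lem:piecewise-linear}, the cost of the Wardrop equilibrium for posterior $\mu_\sigma$ equals $d\,\pi^\sigma_t = \pi^\sigma_t$ (since $d=1$), so the principal's objective $\cost(\phi)=\sum_\sigma \phi_\sigma\,\cost(\mu_\sigma)=\sum_\sigma \phi_\sigma\pi^\sigma_t = \sum_\sigma \tilde\pi^\sigma_t$ is \emph{linear} in the un-normalized potential variables. Thus the whole problem is a linear program in $O(k(|\Theta|+|E|+|V|))$ variables and $O(k(|E|+|V|))$ constraints, solvable in polynomial time.

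Two points need care and I would address them explicitly. First, one must argue that a feasible point of this LP genuinely corresponds to a signaling scheme inducing Wardrop equilibria of the prescribed supports: after dividing back by $\phi_\sigma$ (for signals actually issued, $\phi_\sigma>0$), the constraints are precisely the KKT/Wardrop conditions of \Cref{pro:beckmann}, and since the equilibrium is unique (strict convexity), the recovered $x_\sigma$ is \emph{the} Wardrop equilibrium for $\mu_\sigma$ and its support is contained in $\support_\sigma$; conversely any scheme of the desired form yields a feasible LP point. One subtlety is whether the active-edge set is exactly $\support_\sigma$ versus a subset — as in \cite{GriesbachHKK22}, it suffices that the equilibrium is \emph{consistent} with $\support_\sigma$, and we should state the guarantee accordingly (we optimize over all schemes whose signal $\sigma$ induces an equilibrium whose active edges lie in/equal $\support_\sigma$); the value obtained is a lower bound that is attained. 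Second, one must handle signals that are not issued ($\phi_\sigma=0$): there the scaled variables $\tilde y^\sigma,\tilde\pi^\sigma$ collapse to $0$ and contribute nothing to the objective, which is consistent since an unissued signal incurs no cost — I would note that this is exactly why passing to un-normalized variables is harmless. The main obstacle is precisely this normalization/denominator issue — making sure the linearizing substitution is reversible on the support $\{\phi_\sigma>0\}$ and that nothing pathological happens on $\{\phi_\sigma=0\}$ — together with checking that the recovered flow's support matches $\support_\sigma$ in the intended sense; everything else is routine LP bookkeeping and a direct appeal to \Cref{pro:beckmann} and the rank argument of \Cref{lem:piecewise-linear}.
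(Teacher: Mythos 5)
Your proposal is correct and follows essentially the same route as the paper: the same clearing of denominators by multiplying the Wardrop conditions for each signal $\sigma$ by $\phi_\sigma$, the same substitutions $y_{e,\sigma} = \bigl(\sum_{\thetasum}\phi_{\theta,\sigma} d_\theta^2\bigr)x_{e,\sigma}$ and $\tau_{v,\sigma}=\phi_\sigma \pi_{v,\sigma}$, the same linear objective $\sum_\sigma \tau_{t,\sigma}$, and the same treatment of unissued signals via the all-zero solution. The resulting LP is exactly the paper's LP~\eqref{eq:schemeSupportNetwork}, so nothing further is needed.
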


\begin{proof}
    For a single signal $\sigma$ with given support $\support_{\sigma}$, recall from Lemma~\ref{lem:piecewise-linear} the polytope $P_A$ described by~\eqref{eq:balance}-\eqref{eq:wardrop-inequality}. Using the balance vector $(\beta_v)_{v \in V}$ defined in~\eqref{eq:balance}, the system of inequalities reads

   \begin{align}
       \label{eq:wardrop-all}
       \begin{aligned}
            \pi_{v,\sigma} + a_e\left(\sum_{\thetasum} \mu_{\theta,\sigma} d_\theta^2 \right) x_{e,\sigma} + b_e \sum_{\thetasum} \mu_{\theta,\sigma} d_\theta &= \pi_{w,\sigma} && \text{ for all }e \in \support_{\sigma}\\
            \pi_{v,\sigma} + a_e\left(\sum_{\thetasum} \mu_{\theta,\sigma} d_\theta^2 \right) x_{e,\sigma} + b_e \sum_{\thetasum} \mu_{\theta,\sigma} d_\theta &\geq \pi_{w,\sigma}  && \text{ for all } e \in E \setminus \support_{\sigma}\\
            \sum_{e \in \delta^+(v)}  \sum_{\thetasum} \mu_{\theta,\sigma} d_\theta^2 x_{e,\sigma} - \sum_{e \in \delta^-(v)}  \sum_{\thetasum} \mu_{\theta,\sigma} d_\theta^2 x_{e,\sigma} &= \beta_v && \text{ for all } v \in V\\
            x_{e,\sigma} &\geq 0 \quad\phantom{.} && \text{ for all } e \in E\\
            \pi_{s,\sigma} &= 0.\quad&
        \end{aligned}
    \end{align} 

    The conditional distribution $\mu_\sigma$ over states emerges from the signaling probabilities $\phi_{\theta,\sigma}$. It can be captured by the following set of constraints 
	\begin{align}
	\label{eq:condDist}
	\begin{array}{rcll}
		\phi_{\theta,\sigma} &\le& \prior_\theta & \text{ for all } \thetasum, \\
        \phi_{\theta,\sigma} &\ge& 0 &  \text{ for all } \thetasum,\\
		\phi_{\theta,\sigma} &=& \D \phi_\sigma \cdot \mu_{\theta,\sigma} & \text{ for all } \thetasum, \\
        \phi_{\sigma} &=& \D \sum_{\thetasum} \phi_{\theta,\sigma}.
	\end{array}
	\end{align}
	If signal $\sigma$ does not get issued, then $\phi_{\sigma} = 0$, so all $\phi_{\theta,\sigma} = 0$, and the constraints \eqref{eq:wardrop-all} and~\eqref{eq:condDist} are not meaningful. If signal $\sigma$ gets issued, then the polytope of conditional distributions that result in a Wardrop equilibrium on support $\support_{\sigma}$ is described by~\eqref{eq:wardrop-all} and~\eqref{eq:condDist}. We multiply all (in-)equalities of~\eqref{eq:wardrop-all} by $\phi_{\sigma} > 0$. Setting $y_{e,\sigma} = x_{e,\sigma} \cdot \sum_{\thetasum} \phi_{\theta,\sigma} d_\theta^2$ and $\tau_{v,\sigma} = \pi_{v,\sigma} \cdot \phi_{\sigma}$ we obtain the polytope

	\begin{equation}
	\label{eq:signalSupportNetwork}
	\begin{array}{rll}
		\D \tau_{v,\sigma} + a_e y_{e,\sigma} + b_e \sum_{\theta \in \Theta} \phi_{\theta,\sigma} d_\theta & = \tau_{w,\sigma} & \text{ for all } e \in \support_{\sigma} \\
		\D \tau_{v,\sigma} + a_e y_{e,\sigma} + b_e \sum_{\theta \in \Theta} \phi_{\theta,\sigma} d_\theta &  \ge  \tau_{w,\sigma} & \text{ for all } e \in E \setminus \support_{\sigma}\\
		\D \sum_{e \in \delta^+(v)} y_{e,\sigma} - \sum_{e \in \delta^{-}(v)} y_{e,\sigma} & = \beta_v \D \sum_{\thetasum} \phi_{\theta,\sigma} & \text{ for all } v \in V \\
		y_{e,\sigma} &  \ge  0  & \text{ for all } e \in \support_{\sigma} \\
	    \tau_{s,\sigma} &  =  0 \\
		\phi_{\theta,\sigma} &  \le  \prior_\theta  & \text{ for all } \thetasum\\
		\phi_{\theta,\sigma} &  \ge  0  & \text{ for all } \thetasum.
	\end{array}
	\end{equation}

    The scaled balance vector $\beta_v \ \phi_{\sigma}$ on the right-hand side is linear in $\phi_{\theta,\sigma}$, i.e., $\sum_{\thetasum} \phi_{\theta,\sigma} d_\theta^2$ for~$s$, $-\sum_{\thetasum} \phi_{\theta,\sigma} d_\theta^2$ for~$t$, and 0 otherwise. 
    The polytope has a trivial all-zero solution which can be interpreted as the signal not getting issued. Every non-zero solution corresponds to (part of) a signaling scheme $\phi$ that includes a signal $\sigma$ with a resulting Wardrop equilibrium using the given support $\support_{\sigma}$. 
    Finally, all signaling schemes $\phi$ that include such a signal $\sigma$ form a convex polytope in the space of signaling schemes. More formally, we can describe the set of all schemes $\phi$ by combining all polytopes for individual signals $\sigma$ with given support $\support_{\sigma}$ in~\eqref{eq:signalSupportNetwork} and adding the decomposition constraints $\sum_{\sigma \in [k]} \phi_{\theta,\sigma} = \prior_\theta$ for all $\thetasum$.

    We intend to optimize over this polytope of signaling schemes, i.e., we strive to find a scheme with smallest total expected cost $C(\phi)$. The cost can be determined as $
        \cost(\phi) = \sum_{\sigma \in [k]}\phi_\sigma \ \cost(\mu_\sigma) = \sum_{\sigma \in [k]} \phi_{\sigma} \ d \ \pi_{t,\sigma} = \sum_{\sigma \in [k]} \tau_{t,\sigma}.
	$
    As a consequence, finding an optimal signaling scheme for a given set of supports can be formulated as the following linear program. The number of variables and constraints is a polynomial in $|\Theta|$, $|E|$, and $k$. Therefore, the LP can be solved in polynomial time.
	
	\begin{equation}
		\label{eq:schemeSupportNetwork}
		\begin{array}{rll}
			 \multicolumn{2}{l}{\text{Min. } \quad \D \sum_{\sigma \in [k]} \tau_{t,\sigma} } \\
			\text{s.t. } \quad\D \tau_{v,\sigma} + a_e y_{e,\sigma} + \sum_{\thetasum} b_e d_\theta  \phi_{\theta,\sigma} \!\!& = \; \tau_{w,\sigma} & \text{ for all } e \in \support_{\sigma}, \sigma \in [k] \\
			\D \tau_{v,\sigma} + a_e y_{e,\sigma} + \sum_{\theta \in \Theta} b_e d_\theta \phi_{\theta,\sigma}\!\! & \ge  \; \tau_{v,\sigma} & \text{ for all } e \in E \setminus\! \support_{\sigma}, \sigma \in [k]\\
	   	    \D \sum_{e \in \delta^+(v)} y_{e,\sigma} - \sum_{e \in \delta^{-}(v)} y_{e,\sigma} \!\!&= \; \beta_v \D \sum_{\theta \in \thetasum} \phi_{\theta,\sigma} &  \text{ for all } v \in V\\
			y_{e,\sigma} \!\!& \ge \; 0  
            & \text{ for all } e \in \support_{\sigma}, \sigma \in [k] \\
		    \tau_{s,\sigma} \!\!& = \; 0 & \text{ for all } \sigma \in [k] \\
			\D \sum_{\sigma \in [k]} \phi_{\theta,\sigma} \!\!& =  \; \prior_\theta 
            & \text{ for all } \thetasum\\
			\phi_{\theta,\sigma} \!\!& \ge  \; 0  
            & \text{ for all } \thetasum, \sigma \in [k]. 
		\end{array}
	\end{equation}
\end{proof}

This reduces optimizing the signaling scheme to an optimal choice of supports. Suppose for some optimal signaling scheme $\phi^*$ we know (a superset of) all supports $\support_{\sigma}$ used in the Wardrop equilibrium resulting from each signal $\sigma \in \Sigma$ issued in $\phi^*$. 
We inspect the conditions of optimal schemes $\varphi^*$ a bit more closely. Indeed, we need to consider at most $k \le |\Theta|$ signals, and each signal $\sigma$ can be assumed to have a distinct support vector $\support_{\sigma}$.  
 
\begin{proposition}\label{prp:signals}
	There is an optimal signaling scheme $\phi^*$ such that
 at most $|\Theta|$ signals get issued in $\phi^*$, and
  there is no pair of signals $\sigma \neq \sigma'$ that both get issued in $\phi^*$ and $\support_{\sigma} \subseteq \support_{\sigma'}$. In particular, every signal $\sigma$ that gets issued in $\phi^*$ has a distinct support vector $\support_{\sigma}$.
\end{proposition}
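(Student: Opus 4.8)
The plan is to establish the two properties separately, relying on Carathéodory's theorem for the first and on a local exchange argument for the second. For the bound on the number of issued signals, recall that $C$ is concave and piecewise linear on $\Delta(\Theta)$ by \Cref{lem:piecewise-linear} and \Cref{lem:WE-minimal}, and that a signaling scheme $\phi$ corresponds to a convex decomposition of $\prior$ into the posteriors $\mu_\sigma$, weighted by $\phi_\sigma$. The objective $C(\phi)=\sum_{\sigma}\phi_\sigma C(\mu_\sigma)$ is thus the value at $\prior$ of a convex combination of points on the graph of $C$. Minimizing this is a linear optimization over the polytope of such weighted decompositions, so an optimal solution can be taken to be a vertex; equivalently, by Carathéodory's theorem applied to the point $(\prior, C(\phi^*))\in\Delta(\Theta)\times\R$ (which lives in a space of dimension $|\Theta|-1+1=|\Theta|$), we may write it as a convex combination of at most $|\Theta|$ points of the form $(\mu,C(\mu))$. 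Each such point gives one signal, so at most $|\Theta|$ signals get issued. (One can also invoke the standard fact, e.g.\ as cited via \cite{Dughmi19}, that an optimal persuasion scheme with $|\Theta|$ states needs at most $|\Theta|$ signals.)

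For the second property, suppose $\phi^*$ is an optimal scheme with the minimum number of issued signals, and among those, minimizing $\sum_\sigma |\support_\sigma|$. Assume toward a contradiction that two issued signals $\sigma\neq\sigma'$ satisfy $\support_\sigma\subseteq\support_{\sigma'}$. The idea is to merge them into a single signal $\hat\sigma$ with posterior $\hat\mu=(\phi_\sigma\mu_\sigma+\phi_{\sigma'}\mu_{\sigma'})/(\phi_\sigma+\phi_{\sigma'})$ and probability $\phi_{\hat\sigma}=\phi_\sigma+\phi_{\sigma'}$. Since $C$ is concave, $C(\hat\mu)\ge$... wait — concavity gives $C(\hat\mu)\ge \frac{\phi_\sigma C(\mu_\sigma)+\phi_{\sigma'}C(\mu_{\sigma'})}{\phi_\sigma+\phi_{\sigma'}}$, i.e.\ merging can only \emph{increase} cost, which is the wrong direction. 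So instead the argument must go the other way: use the containment $\support_\sigma\subseteq\support_{\sigma'}$ to show that the segment of the graph of $C$ between $\mu_\sigma$ and $\mu_{\sigma'}$ is in fact \emph{affine}, so that merging preserves cost while strictly reducing the number of signals, contradicting minimality. Concretely, the Wardrop equilibrium on support $\support_{\sigma'}$ is described by the linear system \eqref{eq:wardrop-equation} over the polytope $P_{\support_{\sigma'}}$, on which $C=C_{\support_{\sigma'}}$ is affine by \Cref{lem:piecewise-linear}; one shows $\mu_\sigma\in P_{\support_{\sigma'}}$ (or at least that $C_{\support_{\sigma'}}(\mu_\sigma)=C(\mu_\sigma)$) using that the equilibrium flow for $\mu_\sigma$ with support $\support_\sigma\subseteq\support_{\sigma'}$ remains feasible and optimal for the larger support, via \Cref{lem:WE-minimal} and \Cref{lem:smaller-support-demand}. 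Then both posteriors lie on a common affine piece of $C$, so $C(\hat\mu)=\frac{\phi_\sigma C(\mu_\sigma)+\phi_{\sigma'}C(\mu_{\sigma'})}{\phi_\sigma+\phi_{\sigma'}}$, the merge is cost-neutral, and we have contradicted minimality of the signal count.

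The main obstacle is this second step: carefully justifying that $\support_\sigma\subseteq\support_{\sigma'}$ forces $\mu_\sigma$ and $\mu_{\sigma'}$ onto the same linear segment of $C$. The subtlety is that an equilibrium flow with active-edge set $\support_\sigma$ need not have $\support_{\sigma'}$ as its active set even when $\support_\sigma\subseteq\support_{\sigma'}$ — adding more ``active'' edges changes which inequalities in \eqref{eq:wardrop-inequality} are tight. One has to argue that the flow $x^*(\mu_\sigma)$, viewed through the equations \eqref{eq:wardrop-equation} for support $\support_{\sigma'}$, still solves that larger system (the extra edges in $\support_{\sigma'}\setminus\support_\sigma$ carry zero flow, so their KKT equalities are consistent with $x^*(\mu_\sigma)$), hence $\mu_\sigma$ lies in the closure of $P_{\support_{\sigma'}}$ and $C(\mu_\sigma)=C_{\support_{\sigma'}}(\mu_\sigma)$; symmetrically for any convex combination with $\mu_{\sigma'}$. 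This uses exactly the full-rank / uniqueness structure from the proof of \Cref{lem:piecewise-linear}. Once that is in place, the contradiction with minimality is immediate, and the final sentence of the proposition (distinct support vectors) follows because equal supports are a special case of the containment just ruled out.
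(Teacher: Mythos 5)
Your treatment of the first property (at most $|\Theta|$ signals, via Caratheodory) matches the paper. The second property is where your proposal breaks down. The pivot of your argument is the claim that $\support_\sigma \subseteq \support_{\sigma'}$ forces $\mu_\sigma$ into (the closure of) $P_{\support_{\sigma'}}$, justified by saying that the extra edges of $\support_{\sigma'}\setminus\support_\sigma$ carry zero flow at $\mu_\sigma$ and are therefore ``consistent'' with the larger system. This is not so: the system \eqref{eq:wardrop-equation} for support $\support_{\sigma'}$ demands the \emph{equality} $\pi_w-\pi_v=c_e(x_e\mid\mu)$ on every edge of $\support_{\sigma'}$, and zero flow does not make this equality hold. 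Since $\support_\sigma=A(x^*(\mu_\sigma))$ is by definition exactly the set of tight edges at $\mu_\sigma$, every edge of $\support_{\sigma'}\setminus\support_\sigma$ satisfies a strict inequality there, and because the potentials are already pinned down by the equalities on the connected set $\support_\sigma$, the pair $(x^*(\mu_\sigma),\pi)$ cannot solve the larger system; so $\mu_\sigma\notin P_{\support_{\sigma'}}$ and the two posteriors need not lie on a common affine piece of $\cost$. Indeed the intermediate fact you want is false even at optimality: in the paper's own \Cref{ex:firstExample} (\Cref{fig:example-intro}) the optimal scheme is full revelation, its two signals have nested supports (the upper edge alone versus both edges), and $\cost$ has a breakpoint strictly between the two posteriors, so the merge you propose is \emph{not} cost-neutral. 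A further problem is your appeal to concavity of $\cost$ via \Cref{lem:WE-minimal}: that lemma rests on \Cref{lem:smaller-support-demand}, which is proved only for series-parallel graphs, and \Cref{thrm:sepa-full-info} shows concavity genuinely fails beyond that class, whereas the proposition concerns arbitrary networks.

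The paper's proof takes a different, non-geometric route for the second property: starting from an optimal solution of LP~\eqref{eq:schemeSupportNetwork} whose signal set contains the nested pair, it simply adds the two signals' variable blocks $(\phi_{\cdot,\sigma},y_{\cdot,\sigma},\tau_{\cdot,\sigma})$ componentwise into one block; since every equality constraint of the smaller support is satisfied by both blocks and all remaining constraints are inequalities preserved under addition, the combined block is again feasible with the same objective $\sum_\sigma\tau_{t,\sigma}$, so one of the two signals can be dropped. No claim that $\cost$ is affine between $\mu_\sigma$ and $\mu_{\sigma'}$ is made (and, as the example above shows, none is available), so the statement really hinges on the specific LP formulation rather than on the geometry of $\cost$; your ``minimal optimal scheme plus cost-neutral merge'' framing cannot be repaired by the argument you sketch.
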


\begin{proof}
	The first property follows from Caratheodory's theorem applied in the context of signaling~\cite{Dughmi19}. For the second property, consider an optimal scheme $\phi^*$ resulting from an optimal solution of LP~\eqref{eq:schemeSupportNetwork}. Suppose $\phi^*$ issues two signals $\sigma, \sigma'$ with the subset property for all supports. Then we can ``drop'' signal $\sigma$, i.e., in a new scheme $\phi'$ issues $\sigma'$ whenever we send $\sigma$ or $\sigma'$ in $\phi$. Then $\phi'_{\theta,\sigma} = 0$ and $\phi'_{\theta,\sigma'} = \phi_{\theta,\sigma} + \phi_{\theta,\sigma'}$ for every $\thetasum$. Similarly, $y'_{e,\sigma'} = 0$ and $y'_{e,\sigma} = y_{e,\sigma} + y_{e,\sigma'}$, as well as $\tau_{e,\sigma'} = 0$ and $\tau'_{v,\sigma} = \tau_{v,\sigma} + \tau_{v,\sigma'}$, for every $e \in E$, $v \in V$. Then $(\tau,y,\phi)$ is feasible for LP~\eqref{eq:schemeSupportNetwork} with the same objective function value, i.e, $\phi'$ is also an optimal scheme.
\end{proof}

For a given support $\support$ and two states, we can optimize efficiently over the polytope~\eqref{eq:2stateWardrop} to find the largest and smallest value of $\mu_{\theta_2}$ such that the distribution has a Wardrop equilibrium with support $\support$. Similar to~\cite[Proposition 3]{GriesbachHKK22}, this property can be used to compute \emph{all} supports of Wardrop equilibria for all $\mu \in \Theta(\Delta)$.
\begin{proposition}
	\label{prop:polyTwo}
	The set of all supports of Wardrop equilibria for all $\mu \in \Theta(\Delta)$ in games with two states can be computed in output-polynomial time.
\end{proposition}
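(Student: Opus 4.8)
The plan is to exploit that, for two states, $\Delta(\Theta)$ is the single interval $[0,1]$ parametrized by $q := \mu_{\theta_2}$, and to sweep $q$ from $0$ to $1$, uncovering one support at a time. By \Cref{lem:piecewise-linear} the map $q \mapsto A(x^*(q))$ is piecewise constant: the polytopes $P_A$ partition $[0,1]$, each $P_A$ is convex and hence a subinterval, so distinct supports occupy disjoint subintervals. Consequently the number of supports that occur for some $q$ equals the number $N$ of these subintervals, and $N\cdot|E|$ is, up to constants, the size of the output. The goal is therefore to enumerate the subintervals from left to right, spending only time polynomial in the input per subinterval, so that the total running time becomes $N\cdot\mathrm{poly}(\text{input})$, i.e.\ output-polynomial.

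Concretely, I would start at $q=0$ and compute the unique Wardrop equilibrium $x^*(0)$; by \Cref{pro:beckmann} this is the minimizer of a strictly convex quadratic program with affine, polynomially sized data, computable in polynomial time, from which one reads off the support $A := A(x^*(0))$ (the tight edges with respect to the equilibrium potentials) and outputs it. To locate the right endpoint $r$ of the current subinterval $\overline{P_A}$, I would use the invertible linear system from the proof of \Cref{lem:piecewise-linear} (cf.\ \eqref{eq:harville}) to obtain explicit affine functions $q \mapsto x^*(q)$ and $q \mapsto \pi(q)$ valid on $P_A$; then $r$ is the largest $q' \ge q$ such that, for every $q''\in[q,q']$, one has $x^*_e(q'') \ge 0$ for all $e \in A$ and inequality~\eqref{eq:wardrop-inequality-1} holds for all $e \in E\setminus A$. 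Each of these is a univariate linear inequality, so $r$ is obtained as a minimum of finitely many thresholds capped at $1$, in polynomial time. If $r=1$ the sweep terminates; otherwise I would set $q \gets r+\epsilon$ and repeat, recomputing the equilibrium at $r+\epsilon$ to obtain the next support, which is necessarily a support not seen before since each $P_A$ is a single interval. Hence the sweep performs exactly $N$ iterations, each costing $\mathrm{poly}(|V|,|E|)$.

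The numerical point that needs care is the choice of $\epsilon$: it must be small enough that $r+\epsilon$ lands strictly inside the next subinterval so that no support is skipped. Here I would import the quantitative bound from the proof of \Cref{thm:fptas}: every $q\in[0,1]$ at which the equilibrium support changes is an optimal solution of a linear program whose coefficient matrix has entries bounded by $B$ and dimension $\tau\in\Theta(|V|+|E|)$, so by Cramer's rule and Hadamard's inequality it is a rational with denominator at most $B^{*}B^{\tau}\tau^{\tau/2}$, where $B^{*}$ is the denominator of $\prior_{\theta_2}$. Thus two distinct such points differ by at least the reciprocal of the square of this quantity, a number of polynomially many bits, and taking $\epsilon$ equal to half of it suffices: $r+\epsilon$ then lies strictly between two consecutive support-change points and hence in the interior of the next subinterval.

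I expect the main obstacle to be computing the right endpoint of $P_A$ correctly. The tempting shortcut of maximizing $q$ over the polytope~\eqref{eq:2stateWardrop} instantiated with support $A$ overshoots: that polytope remains feasible for every $q$ with $A \subseteq A(x^*(q))$, i.e.\ on $\bigcup_{A'\supseteq A}\overline{P_{A'}}$, which need not even be connected, so its maximum may lie several subintervals to the right. Circumventing this is precisely why one must work with the explicit affine representation of $x^*$ on $P_A$ and detect the first failure of a nonnegativity or slackness condition, rather than with the feasibility polytope alone. The remaining ingredients---uniqueness and piecewise-linearity of the equilibrium, convexity (hence connectedness) of each $P_A$, and the bit bound on breakpoints---are all supplied by \Cref{lem:piecewise-linear} and the analysis in the proof of \Cref{thm:fptas}.
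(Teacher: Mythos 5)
Your proof is correct in its essentials and does give an output-polynomial algorithm, but it takes a genuinely different route from the paper's. The paper (by analogy with Proposition~3 of \cite{GriesbachHKK22}, and as implemented in \cref{ssec:compStudy}) recursively computes the equilibrium at the midpoint of an as-yet-unexplored subinterval of $[0,1]$, reads off its support $\support$, and then solves the LP over~\eqref{eq:2stateWardrop} twice -- once maximizing and once minimizing $\mu_{\theta_2}$ -- to delimit the interval on which the equilibrium is consistent with $\support$, recursing on the uncovered pieces; this costs one equilibrium computation and two LPs per support and needs no explicit step size, hence no Cramer/Hadamard denominator bound at all. You instead invert the linear system of \Cref{lem:piecewise-linear} to obtain $x^*$ and $\pi$ as explicit affine functions of $q$ on the current piece, detect the right endpoint as the first violated inequality, and hop past it by an $\varepsilon$ certified by the bit bound from the proof of \Cref{thm:fptas}; what this buys you is independence from the polytope formulation, at the price of exact-arithmetic bookkeeping that the paper's interval-delimiting LPs provide for free. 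One correction to your closing objection: it is true that~\eqref{eq:2stateWardrop} as printed allows $y_e>0$ on edges outside $\support$, so a literal reading of the max-LP can overshoot; but under the intended reading (flow fixed to zero off the support, exactly as in the system used in \Cref{lem:piecewise-linear}), any feasible $(\mu,y,\pi)$ makes $y$ the unique Wardrop equilibrium, so the projection onto $\mu_{\theta_2}$ is the set of beliefs whose equilibrium routes only on $\support$ with all of $\support$ tight -- the projection of a polytope, hence always a single interval. In particular your claim that the feasible set ``need not even be connected'' is false, and the LP max/min shortcut is sound (up to degeneracies such as tight-but-unused edges), so working with the explicit affine representation is a legitimate alternative rather than a necessity. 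Finally, note that both your sweep and the paper's recursion only evaluate equilibria in the interiors of the pieces, so a support attained solely at an isolated breakpoint would be missed by either method; since the paper offers no explicit proof of \cref{prop:polyTwo} and this degeneracy is irrelevant for the downstream use in \Cref{thm:bigLP} and \Cref{prp:signals}, I do not count it against you.
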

When the distributions in $\Delta(\Theta)$ generate at most a polynomial number of different supports in the resulting Wardrop equilibrium, we can compute these supports and, hence, even an optimal signaling scheme in polynomial time. However, there also exist games, in which an exponential number of supports can arise. The instances are nested Braess graphs and emerge as a straightforward adaptation of the constructions in~\cite{KlimmW22,GriesbachHKK22}.
\begin{corollary}
\label{coro:nestedBraess}
	For every number $n \in \NN$, there is a single-commodity game with two states, $O(n)$ vertices, $O(n)$ edges, and $O(n)$ source-target paths, in which $\smash{2^{\Theta(n)}}$ different supports arise in the Wardrop equilibria for all $\mu \in \Delta(\Theta)$.
\end{corollary}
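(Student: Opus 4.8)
The plan is to reduce the two-state belief to a one-dimensional deterministic demand and then invoke a known deterministic worst-case instance. Recall from the proof of \Cref{lem:monotonicity} that, writing a two-state belief as $\mu = (1-p,\,p)$ with $p = \mu_{\theta_2}$, the substitutions $y_e = \bigl(\sum_{\thetasum}\mu_\theta d_\theta^2\bigr) x_e$ and $z_e = y_e / \bigl(\sum_{\thetasum}\mu_\theta d_\theta\bigr)$ turn the Wardrop conditions for the costs $c_e(\cdot \mid \mu)$ into the Wardrop conditions of the \emph{same} graph with purely affine costs $a_e x + b_e$ and deterministic demand
\[
\hat d(p) \;=\; \frac{\sum_{\thetasum}\mu_\theta d_\theta^2}{\sum_{\thetasum}\mu_\theta d_\theta} \;=\; \frac{(1-p)\,d_{\theta_1}^2 + p}{(1-p)\,d_{\theta_1} + p};
\]
indeed $c_e(x_e\mid\mu) = \bigl(\sum_{\thetasum}\mu_\theta d_\theta\bigr)(a_e z_e + b_e)$ and $z_e = \hat d(p)\,x_e$, so both the edge costs and the edge loads are rescaled by common positive factors, which changes neither the positive-flow support nor the set of active edges. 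Hence $A(x^*(\mu)) \in \mathcal{A}$ equals the active-edge support of the affine-cost Wardrop equilibrium at demand $\hat d(p)$. By the computation in the proof of \Cref{cor:2stateMonotone}, $\hat d$ is continuous and strictly increasing on $[0,1]$ with $\hat d(0) = d_{\theta_1}$ and $\hat d(1) = d_{\theta_2} = 1$, hence a bijection from $[0,1]$ onto $[d_{\theta_1},1]$. Therefore $\{A(x^*(\mu)) : \mu \in \Delta(\Theta)\}$ is exactly the set of active-edge supports that occur in the deterministic affine-cost game on the same graph as the demand sweeps over $[d_{\theta_1},1]$.

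It thus suffices to construct, for every $n$, a single-commodity affine-cost network with $O(n)$ vertices, $O(n)$ edges, and $O(n)$ directed $s$-$t$-paths in which $2^{\Theta(n)}$ pairwise distinct active-edge supports arise as the deterministic demand ranges over some interval $[d^-,1]$ with $d^- > 0$; setting $d_{\theta_1} := d^-$ and $d_{\theta_2} := 1$ yields the claimed two-state instance, and the matching upper bound $2^{O(n)}$ is immediate since every support is a subset of $E$. Such a network is provided by the nested Braess construction of Klimm and Warode~\cite{KlimmW22}, which is also used in~\cite{GriesbachHKK22}: it chains $\Theta(n)$ constant-size Braess gadgets in a way that is deliberately arranged \emph{not} to multiply the number of $s$-$t$-paths, so all three size parameters stay linear in $n$, while each gadget toggles its ``shortcut'' edge between zero and positive flow at its own demand threshold, and as the demand crosses these thresholds the active-edge support runs through a sequence of $2^{\Theta(n)}$ distinct configurations that behaves like the states of a binary counter. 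Since the extremal supports stabilize for very small and very large demand, all breakpoints lie in a bounded interval; rescaling the affine coefficients so that the largest threshold equals $1$ and choosing $d^-$ strictly below the smallest threshold puts every threshold into $[d^-,1]$ with $d^- > 0$, so the whole sequence of supports is realized there. Combining this with the first paragraph proves the corollary.

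The conceptual step is the virtual-demand reduction; the bulk of the remaining work is to transcribe the construction of~\cite{KlimmW22,GriesbachHKK22} into the present notation and re-verify its guarantees. I expect the main obstacle to be matching the ``support'' count used there — stated in terms of the breakpoints (points of non-differentiability) of the parametric Wardrop-equilibrium flow — with the active-edge supports $A \in \mathcal{A}$ of \Cref{lem:piecewise-linear}. This matching works because the deterministic analog of \Cref{lem:piecewise-linear} shows that the set of demands with a fixed active support $A$ is an interval: it is the projection onto the demand coordinate of the polytope cut out by~\eqref{eq:wardrop-equation}--\eqref{eq:wardrop-inequality}, whose defining system is linear in $(d, x, \pi)$ once the balance vector is written as a linear function of $d$. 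Consequently consecutive linear pieces carry distinct supports and no support recurs, so the number of distinct active supports over $[d^-,1]$ is exactly one plus the number of breakpoints there, namely $2^{\Theta(n)}$. The one remaining routine check — requiring a glance at the explicit parameters of~\cite{KlimmW22} — is that the construction's $O(n)$ path bound and its bounded-away-from-zero thresholds survive the adaptation, so that the choice $d_{\theta_1} = d^- > 0$ is legitimate.
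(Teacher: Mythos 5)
Your proposal is correct and follows essentially the same route as the paper: reduce the two-state belief to a deterministic ``virtual'' demand $\hat d(\mu)=\bigl(\sum_{\thetasum}\mu_\theta d_\theta^2\bigr)/\bigl(\sum_{\thetasum}\mu_\theta d_\theta\bigr)$ sweeping an interval $[d_{\theta_1},1]$ (using the substitution from \Cref{lem:monotonicity} and the monotonicity computation of \Cref{cor:2stateMonotone}), and then invoke the nested Braess construction of \cite{KlimmW22,GriesbachHKK22}, which yields $2^{\Theta(n)}$ distinct supports over a demand interval while keeping vertices, edges, and $s$-$t$-paths linear in $n$; your rescaling so that all breakpoints land in $[d_{\theta_1},1]$ with $d_{\theta_1}>0$ plays the same role as the paper's minor adaptation of that instance.
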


\subsection{Computational Studies}
	\label{ssec:compStudy}

{
    \begin{table}[t]
    \begin{center}
    \begin{tabular}{lcccc}\toprule
	Network & $|V|$ & $|E|$    & $|Z|$ & $d_{\theta_2}$\\\midrule
	Sioux Falls    (SF) & \phantom{2}24 & \phantom{2}76 & 24  & 360,600  \\
	Eastern Massachusetts (EM)  & \phantom{2}74  & 258  & 74 & \phantom{3}65,576 \\
	Berlin-Friedrichshain (BF) & 224 & 523  & 23  & \phantom{3}11,205 \\
	Berlin-Pr.-Berg-Center (BP)  & 352  & 749 & 38 & \phantom{3}16,660 \\
	Berlin-Tiergarten (BT) & 361 & 766  & 26  & \phantom{3}10,755  \\
	Berlin-Mitte-Center (BM) & 398 & 871 & 36 & \phantom{3}11,482 \\\bottomrule
    \end{tabular}
    \caption{Network instances  in the computational studies.}
     \label{tab:cs_networks}   
    \end{center}
    \end{table}
 }
    In the face of Corollary~\ref{coro:nestedBraess} the goal of our study was to investigate i) if instances of our model on realistic networks generate a small number of different supports in the Wardrop equilibrium, and ii) by how much public signaling can improve the total cost in these networks. We considered non-atomic network congestion games with affine costs and uncertain demand on real-world networks for a single commodity and two possible states of nature $\Theta = \lbrace \theta_1, \theta_2 \rbrace$. Table~\ref{tab:cs_networks} shows the six different networks we examined. The network data was obtained from the 
    \citet{CSData22}. The data set includes a model for each network, i.e., it specifies nodes $V$ and links $E$ which correspond to crossings and roads in the real world, respectively. It also defines a partition of the nodes into \emph{zones} $Z$. The size of the networks ranges from smaller ones (SF, EM) to larger ones (BF, BP, BT, BM). The first two are frequently considered in the traffic assignment literature; the latter were used, e.g., by \citet{Jahn05}. 
    
    In addition, the data set provides experimental data on traffic-related properties for each link $e \in E$, such as the capacity $C_e$ and the free-flow travel time $t_e$ (i.e., the time needed to traverse the link in the absence of congestion), and on representative demands between pairs of zones. Originally, the data set is designed for computational studies on the traffic assignment problem with multiple commodities and link cost functions $c^\text{BPR}_e(x)$ as defined in the congestion model of the \citet{BPR64},
	$
    	\smash{c^\text{BPR}_e(x) = t_e  \bigl( 1 + \eta  \bigl( x / C_e \bigr)^\beta \bigr)}.
	$
    Here, $\beta = 4$ and $\eta$ are dimensionless parameters ($\eta = 0.15$ for SF and EM, $\eta = 1$ else). For our model, we defined the coefficients in the cost function $c_e (x) = a_e  x + b_e$ as
	$a_e = \eta t_e  / C_e$ and $b_e = t_e$.  These cost functions correspond to a linear variant of $c^\text{BPR}_e$ (for $\beta = 1$).
    We set the demand in our single-commodity scenario $d_{\theta_2}$ equal to the total demand that is routed through the network for the multi-commodity scenario in the original data (see Table~\ref{tab:cs_networks}). The alternative demand $d_{\theta_1}$ was defined relative to $d_{\theta_2}$, i.e., $d_{\theta_1} = \rho \cdot d_{\theta_2}$ for some $\rho \in [0,1]$. In the following, we show results for $\rho = 0.2$. We performed 40 simulations for each network with varying $(s,t)$-pairs. For each simulation, the $(s,t)$-pair was drawn uniformly at random from the set of zones such that $s \not = t$ and no pair was chosen more than once. Thus, each simulation is given one network and one $(s,t)$-pair. We call such a tuple an \emph{instance}.

    The sets of all supports of $\cost(\mu_{\theta_2})$ over $\mu_{\theta_2} \in [0,1]$ were computed by implementing the approach from Proposition~\ref{prop:polyTwo}, i.e., by recursively computing the support of the emerging Wardrop equilibrium at a mean value for $\mu_{\theta_2}$ (initially $\mu_{\theta_2} = 1/2$), and then solving LP~\eqref{eq:2stateWardrop} twice - once with the objective of maximizing $\mu_{\theta_2}$ and once with the objective of minimizing $\mu_{\theta_2}$. We used the built-in solver of the SciPy package (v1.8.1) \cite{2020SciPy-NMeth}. The flow assignments were computed by an implementation of the conjugate Frank-Wolfe algorithm \cite{FrankWolfe56, Daneva03} in Python (v3.10.6) based on the code of~\citet{CSBasisCode22}. Experiments were performed on an Intel Core i5 based computer at 3.47 GHz with 8 GB RAM operating on Ubuntu 22.04.1 LTS. More information on used libraries and parameters is provided in Appendix~\ref{app:cs}.

        
    \begin{table}[t]
    \begin{center}
     \begin{tabular}{lccccc}\toprule
	  \multirow{2}{*}{Net.}  &  \multicolumn{3}{c}{\rule[2pt]{30pt}{0.2pt}$\;|\mathcal{A}_i|$\;\rule[2pt]{30pt}{0.2pt}}   &   \multicolumn{2}{c}{\rule[2pt]{30pt}{0.2pt}$\;\cost\;$\rule[2pt]{30pt}{0.2pt}}        \\ 
   & AV & SD & MAX & conc.\ $[\%]$ & lin.\ $[\%]$\\
   \midrule
	SF & 4.67 &	2.08 &	\phantom{1}9 &	80 &	10 \\
    EM & 5.15 &	3.14 &	12 &	70 &	\phantom{1}8 \\
    BF & 5.28 &	2.76 &	12 &	68 &	10 \\
    BP & 4.90 &	1.85 &	11 &	88 &	\phantom{1}3 \\
    BT & 5.10 &	2.54 &	11 &	78 &	\phantom{1}8 \\
    BM & 5.15 &	2.38 &	11 &	75 &	\phantom{1}3\\
    \bottomrule
     \end{tabular}
     \caption{Results for the set of all supports $\mathcal{A}_i$ and the concavity and linearity of  $\cost(\mu_{\theta_2})$ for $\mu_{\theta_2} \in [0,1]$ averaged over 40 instances for each network instance.}
    \label{tab:cs_supports}
    \end{center}
    \end{table}

    For each network with instances $i=1, \ldots, 40$, let $\mathcal{A}_i$ be the set of all (distinct) supports of $\cost(\mu_{\theta_2})$. Table~\ref{tab:cs_supports} shows averaged results on the properties of $\mathcal{A}_i$.
    We point out that both the average (AV) and the maximum (MAX) number of used supports turn out to be very small compared to the number of edges in each network, even though the relative difference between $d_{\theta_1}$ and $d_{\theta_2}$ is rather large.
    In fact, these quantities decrease even more for larger values of $\rho$ (for $\rho = 0.8$, the maximum number of used supports ranges from three to five across all instances).
    Moreover, the averaged standard deviation (SD) is small as well. Therefore, these findings imply that computing the optimal signaling scheme in realistic network instances can be done efficiently by solving our approach. The share of instances where $\cost(\mu_{\theta_2})$ is linear is mainly caused by adjacent sources $s$ and targets $t$. 
    The share of concave cost functions reported in Table~\ref{tab:cs_supports} excludes the purely linear cost functions.
    \cref{fig:berlinFH} shows an example for the different supports used in equilibrium for the prior $\prior_{\theta_2}=1/2$.

     \begin{figure}[t]
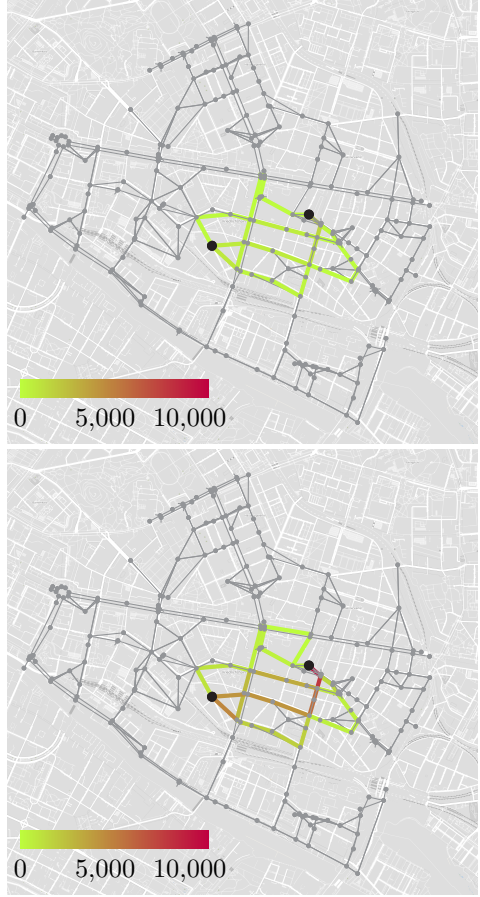

     \centering
    
\tikzstyle{state}=
[circle,draw=Gray,thick,fill=Gray,minimum size=1pt,inner sep=0mm]



     \caption{The map shows the road network of Berlin-Friedrichshain (BF) with the start and destination vertices in black. The used supports are colored from green (only little traffic) to red (gridlocked). If the smaller demand (top) is realized, the optimal signaling scheme always sends the same signal. However, if the large demand (bottom) is realized, it mixes between two different signals. This shifting of the arising equilibria causes a decrease in the overall cost compared to no-signaling or full information revelation.}
     \label{fig:berlinFH}
 \end{figure}

{
    \begin{table}[t]
    \begin{center}
    \begin{tabular}{lccccc}\toprule
	\begin{tabular}{@{}c@{}} Net. \end{tabular} & \begin{tabular}{@{}c@{}} FI is \\ opt.~$[\%]$  \end{tabular} & \begin{tabular}{@{}c@{}} $C$(FI) \\\hline $C$(OPT)  \end{tabular} & \begin{tabular}{@{}c@{}} $C$(NO) \\\hline $C$(OPT)  \end{tabular} & \begin{tabular}{@{}c@{}} $C$(OPT) \\\hline $C$(PSO)  \end{tabular} & \begin{tabular}{@{}c@{}} $C$(WE) \\\hline $C$(PSO)  \end{tabular}\\\midrule
	SF & 100 &	1.0000 &	1.0064 &	1.0135 &	1.0200 \\
EM & 100 &	1.0000 &	1.0052 &	1.0101 &	1.0154 \\
BF & \phantom{1}98 &	1.0000 &	1.0049 &	1.0106 &	1.0156 \\
BP & 100 &	1.0000 &	1.0042 &	1.0091 &	1.0134 \\
BT & 100 &	1.0000 &	1.0051 &	1.0117 &	1.0169 \\
BM & 100 &	1.0000 &	1.0045 &	1.0108 &	1.0154 \\
 \bottomrule
    \end{tabular}
    \caption{Performance of full information revelation (FI), no-signaling (NO), and the optimal signaling scheme (OPT) averaged over 40 instances for each network with $\prior_{\theta_2}=0.5$. The cost of the optimal signaling scheme and the Wardrop equilibrium (WE) are compared to the pointwise social optimum (PSO) defined as $(1-\prior_{\theta_2})  \mathrm{SO}(\mu_{\theta_2} = 0) + \prior_{\theta_2}  \mathrm{SO}(\mu_{\theta_2} = 1)$.}
    \label{tab:cs_costs}
    \end{center}
    \end{table}
}
    For the second part of our study, we analyzed the performance of full information revelation, no-signaling, and the optimal signaling scheme, as shown in Table~\ref{tab:cs_costs}. The results are rounded to four decimal places due to numerical precision. Recall that the cost of the Wardrop equilibrium corresponds to the cost of no-signaling. One can see that in most cases full information revelation is optimal. Moreover, even if it is not optimal, it only produces marginal extra costs compared to the optimal signaling scheme (which are not captured within the numerical precision here). 
    
    
    On another note, the study reveals that using optimal signaling schemes results in slight but consistent improvements over no-signaling.  However, even with optimal information design there remains a notable gap to the average cost of a pointwise social optimal flow.
     As a last remark, Tables~\ref{tab:cs_supports} and \ref{tab:cs_costs} suggest that the optimality criterion of full information revelation goes beyond the resulting Wardrop equilibrium being concave, as used for our characterization in \cref{sec:sepa}, since all networks are not series-parallel.

\appendix
 
\section*{Appendix}

\section{Alternative Interpretation of the Model}
\label{app:alternative}

Each state $\theta$ is drawn with probability $\prior_{\theta}$. Then each agent is active independently with probability $d_{\theta}$. Conditioned on being active, the agents receive a signal $\sigma$, and they update their belief about the demand based on being active and getting signal $\sigma$. This induces a conditional posterior of $\mu_{\sigma}$. An active agent then faces a route selection problem and optimizes by considering the cost of each $s$-$t$-path $P$, which is given by
\begin{align*}
    \sum_{e \in P} c(x_e \mid \mu_\sigma) &= \sum_{e \in P} \sum_{\theta\in\Theta} \mu_{\theta,\sigma}  c_e(d_\theta  x_{e,\sigma})\\
    &= \sum_{e \in P} \sum_{\theta\in\Theta} \mu_{\theta,\sigma} (a_e d_\theta x_{e,\sigma} + b_e)\\
    &= \sum_{e \in P} a_e \left( \sum_{\theta\in\Theta} \mu_{\theta,\sigma} d_\theta\right) x_{e,\sigma} + b_e.
\end{align*}
Here, $x_{e,\sigma}$ corresponds to the amount of flow that routes via edge $e$ when signal $\sigma$ was issued.
Since agents are infinitesimally small and lack information about activity status of other agents in the system, no agent can condition the route choice on the realized demand $\theta$. Given signal $\sigma$, we assume every active agent solves the equilibrium problem using the delays $c(x_e \mid \mu_\sigma)$ and then chooses each path $P$ randomly with probability $x_{P,\sigma}$. Then, for each state $\theta$, the path $P$ gets chosen by a mass of $d_\theta  x_{P,\sigma}$ agents (almost surely). As such, each edge generates a delay of $c_e(d_\theta  x_{e,\sigma})$, exactly as anticipated by all agents. The result is a symmetric Bayesian mixed equilibrium.

Consider a signaling scheme $\phi$, where we define $\phi_{\theta,\sigma}$ as the combined probability that state~$\theta$ is realized and the sender sends signal $\sigma$. For active agents, the scheme $\phi$ induces the posterior 
\[
\mu_{\theta,\sigma} = \frac{\phi_{\theta,\sigma}d_\theta}{\sum_{\theta'\in\Theta} \phi_{\theta',\sigma} d_{\theta'}}
\]
for each signal $\sigma$ with $\sum_{\theta\in\Theta} \phi_{\theta,\sigma} > 0$. The expected cost of any path $P$ for an active agent given signal $\sigma$ is thus given by
\begin{align*}
    &\sum_{e \in P} a_e \left( \sum_{\theta\in\Theta} \mu_{\theta,\sigma} d_\theta\right) x_{e,\sigma} + b_e \\
    &= \sum_{e \in P} a_e \left( \sum_{\theta\in\Theta} \frac{\phi_{\theta,\sigma}d_\theta}{\sum_{\theta'\in\Theta} \phi_{\theta',\sigma} d_{\theta'} } d_\theta\right) x_{e,\sigma} + b_e\\
    &= \frac{1}{\sum_{\theta\in\Theta} \phi_{\theta,\sigma} d_{\theta}}  \sum_{e \in P} a_e \left(\sum_{\theta\in\Theta} \phi_{\theta,\sigma} d_\theta^2\right) x_{e,\sigma} + b_e \left(\sum_{\theta\in\Theta} \phi_{\theta,\sigma} d_{\theta} \right)
\end{align*}
When comparing different routes, the cost of each route involves a uniform scaling factor of $1/\sum_{\theta\in\Theta} \phi_{\theta,\sigma} d_\theta > 0$. As such, the preferences and the emerging equilibrium flow result equivalently from assuming a delay for $P$ of 
\begin{align*}
    \sum_{e \in P} a_e \left(\sum_{\theta\in\Theta} \phi_{\theta,\sigma} d_\theta^2\right) x_{e,\sigma} + b_e \left(\sum_{\theta\in\Theta} \phi_{\theta,\sigma} d_{\theta} \right)
\end{align*}

It is straightforward to verify that the overall cost in the system reads
\begin{align*}
    &\sum_{\theta\in\Theta} \sum_{\sigma\in\Sigma} \phi_{\theta,\sigma}  d_{\theta}  \left(\sum_{e \in P^\sigma} a_e d_\theta x_{e,\sigma} + b_e \right) = \sum_{\sigma\in\Sigma} \sum_{e \in P^\sigma} a_e \left(\sum_{\theta\in\Theta} \phi_{\theta,\sigma} d_\theta^2\right) x_{e,\sigma} + b_e \left( \sum_{\theta\in\Theta} \phi_{\theta,\sigma} d_\theta\right)
\end{align*}
where $P^\sigma$ is a path that carries flow in the equilibrium resulting from signal $\sigma$. 

The model variant is fully equivalent to the variant considered in the paper.

\section{Second Example}
\label{app:prelim}

\begin{figure}[t]
\begin{center}
\begin{subfigure}[b]{0.4\textwidth}
\begin{center}
\begin{tikzpicture}[xscale=2.75,yscale=3]
\tikzstyle{node}=[circle, fill=gray, inner sep=0pt, minimum size=3pt];
\begin{scope}
\node[state] (v1) at (0,0.8) {};
\node[state,label=right:{$t$}] (t1) at (0.5,0.4) {};
\node[state,label=left:{$s$}] (s1) at (-0.5,0.4) {};
\node[state] (w1) at (0,0) {};

\draw[-{Stealth[right]},MidnightBlue,ultra thick] (s1.15) to (v1.-145);
\draw[-{Stealth[left]},Dandelion,ultra thick] (s1.45) to node[above left] {\textcolor{black}{$x$}} (v1.-175);
\draw[-{Stealth},Dandelion,ultra thick] (v1) to node[above right] {\textcolor{black}{$\nicefrac{1}{2}$}} (t1); 
\draw[-{Stealth[right]},LimeGreen,ultra thick] (w1.15) to node[below right] {\textcolor{black}{$x$}} (t1.-145); 
\draw[-{Stealth[left]},MidnightBlue,ultra thick] (w1.45) to (t1.-175);
\draw[-{Stealth},LimeGreen,ultra thick] (s1) to node[below left] {\textcolor{black}{$\nicefrac{1}{2}$}} (w1); 
\draw[-{Stealth},MidnightBlue,ultra thick] (v1) to node[right] {\textcolor{black}{$\nicefrac{1}{20}$}} (w1);

\node[state] (v1) at (0,0.8) {};
\node[state] (t1) at (0.5,0.4) {};
\node[state] (s1) at (-0.5,0.4) {};
\node[state] (w1) at (0,0) {};

\node[state,White] (heightcheat) at (0,-0.5) {};

\end{scope}
\end{tikzpicture}
\caption{}
\end{center}
\end{subfigure}
\hspace{0.5cm}
\begin{subfigure}[b]{0.4\textwidth}
\begin{center}
\begin{tikzpicture}[xscale=3.25,yscale =3.25, shorten >= 0pt,
    shorten <= 0pt,]
\begin{scope}[xshift=15cm,yshift=-1.5cm]

\draw[Red,thick] (0,17/50) -- (2/3,19/25-0.005);
\draw[Red,thick] (2/3,19/25-0.005) -- (1,1-0.005);

\draw[axis,->] (0,0) to (0,1.1) node[above] {$\cost(\mu)$}; 
\draw[axis,->] (0,0) to (1.05,0) node[right] {$\mu_{\theta_2}$};
				
\draw[axis] (1,0.05) -- (1,-0.05) node[below right] {$1$};
\draw[axis] (2/3,0.05) -- (2/3,-0.05) node[below] {$\nicefrac{2}{3}$};
\draw[axis] (2/57,0.05) -- (2/57,-0.05) node[below] {$\nicefrac{2}{57}$};
\draw[axis] (1/2,0.05) -- (1/2,-0.05) node[below] {$\prior_{\theta_2}$}; 

\draw[axis] (0,0) -- (-0.025,0) node[left] {$0$};
\draw[axis] (0.025,1) -- (-0.025,1) node[left] {$1$};
\draw[axis] (0.025,19/25) -- (-0.025,19/25) node[left] {$\nicefrac{19}{25}$};
\draw[axis] (0.025,2/5) -- (-0.025,2/5) node[above left] {$\nicefrac{2}{5}$};
\draw[axis] (0.025,17/50) -- (-0.025,17/50) node[below left] {$\nicefrac{17}{50}$};

\draw[thick] (0,17/50) -- (2/57,2/5);
\draw[thick] (2/57,2/5) -- (2/3,19/25);
\draw[thick] (2/3,19/25) -- (1,1);

\draw[ultra thick, Dandelion] (2/57,-0.3)--(1,-0.3);
\draw[ultra thick, LimeGreen] (2/57,-0.35)--(1,-0.35);
\draw[ultra thick, MidnightBlue] (0,-0.4)--(2/3,-0.4); 

\end{scope}
\end{tikzpicture}
\end{center}
\caption{}
\end{subfigure}
\end{center}
\caption{Optimal signaling in Example~\ref{exm:not-full-info}: 
(a) Instance with three $s$-$t$-paths. They are referred to as the upper (yellow), lower (green), and zig-zag (blue) path. 
(b) Cost of the Wardrop equilibrium as function of the conditional distribution parameterized by $\mu_{\theta_2}$ (black), and convex lower envelope of the function (red).
The colored lines underneath indicate which paths are used in the equilibrium for given parameter $\mu_{\theta_2}$.
}
\label{fig:example-not-monotone}
\label{fig:signaling-scheme}
\end{figure}
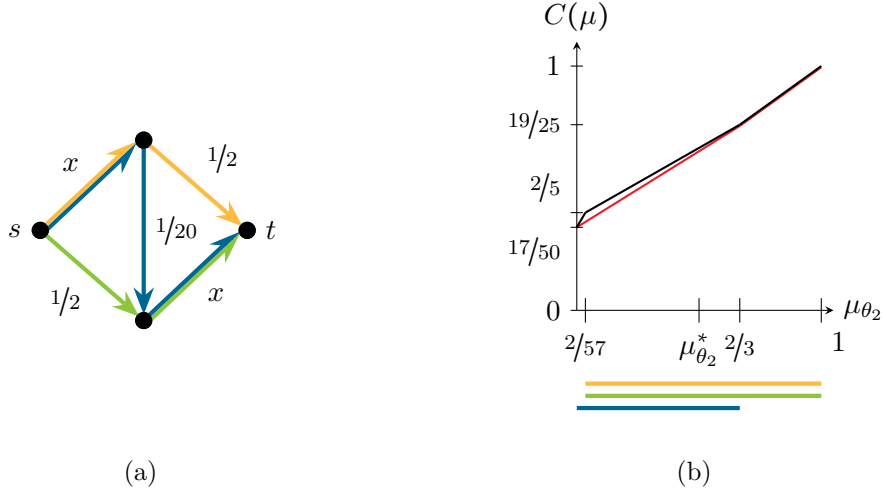

\begin{example}
\label{exm:not-full-info}
We consider a network with four vertices and five edges, see \cref{fig:example-not-monotone}. There are two states $\theta_1$ and $\theta_2$. The demand has a volume of $d_{\theta_1}=2/5$ and $d_{\theta_2}=1$. The network has three $s$-$t$-paths which are referred to as the upper, lower, and zig-zag path, respectively. 

We first analyze the equilibrium for all (conditional) distributions $\mu_{\theta_2}=1-\mu_{\theta_1}\in [0,1]$. For $\mu_{\theta_2}\in [0,2/57]$ the equilibrium uses only the zig-zag path; for $\mu_{\theta_2}\in [2/57,2/3]$ all three paths are used simultaneously; for $\mu_{\theta_2}\in [2/3,1]$ the equilibrium only uses the upper and lower paths.
The resulting total expected cost is shown in black in \cref{fig:example-not-monotone} and given by
\begin{align*}
    C(\mu_{\theta_2})=\
    \begin{cases}
        \tfrac{17}{50}-\tfrac{171}{100}\mu_{\theta_2},
         &\text{for } \mu_{\theta_2}\in \left[\text{\makebox[\widthof{$\tfrac{2}{3}$}]{$0$}},\tfrac{2}{57}\right], \\
        \tfrac{19}{50}-\tfrac{57}{100}\mu_{\theta_2},
        &\text{for } \mu_{\theta_2}\in \left[\tfrac{2}{57},\tfrac{2}{3}\right], \\
        \tfrac{7}{25}-\text{\makebox[\widthof{$\tfrac{171}{100}$}][r]{$\tfrac{18}{25}$}}\mu_{\theta_2},
        &\text{for } \mu_{\theta_2}\in \left[\text{\makebox[\widthof{$\tfrac{2}{57}$}]{$\tfrac{2}{3}$}},\text{\makebox[\widthof{$\tfrac{2}{3}$}]{$1$}}\right]. \\
    \end{cases}
\end{align*}

For instances with two states, there is an elegant intuitive interpretation of the optimal signaling scheme $\phi^*$, which we exploit in more detail in \cref{sec:FPTAS} below. The signaling scheme $\phi^*$ decomposes $\prior$ into a convex combination of conditional distributions, one for each signal. Consider the (black) cost function of the equilibrium in \cref{fig:example-not-monotone}. Suppose we are given a prior $\prior_{\theta_2}=1/2$. An optimal scheme uses two signals with conditional distributions $\mu_l\leq \prior_{\theta_2} \leq \mu_r$. The value of their convex combination is given exactly by the value $\ell_{lr}(\prior_{\theta_2})$, where $\ell_{lr}$ is the affine function that intersects $C(\mu_l)$ and $C(\mu_r)$. Hence, to find $\phi^*$ we want $\mu_l$ and $\mu_r$ such that the value $\ell_{lr}(\prior_{\theta_2})$ is minimal.

In this example, the optimal choices are $\mu_l=0$ and $\mu_r=2/3$. Hence, using two signals $\sigma_1$ and $\sigma_2$, we need to decompose $\prior$ such that the signals have conditional distributions $\mu_{\theta_2,\sigma_1} = \mu_l$ and $\mu_{\theta_2,\sigma_2} = \mu_r$. This implies that $\smash{\phi^*_{\theta_2,\sigma_2} = 1/2}$ and $\smash{\phi^*_{\theta_1,\sigma_1} = \phi^*_{\theta_1,\sigma_2} = 1/4}$, i.e., in $\theta_2$ we always signal $\sigma_2$, in $\theta_1$ we flip a fair coin deciding which signal to send. The induced cost is $C(\phi^*) = \tfrac{131}{200}$. 

In contrast, revealing full information, i.e., deterministically sending a different signal in each state, yields a cost of $\tfrac{134}{200}$, whereas no signal (or always signaling the same signal in all states) yields a cost of $\tfrac{133}{200}$. Hence, $\phi^*$ constructed above is strictly better than both revealing full or no information. Indeed, for all choices of $\prior_{\theta_2}\in[0,2/3]$, the optimal signaling scheme uses signals with $\mu_l$ and $\mu_r$. If $\prior_{\theta_2}\in[2/3,1]$, then it is optimal to send no signal.
Inspecting the induced cost for all possible priors $\prior$, the cost of an optimal signaling scheme as a function of $\prior_{\theta_2}$ corresponds to the convex lower envelope of the function $C(\mu)$, shown in red in \cref{fig:example-not-monotone}.
%
%
%
%
\hfill $\blacksquare$
\end{example}

\section{Proof of \Cref{lem:smaller-support-demand}}
\label{app:full-info}

\begin{proof}
    We prove the statement by induction on $|E|$.

    For $|E|=1$ the graph has only a single edge $e=\{s,t\}$. The only support $A\in \mathcal{A}$ is $A=E$. Clearly, $P_E=\Delta(\Theta)$ and the statement holds trivially since there is no $\mu \in \Delta(\Theta)\setminus P_E$.

    Fix $k\in \mathbb{N}$ and suppose that the statement holds for all series-parallel graphs with up to $k$ edges. Consider a series-parallel graph $G=(V,E)$ with $k+1$ edges. Since $G$ is series-parallel, there is a series of serial and parallel compositions of smaller series-parallel graphs that ends in $G$. In particular, $G$ is constructed either by a final serial composition of two smaller series-parallel graphs $G_1=(V_1,E_1)$ and $G_2=(V_2,E_2)$, or by a final parallel composition of $G_1$ and $G_2$. We proceed to distinguish these two cases.

    \paragraph{First case: Final composition is serial.}
    Let $A\in \mathcal{A}$ with $\mu \in \Delta(\Theta)\setminus P_A$ be given. This implies that either an inequality of type \eqref{eq:wardrop-inequality-1} or an inequality of type \eqref{eq:wardrop-inequality-2} is violated, i.e., either there is an edge $e\in E$ with $(x_A^*)_e<0$ or there is an edge $e=\{v,w\} \in E$ with $c_e((x_A^*)_e\mid\mu)<\pi_w-\pi_v$.

    It is without loss of generality to assume that $e\in E_1$. Let $A_1=A\cap E_1$ and $A_2=A\cap E_2$. For $j\in \{1,2\}$, we denote by $\mathcal{A}_j$ the set of supports for $G_j$ such that the corresponding subgraph is connected. We have $A_j\in \mathcal{A}_j$ for all $j\in \{1,2\}$. 

    The graph $G_1=(V_1,E_1)$ has at most $k$ edges, so we can apply the induction hypothesis on $G_1$ and obtain another support $A_1'\in \mathcal{A}_1$ such that $\smash{C_{A_1'}^1(\mu) < C_{A_1}^1(\mu)}$ where for support $T\in \mathcal{A}_j$ the function $C_T^{j}(\cdot)$ refers to the cost of the solution to the linear system \eqref{eq:wardrop-equation} with support $T$ for $G_j$. Since $G_1$ and $G_2$ were composed in series we have for any support $T\in \mathcal{A}$ that
    \begin{align*}
        C_T(\mu)= C_{T\cap E_1}^{1}(\mu)+C_{T\cap E_2}^{2}(\mu).
    \end{align*}
    So, defining $A'=A_1'\cup A_2\in \mathcal{A}$ we obtain
    \begin{align*}
        C_{A'}(\mu)\!=\!C_{A_1'}^{1}(\mu)+C_{A_2}^{2}(\mu)<C_{A_1}^{1}(\mu)+C_{A_2}^{2}(\mu)\!=\!C_{A}(\mu),
    \end{align*}
    as required.

    \paragraph{Second case: Final compositions is parallel.}
    Let again $A\in \mathcal{A}$ with $\mu \in \Delta(\Theta)\setminus P_A$ be given. Again, this implies that either there is an edge $e\in E$ with $(x_A^*)_e<0$ or there is an edge $e=\{v,w\} \in E$ with $c_e((x_A^*)_e\mid\mu)<\pi_w-\pi_v$. 
    It is without loss of generality to assume that $e\in E_1$.
    For $j\in \{1,2\}$, let $\lambda_j=\sum_{e\in \delta^+(s)\cap E_j}(x_A^*)_e-\sum_{e\in \delta^-(s)\cap E_j}(x_A^*)_e$ be the total flow in $G$ send over the parallel component $G_j$. We have $\lambda_1+\lambda_2=1$.

    Let us first assume that $\lambda_1<0$. This implies $A_1\in \mathcal{A}_1$, since there is a non-zero flow in that component, and, thus, $t$ can be reached from $s$.
    Let $A_1'=A_1\setminus (\delta^+(t)\cap E_1)$. Since $\lambda_1<0$, we have $\lambda_2>1$ and, thus, there is a path from $s$ to $t$ in $A_2$ and in particular $A'=A_1'\cup A_2\in \mathcal{A}$.

    In the following, we write $\lambda_j'$ and $\pi_v'$ for the values of $\lambda_j$ and $\pi_v$ for the new support $A'$. 
    Then we have $\lambda_1'=0$ and, hence, $\lambda_2'=1$. As shown by Klimm and Warode~\cite[Corollary~4]{KlimmW22}, the per-unit cost $\pi_t$ of an equilibrium flow (i.e., a flow that satisfies the linear system \eqref{eq:wardrop-equation} for a fixed support) is strictly increasing in $\lambda$.
    This follows from the fact that the inverse of a Laplace matrix is positive. Hence, $\pi_t'<\pi_t$. Since $C_A(\mu)=\pi_td$, the result follows.

    Next, let us assume that $\lambda_1=0$. In that case, a potential issue is that $A_1$ may not be contained in $\mathcal{A}_1$ since there may not be a path from $s$ to $t$ in $A_1$. As a consequence, we may not be able to apply the induction hypothesis on $G_1$.

    If the edge $e$ violates the inequality $c_e((x_A^*)_e\mid\mu)\geq \pi_w-\pi_v$, we let $A_1'=A_1\cup \{e\}$ and $A'=A_1'\cup A_2$. Now, $G_1$ contains a path from $s$ to $t$ with cost strictly less than $\pi_t$. Since $\pi_t$ is continuous and increasing in the flow, for the equilibrium flow for $A'$, we have $\lambda_1'>0$ and $\lambda_2'<1$. Hence, $\pi_t'<\pi_t$ and the result follows.

    If the edge $e$ violates the inequality $(x_A^*)_e\geq0$, we let $A_1'=A_1\setminus\{e\}$ and $A'=A_1'\cup A_2$. Since there was no flow in $G_1$ anyway, this has no impact on the cost of the flow which is equal to $\pi_t d$, and we have constructed a support $A'$ with $C_{A'}(\mu)=C_A(\mu)$ and $|A'|<|A|$.

    Finally, let us assume that $\lambda_1>0$. Then, $A_1\in \mathcal{A}_1$ and we can apply the induction hypothesis on $G_1$ and obtain a support $A_1'$ for $G_1$ such that either $C_{A_1'}(\mu)<C_{A_1}(\mu)$ or $C_{A_1'}(\mu)=C_{A_1}(\mu)$ and $|A_1'|<|A_1|$. We let $A'=A_1'\cup A_2$.
    In the former case, the result follows from the monotonicity of the flows since we have $\lambda_1'>\lambda_1$ and, hence, $\lambda_2'<\lambda_2$ and, thus, the per unit cost in $G_2$ decreased.
    In the latter case, we have $|A'|<|A|$ and $C_{A'}(\mu)=C_{A}(\mu)$ and we are done.
\end{proof}

\section{Computational Studies}
\label{app:cs}

Additional specifications of relevant software libraries and frameworks: NumPy (v1.23.1) 
Pandas (v1.4.2) \cite{Pandas20}, NetworkX (v2.8.4) \cite{Networkx08}, and OpenMatrix (v0.3.3) \cite{Openmatrix15}. The $(s,t)$-pairs were drawn at random using the \texttt{random} modul for Python, where for each instance, the seed was given by the current system time. For the results in Tables~\ref{tab:cs_supports} and~\ref{tab:cs_costs}, we run the simulation for 272 times to ensure that finally each instance in the data is unique, i.e., there are no duplication of $(s,t)$-pairs for each network. In this way, the variety of instances taken into account by our study increased, which makes our results more reliable. We chose $\rho = 0.2$ in the end as a trade-off between notable changes in the demand on the one hand, and the running time and stability of the simulations on the other hand. Specifications of additional parameters: accuracy of $\cost(\text{WE})$: $1 \times 10^{-4}$, maximum number of iterations for computation of flow: $15000$, maximum seconds allowed for assignment of flow: $6 \times 10^6$, precision of $\mu_{\theta_2}$: $1 \cdot 10^{-8}$, tolerance for projection in Conjugate Frank-Wolfe algorithm: $1 \cdot 10^{-2}$. 

\newpage
 
\bibliographystyle{abbrvnat}
\bibliography{aaai24}

\end{document}